
\documentclass[envcountsect, 11pt, fleqn]{llncs}

\pagestyle{plain}

\usepackage[left=3cm,right=3cm,top=3.5cm,bottom=3.5cm]{geometry}
\usepackage{cite}
\usepackage{subfigure}
\usepackage{amsmath}
\usepackage{amsfonts}
\usepackage{tikz}
\usepackage{mathdots}
\usetikzlibrary{shapes.misc}
\usepackage[ruled]{algorithm2e}
\usepackage[citecolor=blue, linkcolor=blue, colorlinks=true]{hyperref}
\usepackage{pgfplots}
\usepgfplotslibrary{groupplots}

\newcommand\refTheorem[1]{Theorem~\hyperref[#1]{\ref*{#1}}}
\newcommand\refLemma[1]{Lemma~\hyperref[#1]{\ref*{#1}}}
\newcommand\refCorollary[1]{Corollary~\hyperref[#1]{\ref*{#1}}}
\newcommand\refFigure[1]{Figure~\hyperref[#1]{\ref*{#1}}}
\newcommand\refAlgorithm[1]{Algorithm~\hyperref[#1]{\ref*{#1}}}
\newcommand\refEquation[1]{(\hyperref[#1]{\ref*{#1}})}
\newcommand\refExample[1]{Example~\hyperref[#1]{\ref*{#1}}}
\newcommand\refSection[1]{Section~\hyperref[#1]{\ref*{#1}}}
\newcommand\refDefinition[1]{Definition~\hyperref[#1]{\ref*{#1}}}

\newcommand\expected{\mathbb{E}}

\newcommand\supp{\text{supp}}

\newcommand\strats{\ensuremath{\{1,2\}}}
\newcommand\TV{{\rm TV}}


\newcommand\complexityclass[1]{\textbf{#1}}

\begin{document}
\title{Query Complexity of Approximate Equilibria in Anonymous Games}




\author{Paul W. Goldberg\inst{1} \and Stefano Turchetta\inst{2}}



\institute{University of Oxford \email{paul.goldberg@cs.ox.ac.uk} \and University of Oxford \email{stefano.turchetta@cs.ox.ac.uk}}


\maketitle
\setcounter{footnote}{0}

\begin{abstract}
We study the computation of equilibria of anonymous games,
via algorithms that may proceed via a sequence of adaptive queries to
the game's payoff function, assumed to be unknown initially.
The general topic we consider is \emph{query complexity}, that is, how many
queries are necessary or sufficient to compute an exact or approximate Nash
equilibrium.

We show that exact equilibria cannot be found via query-efficient
algorithms. We also give an example of a 2-strategy, 3-player anonymous
game that does not have any exact Nash equilibrium in rational numbers.
However, more positive query-complexity bounds are attainable if either further
symmetries of the utility functions are assumed or we focus on approximate equilibria.
We investigate four sub-classes of anonymous games previously considered by
\cite{bfh09, dp14}.

Our main result is a new randomized query-efficient
algorithm that finds a $O(n^{-1/4})$-approximate
Nash equilibrium querying $\tilde{O}(n^{3/2})$
payoffs and runs in time $\tilde{O}(n^{3/2})$. This improves on the running time
of pre-existing algorithms for approximate equilibria of anonymous games,
and is the first one to obtain an inverse polynomial approximation in poly-time.
We also show how this can be utilized as an efficient polynomial-time
approximation scheme (PTAS).
Furthermore, we prove that $\Omega(n \log{n})$ payoffs must
be queried in order to find any $\epsilon$-well-supported Nash
equilibrium, even by randomized algorithms.
\end{abstract}

{\bf Keywords:} Algorithms, Computational complexity, Game theory

\section{Preliminaries}

This paper studies {\em anonymous} games,
in which a large number of players $n$ share a constant number of
pure strategies, and the payoff to a player depends on the number
of players who use each strategy, but not their identities.
Due to this property, these games have a polynomial (in $n$) size
representation. \cite{dp14} consider anonymous games and graphical
games to be the two most important classes of concisely-represented
multi-player games. Anonymous games appear frequently in practice, for example
in voting systems, traffic routing, or auction settings.
Although they have polynomial-sized representations, the
representation may still be inconveniently large, making it desirable
to work with algorithms that do not require all the data on a particular game
of interest.

Query complexity is motivated in part by the observation
that a game's entire payoff function may be syntactically cumbersome.
It also leads to new results that distinguish the difficulty of alternative
solution concepts. We assume that an algorithm has black-box access to the
payoff function, via queries that specify an anonymized profile and return 
one or more of the players' payoffs.

\subsection{Anonymous Games}

A $k$-strategy anonymous game is a tuple $(n, k, \{u^i_j\}_{i \in [n],
j \in [k]})$ that consists of $n$ players, $k$ pure
strategies per player, and a utility function $u^i_j : \Pi_{n - 1}^k
\longrightarrow [0, 1]$ for each player $i \in [n]$ and every strategy
$j \in [k]$, whose domain is the set $\Pi_{n - 1}^k := \{(x_1, \dots,
x_k) \in \bbbn_0^k : \sum_{j \in [k]} x_j = n-1\}$ of all possible ways
to partition $n - 1$ players into the $k$ strategies. The number of
payoffs stored by any such game is $n \cdot \left|\Pi_{n - 1}^k\right| = O\left(n^k\right)$, i.e.,
polynomial in $n$ if $k$ is a constant, which we will always assume throughout this paper.
In the special case of $k=2$, we regard $u^i_j$'s input as being the number of players other than $i$ that play strategy 1. The number of payoffs stored by a $2$-strategy game clearly
is then $2n^2$. As indicated by $u^i_j$'s codomain, we make a standard
assumption that all payoffs are normalized into the interval $[0, 1]$.

Let $e_j$ denote the unit vector of length $k$ with 1 at its $j$-th component.
A mixed strategy of player $i$ is represented using a random vector
$\mathcal{X}_i$, which takes value $e_j$ with the probability that $i$ plays $j$.
Letting $p_j^i$ be the probability that player $i$ plays strategy $j$, we then have
$\expected[\mathcal{X}_i] = \left(p^i_1, \dots, p^i_k\right)$.
Let $\mathcal{X}_{-i} := \sum_{\ell \in [n] \setminus \{i\}} \mathcal{X}_\ell$ be the sum of
$n-1$ such random vectors, where the subscript $-i$ denotes all players other than $i$.
The expected utility obtained
by player $i \in [n]$ for playing strategy $j \in [k]$ against $\mathcal{X}_{-i}$ is
\[
\expected[u^i_j(\mathcal{X}_{-i})] :=
\sum_{x \in \Pi_{n - 1}^k} u^i_j(x) \cdot \text{Pr}[\mathcal{X}_{-i} = x].
\]
Let $\mathcal{X} := (\mathcal{X}_i, \mathcal{X}_{-i})$.
If $i$ is playing a mixed strategy $(p^i_1, \dots, p^i_k)$,
her expected payoff simply consists of a weighted average, i.e.,
\[
\expected[u^i(\mathcal{X})] := \sum_{j = 1}^k p_j^i \cdot
\expected[u^i_j(\mathcal{X}_{-i})].
\]

For two-strategy games, the strategy played by player $i$ can be represented
by a Bernoulli random variable $X_i$ indicating whether or not $i$ plays strategy 1.
Hence, a mixed strategy for $i$ simply is the probability $p_i := \expected[X_i]$
that $i$ plays strategy 1. Similarly to the $k$-strategy case, let $X_{-i} := \sum_{\ell \in [n]
\setminus \{i\}} X_\ell$ be the sum of all the random
variables other than $X_i$. The expected utility obtained
by player $i \in [n]$ for playing pure strategy $j \in \strats$ against $X_{-i}$ is
\[
\expected[u^i_j(X_{-i})] := \sum_{x = 0}^{n-1} u^i_j(x) \cdot \Pr[X_{-i} = x].
\]
The probability mass function (p.m.f.) of $\mathcal{X}_{-i}$ and $X_{-i}$ respectively
are a Poisson Multinomial Distribution and a Poisson Binomial
Distribution. Both of them can be computed in polynomial time using dynamic programming (see e.g., \cite{dp14}),
i.e., expected utilities are computable in polynomial time.

\subsection{Exact and Approximate Nash Equilibria.}

With the above notation, we say that $\mathcal{X}_i$ is a
best-response if and only if $\expected[u^i(\mathcal{X})] \geq
\expected[u^i_j(\mathcal{X}_{-i})]$ for all $j \in [k]$. A \emph{Nash
  equilibrium} (NE) requires the players to be best-responding to each
other; therefore, the above best-response condition must hold for
every $i \in [n]$. This can be also viewed as no player having an
incentive to deviate from her strategy. We consider a
relaxation of NE, the notion of an
\emph{$\epsilon$-approximate Nash equilibrium} ($\epsilon$-NE), where
every player's incentive to deviate is at most $\epsilon > 0$. We say
that $(\mathcal{X}_i)_{i \in [n]}$, which represents a mixed-strategy
profile, constitutes an $\epsilon$-NE if for all $i \in [n]$ and all
$j \in [k]$,
\[
\expected[u^i(\mathcal{X})] +
\epsilon \geq \expected[u^i_j(\mathcal{X}_{-i})].
\]
This definition, however, does not prohibit allocating a small amount
of probability to arbitrarily bad strategies. An
\emph{$\epsilon$-approximate well-supported Nash equilibrium}
($\epsilon$-WSNE) addresses this issue by forcing every player to
place a positive amount of probability solely on
$\epsilon$-approximate best-responses, i.e., $(\mathcal{X}_i)_{i \in [n]}$
constitutes an $\epsilon$-WSNE if for all $i \in [n]$, all $j \in [k]$,
and all $\ell \in \supp(\expected[\mathcal{X}_i])$\footnote{Given a vector $v$, $\supp(v)$ is used to denote the support of $v$, i.e., $\supp(v) = \{i : v_i > 0\}$.},
\[
\expected[u^i_\ell(\mathcal{X}_{-i})] + \epsilon \geq \expected[u^i_j(\mathcal{X}_{-i})].
\]
Although an $\epsilon$-WSNE is also an $\epsilon$-NE, the converse need not be true.

\subsubsection{Sub-classes of Anonymous Games}

Motivated by our first negative result, which states that finding an exact NE in an
anonymous game may require querying a number of payoffs that is proportional
to the size of the game, we consider further restrictions of the players' utility function
(from \cite{bfh09, dp14}) with the aim of getting more positive results for these.

An anonymous game is \emph{symmetric} if for all $i, \ell \in [n]$, all $j
\in [k]$, and all $x \in \Pi_{n - 1}^k$, then $u^i_j(x) = u^\ell_j(x)$,
i.e., all players share the same utility function. 

An anonymous game is \emph{self-anonymous} if for all $i \in [n]$, all $j, \ell \in
[k]$, and all $x \in \{y \in \Pi_{n - 1}^k : y_\ell \neq 0\}$,
then $u^i_j(x) = u^i_\ell(x + e_j - e_\ell)$,
i.e., player $i$'s preferences depend on how all the $n$
players are partitioned into the $k$ strategies; therefore, $i$ does not
distinguish herself from the others. 

An anonymous game is \emph{self-symmetric} if it is both symmetric and
self-anonymous.

An anonymous game is \emph{Lipschitz} if every player's utility function
is Lipschitz continuous, in the sense that for all $i \in [n]$, $j \in [k]$, and all $x, y \in \Pi_{n - 1}^k$,
$\left|u^i_j(x) - u^i_j(y)\right| \leq \lambda \left\| x - y \right\|_1$, where $\lambda \geq 0$
is the so-called Lipschitz constant and $\| \cdot \|_1$ denotes the $L_1$ norm.

\subsection{Query-efficiency and Payoff Query Models.}

Our general interest is in polynomial-time algorithms that find
solutions of anonymous games, while checking just a small fraction of
the $O(n^k)$ payoffs of an $n$-player, $k$-strategy game. The basic
kind of query is a {\em single-payoff query} which receives as input a
player $i \in [n]$, a strategy $j \in [k]$, and a partition of $n-1$ players into
the $k$ strategies $x \in \Pi_{n-1}^k$, and it
returns the corresponding payoff $u^i_j(x)$.

The \emph{query complexity} of an algorithm is the expected number of single-payoff
queries that it needs in the worst case. Hence, an algorithm is
query-efficient if its query complexity is $o(n^k)$.

A {\em profile query} (used in \cite{fggs13}) consists of an action profile
$(a_1, \dots, a_n) \in [k]^n$ as input and outputs the payoffs that
{\em every} player $i$ obtains according to that profile.
Clearly, any profile query can be simulated by a sequence of $n$ single-payoff queries.
Finally, an {\em all-players query} consists of a pair $(j, x)$ for $j\in [k]$, $x \in \Pi_{n-1}^k$,
and the response to $(j, x)$ is the vector of all players' utilities $(u^1_j(x), \ldots, u^n_j(x))$.

We consider the cost of a query to be
equal to the number of payoffs it returns; hence, a profile or an all-players
query costs $n$ single-payoff queries. We find that an algorithm being
constrained to utilize profile queries may incur a linear loss in
query-efficiency (cf. \refSection{section:comparison_of_query_models}).
Therefore, we focus on single-payoff and all-players queries,
which better exploit the symmetries of anonymous games.

\subsection{Related Work}

In the last decade, there has been interest in the complexity
of computing approximate Nash equilibria.
A main reason is the \complexityclass{PPAD}-completeness results for
computing an exact NE, for normal-form games \cite{dgp09,cdt09}
(the latter paper extends the hardness also to fully polynomial-time
approximation schemes (FPTAS)),
and recently also for anonymous games with $7$ strategies \cite{cdo14}.
The \complexityclass{FIXP}-completeness results of
\cite{EY10} for multiplayer games show an algebraic obstacle to the
task of writing down a useful description of an exact equilibrium.
On the other hand, there exists a subexponential-time algorithm to
find an $\epsilon$-NE in normal-form games \cite{lmm03},
raising the well-known open question
of the possible existence of a PTAS for these games.

Daskalakis and Papadimitriou proved that anonymous games admit a PTAS and
provided several improvements of its running time over the past few years.
Their first algorithm \cite{dp07}
concerns two-strategy games and is based upon the quantization of
the strategy space into nearby multiples of $\epsilon$.
This result was also extended to the multi-strategy case
\cite{dp08discretized}.
\cite{daskalakis08} subsequently gave an efficient PTAS whose
running time is $\text{poly}(n) \cdot (1 / \epsilon)^{O(1 /
  \epsilon^2)}$, which relies on a better understanding of
the structure of $\epsilon$-equilibria in two-strategy anonymous
games: There exists an $\epsilon$-WSNE where either a
small number of the players -- at most $O(1 / \epsilon^3)$ --
randomize and the others play pure strategies, or whoever randomizes
plays the same mixed strategy.
Furthermore, \cite{dp09}
proved a lower bound on the running time needed by any {\em oblivious}
algorithm, which lets the latter algorithm be essentially optimal. In
the same article, they show that the lower bound can be broken by
utilizing a non-oblivious algorithm, which has the currently best-known
running time for finding an $\epsilon$-equilibrium in
two-strategy anonymous games of $O(\text{poly}(n) \cdot (1/
\epsilon)^{O(\log^2(1 / \epsilon))})$. A complete proof is in \cite{dp13sparse}.
It has been recently shown that also $k$-strategy anonymous games admit
an efficient PTAS \cite{dkt15, dks15}.

In \refSection{section:lipschitz_games} we present a bound for
$\lambda$-Lipschitz games, in which $\lambda$ is a parameter limiting the
rate at which $u^i_j(x)$ changes as $x$ changes.
Any $\lambda$-Lipschitz $k$-strategy anonymous game is guaranteed to have an
$\epsilon$-approximate {\em pure} Nash equilibrium, with $\epsilon =
O(\lambda k)$ \cite{as13,dp14}.
The convergence rate to a Nash equilibrium of best-reply dynamics in the context
of two-strategy Lipschitz anonymous games is studied by \cite{kfh09,bab13}.
Moreover, \cite{bfh09}
showed that finding a pure equilibrium in anonymous games is easy if the number
of strategies is constant w.r.t. the number of players $n$, and hard
as soon as there is a linear dependence.

In the last two years, several researchers obtained
bounds for the query complexity for approximate equilibria in different game
settings, which we briefly survey. \cite{fggs13} presented
the first series of results: they studied bimatrix games,
graphical games, and congestion games on graphs.
Similar to our negative result for exact equilibria of anonymous games, it was shown
that a Nash equilibrium in a bimatrix game with $k$ strategies per player requires
$k^2$ queries, even in zero-sum games. However, more positive results arise if we move to
$\epsilon$-approximate Nash equilibria. Approximate equilibria of bimatrix games
were studied in more detail in \cite{fs13}.

The query complexity of equilibria of $n$-player games
-- a setting where payoff functions are exponentially-large -- was
analyzed in \cite{hn13,babichenko14,gr13,cct15}. \cite{hn13}
showed that exponentially many deterministic queries are required to
find a $\frac{1}{2}$-approximate correlated equilibrium (CE) and that any randomized
algorithm that finds an exact CE needs $2^{\Omega(n)}$ expected cost.
Notice that lower bounds on correlated equilibria automatically
apply to Nash equilibria. \cite{gr13} investigated in more detail the
randomized query complexity of $\epsilon$-CE and of the more demanding
$\epsilon$-well-supported CE. \cite{babichenko14}
proved an exponential-in-$n$ randomized lower
bound for finding an $\epsilon$-WSNE in $n$-player, $k$-strategy games,
for constant $k = 10^4$ and $\epsilon = 10^{-8}$.
Finally, \cite{cct15} showed (for small enough positive $\epsilon$)
an exponential lower bound for $\epsilon$-Nash equilibria of binary-action games.
These exponential lower bounds do not hold in anonymous games,
which can be fully revealed with a polynomial number of queries.

\subsection{Our Results and their Significance}

Query-efficiency serves as a criterion for distinguishing exact
from approximate equilibrium computation.
It applies to games having exponentially-large representations~\cite{hn13},
also for games having poly-sized representations
(e.g. bimatrix games~\cite{fggs13,fs13}).
Here we extend this finding to the important class of anonymous games.

We prove that even in two-strategy anonymous (and also self-anonymous) games,
an exact Nash equilibrium demands querying the payoff function exhaustively,
even with the most powerful query model (\refTheorem{thm:exactNE_lowerbound}).
Alongside this, we provide an example of a three-player, two-strategy
anonymous game whose unique Nash equilibrium needs all players to
randomize with an irrational amount of probability (\refTheorem{thm:irrational_nash}),
answering a question posed in \cite{dp14}.
These results motivate our subsequent focus on sub-classes of anonymous game and
approximate equilibria.

\refSection{section:symmetric_games} focuses on the search for pure exact equilibria,
in subclasses of anonymous games where these are guaranteed to exist.
We give tight bounds for two-strategy symmetric games, also $k$-strategy
self-symmetric games, in the latter case via connecting them with
the query complexity of searching for a local optimum of a function on
a grid graph, allowing us to apply pre-existing results for that problem.

\refSection{section:self-anonymous_games} is concerned with mixed-strategy equilibria of self-anonymous games. Although searching for exact equilibria
requires the utility function to be queried exhaustively, surprisingly, a uniform randomization over all the strategies is always a $O(n^{-1/2})$-WSNE, for games having a constant number of strategies. Therefore, no payoff needs to be queried for this. Moreover, we give a reduction that maps any general two-strategy anonymous game $G$ to a two-strategy self-anonymous game $G'$ such that if an FPTAS for $G'$ exists, then there also exists one for $G$. The possible existence of a FPTAS is the main open algorithmic question in the context of anonymous games, thus we show that the search for an answer to this question can focus on self-anonymous games (in the two-strategy case).

In addition, we exhibit a simple query-efficient algorithm that finds an approximate
pure Nash equilibrium in two-strategy Lipschitz games
(\refAlgorithm{algorithm:approxNELip}; \refTheorem{thm:lipschitz2strategy}),
which is subsequently used by our main algorithm for anonymous games.

Our main result (\refTheorem{thm:approximateNE}) is a new randomized
approximation scheme\footnote{To make \refTheorem{thm:approximateNE} easier
to read, we state it only for the best attainable approximation (i.e.,
$n^{-1/4}$); however, it is possible to set parameters
to get any approximation $\epsilon \geq n^{-1/4}$.
For details, see the proof of \refTheorem{thm:approximateNE}.}
for two-strategy anonymous games that differs
conceptually from previous ones and offers new performance guarantees.
It is query-efficient (using $o(n^2)$ queries)
and has improved computational efficiency.
It is the first PTAS for anonymous games that is polynomial in a setting
where $n$ and $1/\epsilon$ are polynomially related.
In particular, its runtime is polynomial in $n$ in a setting where
$1/\epsilon$ may grow in proportion to $n^{1/4}$ and also has an improved
polynomial dependence on $n$ for all $\epsilon \geq n^{-1/4}$.
In more detail, for any $\epsilon \geq n^{-1/4}$,
the algorithm adaptively finds a $O\left(\epsilon \right)$-NE
with $\tilde{O}\left(\sqrt{n}\right)$
(where we use $\tilde{O}(\cdot)$ to hide polylogarithmic factors)
all-players queries (i.e., $\tilde{O}\left(n^{3/2}\right)$ single payoffs)
and runs in time $\tilde{O}\left(n^{3/2}\right)$.
The best-known algorithm of~\cite{dp14} runs in time $O(\text{poly}(n) \cdot (1/
\epsilon)^{O(\log^2(1 / \epsilon))})$, where $\text{poly}(n) \geq O(n^7)$.

In addition to this, we derive a randomized logarithmic lower bound on the
number of all-players queries needed to find any non-trivial $\epsilon$-WSNE in
two-strategy anonymous games (\refTheorem{thm:log_lowerbound}).

\section{Comparison of Query Models}
\label{section:comparison_of_query_models}

We begin by comparing the profile query model with the
other two in the context of anonymous games. Recall that one
profile or one all-players query costs $n$ single-payoff queries due to returning
$n$ payoffs. We use PR, AP, and SP to denote profile, all-players,
and single-payoff queries, respectively.

\subsection{Simulating Profile Queries}
\label{section:SimulatingProfileQueries}

A PR query can be simulated by a sequence of
$n$ SP queries by requesting for each player the payoff associated
with the strategy profile of the PR query.

A PR query can be simulated using at most $k$ AP queries, as follows.
Let $a = (a_1, \dots, a_n) \in [k]^n$ be a given profile query.
Let $N_j \subseteq [n]$ be the players who play $j$ in $a$.
For all $j \in [k]$ such that $N_j \neq \emptyset$, we query
the payoff for strategy $j$ against the partition $(|N_1|, \dots,
|N_j| - 1, \dots, |N_k|) \in \Pi_{n - 1}^k$. The cost, therefore, increases
at most by a factor of $k$.

\subsection{Simulating Single-payoff Queries}

A sequence of SP queries can obviously be simulated by a sequence of
PR or AP queries having the same length. However, this approach increases
the total cost by a factor of $n$ since we receive the payoffs of every player.
In general, it cannot be simulated by a shorter sequence, for example if every
SP query is based on a different partition in $\Pi_{n - 1}^k$.

\subsection{Simulating All-players Queries}

Clearly, any AP query can be simulated by $n$ SP queries.
We consider whether AP queries can be efficiently
simulated by PR queries, and argue that this is generally not the case.
Clearly, $n$ profile queries suffice to obtain all the information returned by one
AP query, and there exist examples where this upper bound is required.
In section~\ref{sec:qclb}, we present such an example.

However, there are also cases in which an all-player query can be
simulated by a constant number of profile queries. For instance,
suppose we are dealing with a two-strategy anonymous game, and let
$\alpha \in (0, 1)$ be a constant. If an AP query retrieves the payoff for playing
strategy $j \in \strats$ when $\alpha n$ players (in total) are playing strategy
$j$, then $1 / \alpha$ PR queries are enough to get
all the information. Consider the following sequence of PR queries
when $j = 1$. The case $j=2$ is similar.
\[
(\underbrace{1, \dots, 1}_{\alpha n},
\underbrace{2, \dots, 2}_{(1 - \alpha) n}),
(\underbrace{2, \dots, 2}_{\alpha n},
\underbrace{1, \dots, 1}_{\alpha n},
\underbrace{2, \dots, 2}_{(1 - 2\alpha) n}), \dots,
(\underbrace{2, \dots, 2}_{(1 - \alpha) n},
\underbrace{1, \dots, 1}_{\alpha n}).
\]
Clearly, every query lets $\alpha n$ players know their payoffs;
therefore, $1 / \alpha$ PR queries suffice to simulate an
all-players query as specified above.
The above simulation suggests that if an AP query algorithm asks the
payoffs for strategy $j$ subject to the constraint that the fraction of
players using $j$ is bounded away from 0,
then it can be simulated by a PR query-algorithm
whose cost is only increased by a constant factor.

\subsection{A Quadratic Cost Lower Bound in the Profile Model}\label{sec:qclb}

We show that finding an $\epsilon$-WSNE, for any $\epsilon < 1 / 2$,
in a two-strategy anonymous game may require a linear number of
profile queries, i.e., a cost of $n^2$. Due to the fact that a
two-strategy anonymous game incorporates $2n^2$ payoffs, such a lower
bound rules out the possibility of a profile-query efficient algorithm
for well-supported approximate equilibria.

\begin{example}\label{ex:profile_bad}
Let $\mathcal{D}_n$ be the following distribution over two-strategy
$n$-player anonymous games where every player's payoff takes a value
in $\{0, \frac{1}{2}, 1\}$. Let a player $h \in [n]$ be chosen
uniformly at random. Let $h$'s and the other players' ($\ell$
denotes a typical player different from $h$) payoffs be defined as in
\refFigure{fig:profile_lb_game}.
\end{example}

\begin{figure}[h!]
\centering
\subfigure[$h$'s payoff]{
\begin{tabular}{ l | c | c | c | c |}
\multicolumn{1}{c}{$x$}
 &  \multicolumn{1}{c}{$0$}
 & \multicolumn{1}{c}{$\dots$}
 & \multicolumn{1}{c}{$n - 2$}
 & \multicolumn{1}{c}{$n - 1$}\\
\cline{2-5}
$u^h_1(x)$ & $\frac{1}{2}$ & $\dots$ & $\frac{1}{2}$ & $\frac{1}{2}$\\
\cline{2-5}
$u^h_2(x)$ & $0$ & $\dots$ & $0$ & $1$\\
\cline{2-5}
\multicolumn{3}{c}{}
\end{tabular}
}
\hspace{2em}
\subfigure[Other players' payoff]{
\begin{tabular}{ l | c | c | c | c |}
\multicolumn{1}{c}{$x$}
 &  \multicolumn{1}{c}{$0$}
 & \multicolumn{1}{c}{$\dots$}
 & \multicolumn{1}{c}{$n - 2$}
 & \multicolumn{1}{c}{$n - 1$}\\
\cline{2-5}
$u^{\ell}_1(x)$ & $\frac{1}{2}$ & $\dots$ & $\frac{1}{2}$ & $\frac{1}{2}$\\
\cline{2-5}
$u^{\ell}_2(x)$ & $0$ & $\dots$ & $0$ & $0$\\
\cline{2-5}
\multicolumn{3}{c}{}
\end{tabular}
}
\caption{Definition of $\mathcal{D}_n$'s payoffs.
$x$ denotes the number of players who play strategy 1.}
\label{fig:profile_lb_game}
\end{figure}

\begin{remark}
Any randomized profile-query algorithm
needs to make $\Omega(n)$ queries, which cost $\Omega(n^2)$ payoffs,
to find an $\epsilon$-WSNE of $\mathcal{D}_n$, for any
$\epsilon \in [0, \frac{1}{2})$.
\end{remark}

\begin{proof}[Proof sketch]
We use Yao's minimax principle. Clearly, every $\ell$ is always
better off playing $1$. As a result, the hidden player $h$ should play $2$.
This is the unique exact NE and $\epsilon$-WSNE, for $\epsilon < 1/2$.
Evidently, a profile-query algorithm must identify $h$ --- the unique player who plays
$2$ --- by querying vectors of the form $(1, \ldots, 1, 2, 1, \ldots, 1) \in \strats^n$.
If the $2$ does not appear at $h$'s index, the query returns no useful information.
Due to $h$ being chosen uniformly at random, a linear number of profile queries is
required.
\end{proof}

\begin{remark}
The all-players query $(2, n-1) \in \strats \times \{0, \ldots, n-1\}$ suffices to discover a pure
Nash equilibrium of a game coming from $\mathcal{D}_n$.
\end{remark}

\section{Exact Nash Equilibria}
\label{section:exact_nash_equilibria}

We lower-bound the number of single-payoff queries (the
least constrained query model) needed to find an exact NE in an
anonymous game. We exhibit games in
which any algorithm must query most of the payoffs in order
to determine what strategies form a NE. Difficult
games are ones that only possess NE in which $\Omega(n)$ players
must randomize.

\begin{example}
\label{ex:majmin}
Let $G$ be the following two-strategy, $n$-player anonymous game.
Let $n$ be even. Half of the players have a
utility function as shown by the top side~(a) of
\refFigure{fig:majority_minority_payoff}, and the remaining half as at~(b).

\begin{figure}[h!]
\centering
\resizebox{\textwidth}{!}{
\subfigure[Payoff table for ``majority-seeking'' player $i$]{
\begin{tabular}{ r | c | c | c | c | c | c |}
\multicolumn{1}{c}{$x$}
 & \multicolumn{1}{c}{$0$}
 & \multicolumn{1}{c}{$1$}
 & \multicolumn{1}{c}{$2$}
 & \multicolumn{1}{c}{$\dots$}
 & \multicolumn{1}{c}{$n-2$}
 & \multicolumn{1}{c}{$n - 1$}\\
\cline{2-7}
$u^i_1(x)$
 & $\frac{1}{2}-\left(\frac{1}{2}-\frac{1}{2n}\right)$
 & $\frac{1}{2}-\left(\frac{1}{2}-\frac{3}{2n}\right)$
 & $\frac{1}{2}-\left(\frac{1}{2}-\frac{5}{2n}\right)$
 & $\dots$
 & $\frac{1}{2}+\left(\frac{1}{2}-\frac{3}{2n}\right)$
 & $\frac{1}{2}+\left(\frac{1}{2}-\frac{1}{2n}\right)$\\
\cline{2-7}
$u^i_2(x)$
 & $\frac{1}{2}$
 & $\frac{1}{2}$
 & $\frac{1}{2}$
 & $\dots$
 & $\frac{1}{2}$
 & $\frac{1}{2}$\\
\cline{2-7}
\multicolumn{3}{c}{}
\end{tabular}
}}
\hspace{2em}
\resizebox{\textwidth}{!}{
\subfigure[Payoff table for ``minority-seeking'' player $i$]{
\begin{tabular}{ r | c | c | c | c | c | c |}
\multicolumn{1}{c}{$x$}
 & \multicolumn{1}{c}{$0$}
 & \multicolumn{1}{c}{$1$}
 & \multicolumn{1}{c}{$2$}
 & \multicolumn{1}{c}{$\dots$}
 & \multicolumn{1}{c}{$n-2$}
 & \multicolumn{1}{c}{$n - 1$}\\
 \cline{2-7}
$u^i_1(x)$
 & $\frac{1}{2}+\left(\frac{1}{2}-\frac{1}{2n}\right)$
 & $\frac{1}{2}+\left(\frac{1}{2}-\frac{3}{2n}\right)$
 & $\frac{1}{2}+\left(\frac{1}{2}-\frac{5}{2n}\right)$
 & $\dots$
 & $\frac{1}{2}-\left(\frac{1}{2}-\frac{3}{2n}\right)$
 & $\frac{1}{2}-\left(\frac{1}{2}-\frac{1}{2n}\right)$\\
\cline{2-7}
$u^i_2(x)$
 & $\frac{1}{2}$
 & $\frac{1}{2}$
 & $\frac{1}{2}$
 & $\dots$
 & $\frac{1}{2}$
 & $\frac{1}{2}$\\
\cline{2-7}
\multicolumn{3}{c}{}
\end{tabular}
}}
\caption{Majority-minority game $G$'s payoffs.
There are $\frac{n}{2}$ majority-seeking players and $\frac{n}{2}$ minority-seeking
players. $x$ denotes the number of players other than $i$ who play 1.
The key feature of the payoff function is that for a majority-seeking
player, the advantage of playing 2 as opposed to 1 decreases linearly in $x$,
and increases linearly in $x$ for a minority-seeking player.}
\label{fig:majority_minority_payoff}
\end{figure}
\end{example}

The proof of \refTheorem{thm:exactNE_lowerbound}
shows that in {\em any} NE of $G$, at least $n/2$ players must use
mixed strategies. Consequently the distribution of the number of players
using either strategy has support $\geq n/2$, so for a typical player
it is necessary to check $n/2$ of his payoffs.

\begin{theorem}\label{thm:exactNE_lowerbound}
A single-payoff query algorithm may need to query
$\Omega(n^2)$ payoffs in order to find an exact Nash equilibrium of an
$n$-player game, even for two-strategy self-anonymous games.
\end{theorem}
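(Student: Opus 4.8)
The plan is to work with the majority-minority game $G$ from \refExample{ex:majmin} and prove two things: first, that every Nash equilibrium of $G$ forces at least $n/2$ players to randomize, and second, that this structural fact forces any algorithm to query $\Omega(n^2)$ payoffs. I would begin by analysing the pure best-response structure. Fix a mixed profile and let $t$ denote the expected number of players playing strategy $1$. For a majority-seeking player $i$, comparing $u^i_1(x)$ with $u^i_2(x) = 1/2$ shows that playing $1$ is (weakly) better exactly when $x \geq (n-1)/2$ roughly, i.e.\ when at least half the \emph{other} players already play $1$; for a minority-seeking player the inequality is reversed. So each majority-seeker wants to join the larger group and each minority-seeker wants to join the smaller one. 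The key point is that the payoff gap $|u^i_1(x) - u^i_2(x)|$ is a strictly monotone (linear) function of $x$ with slope $1/n$, never zero, so a player is \emph{indifferent} only if the induced distribution of $X_{-i}$ is balanced in a very precise way around the crossover point.

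Next I would argue that no pure profile can be a Nash equilibrium, and more strongly that any equilibrium must have $\Omega(n)$ randomizing players. Suppose fewer than $n/2$ players randomize; then more than $n/2$ players play pure, and among these a strict majority are of one type (say more than $n/4$ are majority-seekers playing pure, or symmetric). One shows a contradiction by counting: if $k$ players play strategy $1$ purely and the rest of the pure players play $2$ purely, then some pure majority-seeker playing (say) $2$ strictly prefers $1$ when $k$ is large, or some pure minority-seeker strictly prefers to switch — because the expected value of $X_{-i}$ for such a player is pinned down to within $O(\sqrt n)$ of $k$ by concentration of the Poisson-binomial part, and the linear payoff gap of magnitude $\Theta(|k - n/2|/n)$ dominates the incentive unless $|k - n/2| = O(\sqrt n)$, which cannot be maintained for all pure players simultaneously given the two opposing player types. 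Carrying this out carefully yields that at least $n/2 - o(n)$ (and with a slightly tighter argument, exactly $\geq n/2$) players must place probability strictly between $0$ and $1$ on strategy $1$.

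From there the query lower bound follows by the argument sketched before the theorem statement. Since $\geq n/2$ players randomize, the distribution of $X_{-i}$ (a Poisson binomial with $\geq n/2 - 1$ non-degenerate summands) has support of size $\geq n/2$, so the equilibrium condition for a typical player $i$ genuinely depends on the values $u^i_j(x)$ for $x$ ranging over an interval of length $\Omega(n)$. To turn this into a formal query lower bound I would use an adversary / Yao-style argument: embed a hidden parameter into the payoff tables (for instance, perturb the payoffs at a single unqueried index $x$ for a single player, within the constant slope budget, so that the perturbed game is still a legitimate two-strategy self-anonymous game with all payoffs in $[0,1]$) so that any algorithm that has not queried $u^i_j(x)$ for that $(i,j,x)$ cannot distinguish the two games, yet their equilibria differ. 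Since a typical player has $\Omega(n)$ relevant indices and there are $n$ players, the algorithm must spend $\Omega(n^2)$ single-payoff queries in the worst case. I also need to check that $G$ can be made \emph{self-anonymous} — this requires verifying the self-anonymity identity $u^i_j(x) = u^i_\ell(x + e_j - e_\ell)$ for the tables in \refFigure{fig:majority_minority_payoff}, or minimally modifying them so it holds while preserving the monotone-gap property; this is a routine but necessary bookkeeping step.

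The main obstacle I anticipate is the second paragraph: pinning down that \emph{at least $n/2$} (not merely $\Omega(n)$) players must randomize, and ruling out "degenerate" equilibria where a few players randomize just enough to make many others indifferent. This needs a careful case analysis on how the randomizing players' probabilities can be arranged so that the Poisson-binomial distribution seen by each pure player sits exactly at its indifference-free side, combined with a parity/counting obstruction coming from having equal numbers of the two opposing types. The concentration estimate for the Poisson binomial and the exact linear form of the payoff gap are the tools that make this go through cleanly, but getting the constant right (rather than just $\Omega(n)$) is the delicate part.
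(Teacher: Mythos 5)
Your overall skeleton (same majority--minority game, show at least $n/2$ players must mix, then an adversary argument over the $\Omega(n)$ relevant payoffs per mixing player) matches the paper, but the central structural step is exactly where your sketch has a genuine gap, and the tools you propose for it are the wrong ones. In this game the payoff gap $u^i_1(x)-u^i_2(x)$ is \emph{affine} in $x$ (slope $1/n$), so by linearity of expectation a player's incentive under any mixed profile is $\sum_{j\neq i}\frac{p_j}{n}-\frac{n-1}{2n}$: it depends only on the expected number of other players using strategy $1$, and no Poisson-binomial concentration enters at all. This gives the clean argument you are missing: if even one majority-seeking player $i'$ mixes, her indifference pins $\sum_{j\neq i'}p_j=\frac{n-1}{2}$ exactly, and then every other majority-seeking player who plays a pure strategy sees an expected count of other $1$-players strictly on the wrong side of $\frac{n-1}{2}$ and strictly prefers the opposite strategy; combining this with the easy remaining cases (all players pure, or only minority-seekers mix) forces \emph{all} $n/2$ majority-seekers to randomize. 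This also dissolves the obstacle you flag about ``degenerate equilibria where a few players randomize just enough to make many others indifferent'' --- once one majority player's indifference fixes the sum of probabilities, no other majority player can even be indifferent while playing pure, let alone best-respond purely. By contrast, your proposed route (``suppose fewer than $n/2$ randomize'' plus concentration and a counting/parity obstruction) is both heavier and, as you yourself admit, does not obviously yield the exact $\geq n/2$ bound; at best it gives $n/2-o(n)$ and leaves the adversarial arrangements of the mixing players' probabilities unhandled.

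A second, smaller issue is the self-anonymity step, which you describe as routine bookkeeping on the tables of \refExample{ex:majmin}. The game $G$ cannot be patched in place: for two strategies, self-anonymity forces $u^i_2(x+1)=u^i_1(x)$ for all $x$, which is incompatible with keeping $u^i_2\equiv\frac{1}{2}$ while $u^i_1$ varies. What is actually required is a reduction that constructs a self-anonymous game $\bar G$ whose differences $\bar u^i_1(x)-\bar u^i_2(x)$ are the original differences rescaled by $\frac{1}{2n}$ (to stay in $[0,1]$), with $\bar u^i_2$ defined by telescoping $\bar u^i_2(x+1):=\bar u^i_1(x)$ from $\bar u^i_2(0)=\frac{1}{2}$; this is precisely \refTheorem{thm:reduction_selfanon} together with \refCorollary{cor:nash_selfanon}, which preserve exact equilibria and hence transfer the lower bound. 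Your final adversary argument (an unqueried payoff of a mixing player could be perturbed so that indifference fails) is the same as the paper's and is fine once the two points above are repaired.
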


\begin{proof}
We prove that in any exact equilibrium of $G$ (Example~\ref{ex:majmin}), at
least $n/2$ players must use mixed strategies.
$G$ is not self-anonymous, but we can then invoke Theorem~\ref{thm:reduction_selfanon},
which reduces the problem of computing equilibria of
anonymous games, to the computation of equilibria of self-anonymous games.

We proceed by arguing that in any Nash equilibrium of $G$, at least $n/2$ players must use
mixed strategies. The result follows, since an equilibrium must then have
support linear in $n$, and for a player who mixes, it becomes
necessary to check, via queries, his payoffs for all these outcomes.
We note in passing that $G$ has an equilibrium in which {\em all} players mix
with equal probabilities $\frac{1}{2}$.

Consider a NE of $G$ and let $p_i$ be the probability that player
$i$ plays 1, in that equilibrium.

Let $P^i_1(s)$ denote the expected payoff of player $i$ for playing strategy 1,
minus the expected payoff of $i$ for playing $2$, in the (possibly mixed) strategy profile
$s$. If, in a pure profile, $x$ other players play strategy $1$, then for a majority
player $i$, $P^i_1(s)=(\frac{x}{n}-\frac{1}{2}+\frac{1}{2n})$.
By linearity of expectations, if in a mixed strategy $s$, $x$ is the
expected number of other players who play 1, then
$P^i_1(s)=(\frac{x}{n}-\frac{1}{2}+\frac{1}{2n})$. Consequently, the incentive for a majority player $i'$ to
play 1 is $\sum_{i\not= i'} \frac{p_i}{n} - \frac{n-1}{2n}$.
Furthermore, the incentive for a minority player $i'$ to play 2 is
$\sum_{i\not= i'} \frac{p_i}{n} - \frac{n-1}{2n}$.

Suppose a majority player $i'$ mixes with probability
$p_{i'}\in(0,1)$.  Notice that $\sum_{i\not= i'} p_i = \frac{n-1}{2}$.
The expected number of users of strategy 1 differs from the expected
number of users of strategy 2 by less than 1. This means that no
majority player may use a pure strategy; if he did, he would have an
incentive to use the opposite strategy.
It follows that all majority players must use mixed strategies.

Suppose a minority player $i'$ who plays a mixed strategy uses $p_{i'} \in (0,1)$.
Suppose, in addition, all majority players play pure strategies.
In that case, as before, any majority player would want to switch.
So in this case, all majority players must mix, as before.

Finally, suppose all players play pure strategies.  If strategies 1
and 2 both have the same number of users, then all majority players
will want to switch.
Alternatively, if, say, strategy 1 is used by more than $n/2$
players, it will be being used by a minority player who will want to
switch. Thus, all majority players must use mixed strategies.

This means that (for each majority player) at least $n/2$ payoffs
need to be known for the computation of expected utilities.
If one of these payoffs is unknown, there remains the possibility that it
could take an alternative value that results in that player no longer
being indifferent between $1$ and $2$, and having an incentive to deviate.
\end{proof}

\subsection{A game whose solution must have irrational numbers}

\cite{dp14} note as an open problem, the question of whether
there is a two-strategy anonymous game whose Nash equilibria
require players to mix with irrational probabilities.
The following example shows that such a game does indeed exist,
even with just 3 players. In the context of the present paper, it is a further
motivation for our focus on approximate rather than exact Nash equilibria.

\begin{example}
\label{ex:irrational}
Consider the following anonymous game represented in normal form in \refFigure{fig:irrational_game_normal_form} and in its anonymous form in \refFigure{fig:irrational_game_anonymous_form}.
We show that in the unique equilibrium, the row, the column, and the matrix
players must randomize respectively with probabilities
\[
p_r = \frac{1}{12}(\sqrt{241} - 7), \ p_c = \frac{1}{16}(\sqrt{241} - 7),
\ p_m = \frac{1}{36}(23 - \sqrt{241}).
\]

\begin{figure}[h!]
\centering
\begin{tabular}{ r | c | c | }
\multicolumn{1}{r}{}
 &  \multicolumn{1}{c}{$1$}
 & \multicolumn{1}{c}{$2$} \\
\cline{2-3}
$1$ \ & $(1, 0, 1)$ & $(1, \frac{1}{2}, 0)$ \\
\cline{2-3}
$2$ \ & $(0, 0, 0)$ & $(\frac{1}{2}, \frac{1}{4}, 0)$ \\
\cline{2-3}
\multicolumn{3}{c}{\qquad$1$}
\end{tabular}
\hspace{2em}
\begin{tabular}{ r | c | c | }
\multicolumn{1}{r}{}
 &  \multicolumn{1}{c}{$1$}
 & \multicolumn{1}{c}{$2$} \\
\cline{2-3}
$1$ \ & $(1, 0, 0)$ & $(0, \frac{1}{4}, \frac{1}{2})$ \\
\cline{2-3}
$2$ \ & $(\frac{1}{2}, 1, \frac{1}{2})$ & $(1, 0, 1)$ \\
\cline{2-3}
\multicolumn{3}{c}{\qquad$2$}
\end{tabular}
\caption{The three-player two-strategy anonymous game in normal
  form. A payoff tuple $(a, b, c)$ represents the row, the
  column, and the matrix players' payoff, respectively.}
\label{fig:irrational_game_normal_form}
\end{figure}
\begin{figure}[h!]
\centering
\subfigure[$r$'s payoff table]{
\begin{tabular}{ r | c | c | c |}
\multicolumn{1}{c}{$x$}
 &  \multicolumn{1}{c}{$0$}
 & \multicolumn{1}{c}{$1$}
 & \multicolumn{1}{c}{$2$}\\
\cline{2-4}
$u^r_1(x)$ & $0$ & $1$ & $1$\\
\cline{2-4}
$u^r_2(x)$ & $1$ & $\frac{1}{2}$ & $0$\\
\cline{2-4}
\multicolumn{3}{c}{}
\end{tabular}
}
\subfigure[$c$'s payoff table]{
\begin{tabular}{ r | c | c | c |}
\multicolumn{1}{c}{$x$}
 &  \multicolumn{1}{c}{$0$}
 & \multicolumn{1}{c}{$1$}
 & \multicolumn{1}{c}{$2$}\\
\cline{2-4}
$u^c_1(x)$ & $1$ & $0$ & $0$\\
\cline{2-4}
$u^c_2(x)$ & $0$ & $\frac{1}{4}$ & $\frac{1}{2}$\\
\cline{2-4}
\multicolumn{3}{c}{}
\end{tabular}
}
\subfigure[$m$'s payoff table]{
\begin{tabular}{ r | c | c | c |}
\multicolumn{1}{c}{$x$}
 &  \multicolumn{1}{c}{$0$}
 & \multicolumn{1}{c}{$1$}
 & \multicolumn{1}{c}{$2$}\\
\cline{2-4}
$u^m_1(x)$ & $0$ & $0$ & $1$\\
\cline{2-4}
$u^m_2(x)$ & $1$ & $\frac{1}{2}$ & $0$\\
\cline{2-4}
\multicolumn{3}{c}{}
\end{tabular}
}
\caption{The three-player two-strategy anonymous game
represented in the anonymous compact form, where $x$ denotes
the number of other players playing strategy $1$.}
\label{fig:irrational_game_anonymous_form}
\end{figure}
\end{example}

\begin{theorem}\label{thm:irrational_nash}
There exists a three-player, two-strategy anonymous game that
has a unique Nash equilibrium where all the players must
randomize with irrational probabilities.
\end{theorem}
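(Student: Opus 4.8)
The plan is to analyse the game of \refExample{ex:irrational} through its anonymous form (\refFigure{fig:irrational_game_anonymous_form}), writing all equilibrium conditions in terms of the three probabilities $p_r,p_c,p_m$ with which the row, column, and matrix player play strategy~$1$. First I would compute, for each player, the expected payoff from strategy~$1$ minus that from strategy~$2$: since each player faces only two opponents these are low-degree polynomials, namely $\tfrac32(p_c+p_m)-p_cp_m-1$ for the row player, $1-\tfrac54(p_r+p_m)+p_rp_m$ for the column player, and $p_rp_c+\tfrac12(p_r+p_c)-1$ for the matrix player. Each expression is strictly monotone in each of its two arguments (the relevant partial derivatives are bounded away from $0$ on $[0,1]^2$), which makes the best-response structure transparent and drives the case analysis below.

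Second, I would rule out every Nash equilibrium that is not fully mixed. For the $8$ pure profiles, evaluating the three differences above at arguments in $\{0,1\}$ shows that in each profile some player strictly gains by switching, so there is no pure equilibrium. If exactly one player randomizes, that player's indifference equation must hold with the other two probabilities in $\{0,1\}$; but none of the three difference polynomials vanishes at any point of $\{0,1\}^2$, so this is impossible. If exactly two players randomize, their two indifference equations — with the third probability fixed in $\{0,1\}$ — become linear and have a unique solution, which in five of the six cases fails to be a valid strategy profile (some coordinate leaves $[0,1]$), and in the remaining case ($p_m=0$, $p_r=\tfrac45$, $p_c=\tfrac23$) leaves the matrix player with difference $\tfrac{4}{15}>0$, i.e.\ a strict incentive to switch. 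Hence every Nash equilibrium has all three players strictly mixing, so all three indifference conditions hold simultaneously.

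Third, I would solve the resulting system
\[
\tfrac32(p_c+p_m)-p_cp_m=1,\qquad 1-\tfrac54(p_r+p_m)+p_rp_m=0,\qquad p_rp_c+\tfrac12(p_r+p_c)=1.
\]
The second and third equations give $p_m=\frac{4-5p_r}{5-4p_r}$ and $p_c=\frac{2-p_r}{1+2p_r}$ (the denominators do not vanish for $p_r\in[0,1]$); substituting these into the first and clearing denominators simplifies, after routine algebra, to $6p_r^2+7p_r-8=0$. Its unique root in $(0,1)$ is $p_r=\tfrac1{12}(\sqrt{241}-7)$, and back-substitution yields $p_c=\tfrac1{16}(\sqrt{241}-7)$ and $p_m=\tfrac1{36}(23-\sqrt{241})$, all of which lie strictly in $(0,1)$; since a fully mixed profile satisfying all indifference conditions is automatically a Nash equilibrium (and existence is in any case guaranteed by Nash's theorem), this is the unique Nash equilibrium. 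Finally, $241$ is prime, hence not a perfect square, so $\sqrt{241}$ is irrational and therefore so are $p_r,p_c,p_m$.

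The step I expect to be the main obstacle is not any single calculation but the bookkeeping of the second paragraph: ruling out partially mixed equilibria requires systematically treating the pure, one-randomizer, and two-randomizer configurations, and in the two-randomizer case one must additionally verify that the remaining pure player is actually best-responding — omitting that check would not suffice to establish uniqueness. Everything in the third paragraph is routine polynomial algebra once the payoff differences and their monotonicity are in hand.
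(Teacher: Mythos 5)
Your proposal is correct and follows essentially the same route as the paper's proof: it uses the same game, rules out all equilibria in which some player plays a pure strategy (including the pivotal case $p_m=0$, $p_r=\tfrac45$, $p_c=\tfrac23$, where the matrix player's incentive to switch is exactly $\tfrac{4}{15}$, as in the paper), and then solves the same indifference system to obtain the unique fully mixed equilibrium with the irrational probabilities involving $\sqrt{241}$. Your enumeration of the pure/one-randomizer/two-randomizer cases is just a more systematic write-up of the paper's case analysis, and the algebra (reduction to $6p_r^2+7p_r-8=0$) is consistent with the paper's stated solution.
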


\begin{proof}
We use the game in \refExample{ex:irrational}, which is represented
in anonymous form in \refFigure{fig:irrational_game_anonymous_form}.
It is easy to check that the game admits no
pure Nash equilibrium. Let $r, c, m$ denote the row,
column, and matrix player, respectively. Further, let $p_i$ denote
the amount of probability that $i \in \{r, c, m\}$ allocates to strategy $1$.
Suppose, for the moment, that the game admits only fully-mixed
equilibria. Then, these can be found by solving the
following system of equations, which results from making
everyone indifferent.
\[
\begin{cases}
\frac{1}{2} \cdot (p_c \cdot (1 - p_m) + p_m \cdot (1 - p_c)) + p_c \cdot p_m = (1 - p_m) \cdot (1 - p_c)\\
(1 - p_r) \cdot (1 - p_m) = \frac{1}{4} \cdot (p_r \cdot (1 - p_m) + p_m \cdot (1 - p_r)) + \frac{1}{2} \cdot p_r \cdot p_m\\
p_r \cdot p_c = (1 - p_r) \cdot (1 - p_c) + \frac{1}{2} \cdot (p_r \cdot (1 - p_c) + p_c \cdot (1 - p_r)),
\end{cases}
\]
which reduces to
\[
\begin{cases}
\frac{3}{2} \cdot (p_c + p_m) - p_c \cdot p_m = 1\\
\frac{5}{4} \cdot (p_r + p_m) - p_r \cdot p_m = 1\\
\frac{1}{2} \cdot (p_r + p_c) + p_r \cdot p_c = 1,
\end{cases}
\]
and whose unique solution in the interval $[0, 1]$ is
\begin{gather*}
p_r = \frac{1}{12}(\sqrt{241} - 7) \approx 0.71, \ p_c = \frac{1}{16}(\sqrt{241} - 7) \approx 0.53,\\
p_m = \frac{1}{36}(23 - \sqrt{241}) \approx 0.21.
\end{gather*}

Now we show that everybody must indeed randomize in order to be in
equilibrium. Suppose we fix player $r$ to play $1$. Given this, it is
easy to see that both $c$ and $m$ must play $2$, making $r$ unhappy
and willing to move to strategy $2$. If we fix $r$ to play $2$, then
$m$ plays $2$ and $c$ plays $1$. However, $r$ would be better off
deviating to strategy $1$. Similar arguments can prove that we cannot
fix $c$ to any of the two strategies nor $m$ to play $1$. The most
interesting case is when we fix $m$ to play $2$. Given this, we must
set $p_r = \frac{4}{5}$ and $p_c = \frac{2}{3}$ in order for $r$ and
$c$ to be in equilibrium between each other. Player $m$'s expected
payoff for playing $1$ is $\frac{2}{3} \cdot \frac{4}{5} =
\frac{8}{15}$, which is larger than what she gets for playing $2$,
i.e., $\left(\frac{1}{3} \cdot \frac{1}{5}\right) + \frac{1}{2} \cdot
\left(\frac{4}{15} + \frac{2}{15}\right) = \frac{4}{15}$. Hence, $m$ cannot
play pure 2, and the game has a unique fully-mixed Nash equilibrium
where all players must randomize with irrational probabilities.
\end{proof}

\section{Pure equilibria of symmetric games}
\label{section:symmetric_games}

The query complexity lower bound of \refSection{section:exact_nash_equilibria}
led us to investigate whether we can find a NE in the class of
symmetric games, using fewer queries. We show that this is indeed the
case in the context of two-strategy symmetric games and $k$-strategy
self-symmetric games, both of which always possess pure Nash
equilibria (PNE). Since in symmetric games every player shares the same utility function, the following results are w.r.t. single-payoff queries.

\begin{proposition}
\label{prop:upperbound_pne_2-strat}
A Nash equilibrium of any $2$-strategy $n$-player symmetric
game can be found with $O(\log{n})$ single-payoff queries.
\end{proposition}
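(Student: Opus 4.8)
The plan is to reduce the search for a Nash equilibrium to a binary search for a sign change of a single auxiliary function. Because the game is symmetric all players share one utility function; write $u_1, u_2 : \{0,1,\ldots,n-1\} \to [0,1]$ for it and set $f(x) := u_1(x) - u_2(x)$. A pure profile of the game is completely described by the number $m \in \{0,1,\ldots,n\}$ of players choosing strategy $1$, and I would first record the following characterization: such a profile is a (pure) Nash equilibrium if and only if $f(m-1) \geq 0$ and $f(m) \leq 0$, where the first inequality is vacuous when $m=0$ and the second is vacuous when $m=n$. Indeed, each of the $m$ players on strategy $1$ sees $m-1$ \emph{other} players on strategy $1$, so earns $u_1(m-1)$ and, crucially, would earn $u_2(m-1)$ after unilaterally switching (the count of \emph{other} $1$-players is unchanged by her own move); no incentive to deviate is exactly $f(m-1)\ge 0$. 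Symmetrically, each of the $n-m$ players on strategy $2$ sees $m$ other $1$-players, and no incentive to deviate is exactly $f(m)\le 0$.

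Next I would observe that such an $m$ always exists, which simultaneously re-proves that two-strategy symmetric games have pure equilibria: if $f(0)\le 0$ take $m=0$; if $f(n-1)\ge 0$ take $m=n$; otherwise $f(0)>0>f(n-1)$, and taking $m = x^\ast+1$, where $x^\ast$ is the largest index with $f(x^\ast)\ge 0$, satisfies both conditions since then $f(x^\ast)\ge 0$ and $f(x^\ast+1)<0$. The algorithm mirrors this argument. It first evaluates $f(0)$ and $f(n-1)$, at a cost of four single-payoff queries, and returns $m=0$ or $m=n$ if the corresponding boundary condition holds. Otherwise it runs the standard bisection: maintain an interval $[a,b]\subseteq\{0,\ldots,n-1\}$ with the invariant $f(a)\ge 0$ and $f(b)<0$ (initially $a=0$, $b=n-1$), query $f$ at the midpoint $c$, and replace $a$ or $b$ by $c$ according to whether $f(c)\ge 0$ or $f(c)<0$, until $b=a+1$; then output $m=b$, which is an equilibrium by the characterization above. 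The invariant is preserved by construction, so the interval halves each round and the procedure terminates after $O(\log n)$ rounds, each costing two single-payoff queries (one for $u_1$, one for $u_2$), giving the claimed $O(\log n)$ total.

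I do not expect a real obstacle here; the points that require care are getting the equilibrium inequalities exactly right — in particular that a deviator's payoff is read at the same argument of the table, and the degenerate cases $m\in\{0,n\}$ — and noting explicitly that bisection locates an adjacent sign-changing pair of an \emph{arbitrary} function $f$, with no monotonicity assumed, since the invariant $f(a)\ge 0 > f(b)$ alone drives the recursion.
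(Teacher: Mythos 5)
Your proposal is correct and takes essentially the same route as the paper: a binary search over the number $m$ of players on strategy 1, driven by the sign of $f(x)=u_1(x)-u_2(x)$, with the equilibrium characterization $u_1(m-1)\geq u_2(m-1)$ and $u_1(m)\leq u_2(m)$. The differences are cosmetic --- you handle the boundary cases $m\in\{0,n\}$ up front and maintain the sign-change invariant explicitly, whereas the paper's Algorithm~SymmetricPNE tests the midpoint for equilibrium in each round and argues inductively that a PNE remains in the search range --- and both give $O(\log n)$ single-payoff queries.
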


\begin{algorithm}[h!]
	\caption{SymmetricPNE}
	\label{algorithm:symm_pne}
	\DontPrintSemicolon
	\SetKwFunction{procedure}{search}
	\KwData{The number of players $n$.}
	\KwResult{The number of players $m$ playing strategy $1$ in a PNE.}
	\Begin{
	  	\KwRet \procedure{$0, n - 1$}\;
 	}
	\SetKwProg{myproc}{Procedure}{}{}
	\myproc{\procedure{$\alpha, \beta$}}{
		$m := \lfloor \frac{\alpha + \beta}{2} \rfloor$\;
		\If{$m = \alpha \vee m = \beta$}{
			\KwRet $m$\;
		}
		Use queries to identify: $u_1(m - 1) $, $u_2(m - 1)$, $u_1(m)$, $u_2(m)$\;
		\If{$u_1(m-1)\geq u_2(m-1)~{\rm and}~u_1(m)\leq u_2(m)$}{
			\KwRet $m$\;
		}
		\eIf{$u_1(m-1)<u_2(m-1)$}{
			$\beta := m$\;
		}{
			$\alpha := m$\;
		}
	  	\KwRet \procedure{$\alpha, \beta$}\;
 	}

\end{algorithm} 

\begin{proof}
\refAlgorithm{algorithm:symm_pne} uses a binary search approach to find a
pure Nash equilibrium (PNE).
We know that a PNE must exist, and we show that the algorithm correctly finds one.
Note that when the
conditions are met for returning an output, we have found a PNE.
We show inductively that if a PNE is in the search space of the $k$-th round and
not yet found, then there is a PNE in the search space of round $k+1$.

The base case is trivial since the search space is $\{0, \dots,n-1\}$.
Suppose that after $k$ recursive calls a PNE is still in the search space but not found
yet. We need to show that there is still a PNE in the search space of
step $k + 1$. Let $\{\alpha_k, \dots, \beta_k\}$ be the search space
at step $k$. Let $m_k := \lfloor \frac{\alpha_k + \beta_k}{2} \rfloor$
as in the algorithm. Since $m_k$ does not lead to an equilibrium, the
algorithm makes a case distinction. Note that, by construction,
$u_1(\alpha_k) \geq u_2(\alpha_k)$ and $u_1(\beta_k - 1) < u_2(\beta_k
- 1)$. In fact, in the case $u_1(m_k-1)<u_2(m_k-1)$, we have that the
search space is $\{\alpha_k, \dots, m_k\}$. Due to the induction
hypothesis, a PNE is located at an $x \in \{\alpha_k, \dots,
\beta_k\}$. Hence we only need to show that a PNE is located at an
$x \leq m_k - 1$. A PNE must be there since otherwise it must always
holds that $u_1(x) < u_2(x)$ for all $x \in \{\alpha_k, \dots, m_k\}$,
contradicting the fact that, by construction, $u_1(\alpha_k) \geq
u_2(\alpha_k)$. The other case is symmetric. If $u_1(m_k)\geq
u_2(m_k)$, then the resulting search space is $\{m_k, \dots,
\beta_k\}$. No PNE would require having $u_1(x) \geq u_2(x)$ for all
$x \in \{m_k, \dots, \beta_k\}$. We know, however, that $u_1(\beta_k -
1) < u_2(\beta_k - 1)$, which is a contradiction.

Finally, note that the number of recursive calls made by
\refAlgorithm{algorithm:symm_pne} is $\Theta(\log n)$, with a constant
number of queries per recursive call, giving us the claimed
logarithmic query complexity.
\end{proof}

The following lower bound is given on the restricted class of
self-symmetric games and makes the above algorithm asymptotically
optimal. We use $u :
\Pi_{n}^k \longrightarrow [0, 1]$ to
denote the utility function of any player $i$ due to its
independence from $i$.

\begin{proposition}
\label{prop:lowerbound_pne_2-strat}
Any algorithm that finds a pure Nash equilibrium in a two-strategy, $n$-player self-symmetric
game needs to make $\Omega(\log{n})$ single-payoff queries in the worst case.
\end{proposition}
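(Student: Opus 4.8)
The plan is to prove the $\Omega(\log n)$ lower bound via an adversary argument (equivalently, Yao's principle against a suitable hard distribution), mirroring the structure of the binary-search upper bound of \refAlgorithm{algorithm:symm_pne}. The key observation is that in a two-strategy self-symmetric game the utility function is a single function $u : \Pi_n^2 \to [0,1]$, which we may regard as a pair of sequences $u_1(m), u_2(m)$ for $m \in \{0,\dots,n\}$ (here $m$ is the total number of players on strategy $1$), and a PNE with $m$ players on strategy $1$ exists precisely when $u_1(m-1) \geq u_2(m-1)$ and $u_1(m) \leq u_2(m)$ (with the obvious boundary adjustments). Define the ``sign'' $\sigma(m) = \mathrm{sign}(u_1(m) - u_2(m))$; a PNE corresponds to an index where $\sigma$ switches from $+$ to $-$. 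So the problem is essentially: given query access to a $\pm$-valued sequence that starts at $+$ (at $m=0$) and ends at $-$ (at $m = n-1$), find a sign change. This is the classic search problem for which binary search is optimal.

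First I would construct the hard family of games: for each $t \in \{1,\dots,n-1\}$ let $G_t$ be the self-symmetric game in which $u_1(m) - u_2(m) > 0$ for all $m < t$ and $u_1(m) - u_2(m) < 0$ for all $m \geq t$ — concretely one can take, say, $u_2 \equiv \tfrac12$ and $u_1(m) = \tfrac12 + \tfrac{t - m - 1/2}{2n}$, which lies in $[0,1]$ and is in fact Lipschitz, and one should double-check it genuinely satisfies the self-anonymity identity $u_j(x) = u_\ell(x + e_j - e_\ell)$ (since with only two strategies and $u$ a function of the count on strategy $1$, this is automatic). Each $G_t$ has a unique PNE, namely $m = t$ players on strategy $1$. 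Next I would put the uniform distribution on $\{G_t\}_{t=1}^{n-1}$ and argue, by Yao's minimax principle, that any deterministic algorithm correctly identifying the PNE must, in the worst case over $t$, make $\Omega(\log n)$ queries on this distribution. For this step the clean route is an information-theoretic / decision-tree argument: a deterministic algorithm making $q$ single-payoff queries and outputting an answer is a decision tree of depth $q$; each query $u_j(m-1)$ or equivalently each evaluation of $\sigma$ at a point returns one of at most a constant number of outcomes (in fact, on this family, $\sigma(m) \in \{+,-\}$, so two outcomes), hence the tree has at most $2^q$ leaves; since the algorithm must distinguish all $n-1$ games, we need $2^q \geq n-1$, i.e. $q \geq \log_2(n-1) = \Omega(\log n)$. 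Randomization is handled by Yao: the expected number of queries of any randomized algorithm on the uniform input distribution is at least the average depth of the best deterministic tree, which is $\Omega(\log n)$ because a binary tree with $n-1$ leaves has average leaf depth $\Omega(\log n)$.

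There is one technical subtlety I would address carefully: the problem as posed asks the algorithm to \emph{find a PNE} (output a strategy profile, i.e. a count $m$), not to identify which game it is playing. These coincide here exactly because each $G_t$ has a \emph{unique} PNE at $m = t$, so outputting the PNE is equivalent to outputting $t$; hence distinguishing the $n-1$ games is forced. I would also note that a single-payoff query returns one payoff value $u_j(m)$, not the sign $\sigma(m)$ directly — but on the hard family the adversary can answer consistently, and the decision-tree counting only needs that the response set per query is finite (indeed the algorithm gains no more than it would from learning $\sigma$ at the queried point), so the $2^q$ leaf bound is valid; being slightly more generous and allowing each query a larger constant number of distinct responses only changes the constant in $\Omega(\log n)$.

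The main obstacle, and the part deserving the most care, is making the Yao/decision-tree argument airtight about what counts as ``one query's worth of information'': one must verify that on the restricted family $\{G_t\}$ the answer to any single-payoff query $u^i_j(x)$ takes only finitely many values across the $n-1$ games (true, since $u_1$ takes values in a set of size $O(n)$ but for the leaf-counting we only need that at each \emph{fixed} query point the number of possible responses is bounded — here it is, and in fact the informative content is a single bit, the sign). Granting that, the counting argument $2^{\text{depth}} \geq \#\text{games}$ and the standard fact that a binary tree with $N$ leaves has average depth $\geq \log_2 N$ together yield the $\Omega(\log n)$ bound for randomized algorithms via Yao, matching \refProposition{prop:upperbound_pne_2-strat} up to constants.
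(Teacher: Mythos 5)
There is a genuine gap, in two places. First, your hard family is not self-symmetric, and the parenthetical claim that the self-anonymity identity is ``automatic'' is exactly backwards. For two strategies, self-anonymity forces $u_1(x)=u_2(x+1)$ for all $x$ (the payoff depends only on the total number of players, including yourself, on strategy 1), so a two-strategy self-symmetric game is described by a \emph{single} function $w:\{0,\dots,n\}\to[0,1]$, and a PNE is a local maximum of $w$. Your construction with $u_2\equiv\tfrac12$ and a non-constant $u_1$ violates this identity (it is essentially the majority-seeking payoff of Example~\ref{ex:majmin}, which the paper explicitly notes is not self-anonymous). So at best you would be proving a lower bound for symmetric games, which is weaker than the stated proposition about the subclass of self-symmetric games.

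Second, the decision-tree/leaf-counting step is unsound as written. A single-payoff query returns a numeric value, and on your family $u_1(m)=\tfrac12+\tfrac{t-m-1/2}{2n}$ that value, at any fixed query point $m$, takes $n-1$ distinct values across the games $G_t$ and in fact determines $t$ outright; one query solves the problem. The claim that ``the informative content is a single bit'' is therefore false for your instances, and the bound $2^q\geq n-1$ does not apply (with fan-out $n-1$ the counting gives nothing). You could repair the symmetric-game version by encoding only the sign in the payoffs (e.g.\ $u_1(m)\in\{0,1\}$, $u_2\equiv\tfrac12$), but for self-symmetric games a two-valued $w$ creates plateaus and hence many spurious local maxima, so one cannot simultaneously keep a unique PNE and constant-size response sets with a static family. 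This is precisely why the paper's proof works differently: it reduces to finding the unique local optimum of a unimodal $w$ on $\{0,\dots,n\}$ and uses an \emph{adaptive} adversary that chooses the revealed payoff values (slightly above all previously revealed values for queries inside the current candidate range, interpolated values outside it) so that each answer shrinks the range known to contain the optimum by at most a factor of~2, which yields the $\Omega(\log n)$ bound without any leaf-counting over a fixed family.
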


\begin{proof}
A two-strategy game that is both symmetric and self-anonymous is defined
in terms of a utility function $u:\{0,\ldots,n\}\rightarrow[0,1]$ mapping the
number of players who use strategy 1, to a single payoff that all
players receive. A pure Nash equilibrium (PNE) corresponds to
a local optimum of $u$.

We restrict ourselves to functions $u$ having a unique local optimum $OPT_u$; thus $u(x)$
increases monotonically and then decreases monotonically as $x$ increases.
Assume also that $u(0)=u(n)=0$.
We consider an adversary who answers queries so as to achieve the
following objective. At any point, there is a range that is known to contain $OPT_u$;
each answer to a query should reduce that range by a factor of at most 2.

This is achieved as follows. Let $\{x_1,\ldots,x_2\}$ be the solution range, and
$x_q$ be a query point. If $x_1<x_q<x_2$,
then output a payoff that is slightly higher than previously-observed
payoffs, and let the new solution range be the larger of $\{x_1,\ldots,x_q\}$
and $\{x_q,\ldots,x_2\}$. If $x_q\leq x_1$ or $x_q\geq x_2$, choose a payoff
that interpolates between the closest known values to $x_q$.

With this approach, the adversary can ensure $\Omega(\log(n))$ queries
need to be made, as claimed.
\end{proof}

Moving to $k$-strategy self-symmetric games, these are a special case
of pure coordination games where every pure-strategy profile yields the
same utility to all players \cite{bfh09}. So, once again such a game
possesses a pure Nash equilibrium corresponding to a local
maximum of the utility function $u$.
The following results identify the deterministic and randomized payoff query
complexity of finding a PNE in $k$-strategy self-symmetric games, for any constant $k$.
This is done by relating it to the problem of finding a local optimum of
a real-valued function $f$ on a regular grid graph (Lemma~\ref{lem:grid}), about which
much is known, and then applying those pre-existing results.
Of interest to us is the $d$-dimensional regular grid graph;
$[n]^d$ denotes the graph on $[n]\times\ldots\times[n]$, where edges are only present
between two vertices that differ by 1 in a single coordinate, and other coordinates
are equal.

\begin{lemma}\label{lem:grid}
For any constant $k$, the query complexity of searching for a pure Nash
equilibrium of $k$-strategy $n$-player self-symmetric games, is within a
constant factor of the query complexity of searching for a local optimum
of the grid graph $[n]^{k-1}$.
\end{lemma}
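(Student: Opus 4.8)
The plan is to establish the equivalence by exhibiting a bijection between the strategy space of a $k$-strategy self-symmetric game and the vertex set of the grid graph $[n]^{k-1}$, under which pure Nash equilibria correspond exactly to local optima. A pure strategy profile of a $k$-strategy $n$-player self-symmetric game is, by self-anonymity and symmetry, fully described by the partition of the $n$ players into the $k$ strategies, i.e.\ by a vector $(x_1,\dots,x_k)\in\bbbn_0^k$ with $\sum_j x_j = n$. Since $x_k = n - \sum_{j<k} x_j$ is determined by the first $k-1$ coordinates, the set of such partitions is in bijection with $\{(x_1,\dots,x_{k-1})\in\bbbn_0^{k-1} : \sum_{j<k} x_j \le n\}$, a simplex that sits inside $[n+1]^{k-1}$ (reindexing so coordinates run over $\{0,\dots,n\}$, an $(n+1)$-point range in each dimension, which is $[n]^{k-1}$ up to the constant shift the statement absorbs). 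On this set define $f$ to be the common utility $u$ of the corresponding profile.

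First I would check that this correspondence sends equilibria to local optima and back. In a self-symmetric game, since all players get the same payoff in any profile and a single player deviating from strategy $\ell$ to strategy $j$ moves the partition from $x$ to $x + e_j - e_\ell$, a profile $x$ is a PNE iff no such single-player deviation raises the common payoff, i.e.\ iff $u(x) \ge u(x+e_j-e_\ell)$ for all $j,\ell$ with $x_\ell > 0$. I would then observe that in the $(k-1)$-coordinate encoding, the moves $x \mapsto x+e_j-e_\ell$ are precisely the moves along edges of the grid graph (including those involving the eliminated coordinate $k$, which translate to unit changes of a single retained coordinate, or — for $j,\ell$ both $<k$ — a swap that is a composition of two edge moves). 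Some care is needed here: a transfer between strategies $j$ and $\ell$ with $j,\ell<k$ changes two retained coordinates at once, so it is a diagonal move, not a grid edge. I would handle this by noting that a local optimum of a self-symmetric game need only be stable against single-player deviations, and conversely a grid-graph local optimum is stable against single-coordinate changes; to reconcile the two notions I would argue that on the relevant simplex a vertex that is a grid local optimum is also stable against the diagonal moves (or, more robustly, set up $f$ on the grid so that the diagonal stability is automatic, or simply invoke that both the deterministic and randomized complexities of grid local search are insensitive to adding $O(1)$ extra edges per vertex, since the simplex has bounded degree for constant $k$). The boundary of the simplex (profiles where some $x_\ell = 0$, so strategy $\ell$ cannot be vacated) corresponds to the boundary of the grid, and the local-optimum condition there is exactly the one-sided condition grid local search already uses.

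With the bijection in place, the query-complexity equivalence is almost immediate: a single-payoff query to the self-symmetric game at a profile returns $u(x)$, which is the value of $f$ at the corresponding grid vertex, and conversely; so any algorithm for one problem simulates an algorithm for the other with a constant-factor overhead (constant because $k$ is constant, so translating one diagonal move into two edge queries, and restricting attention to the simplex rather than the full cube, costs only a bounded factor). This gives matching upper and lower bounds in both the deterministic and randomized models, which is what the lemma asserts.

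The main obstacle I anticipate is the mismatch between ``single-player deviation'' moves and ``grid edge'' moves — specifically the diagonal transfers between two non-eliminated strategies. The cleanest fix is probably to choose the encoding asymmetrically and argue that for the purpose of \emph{lower} bounds one embeds grid local search into self-symmetric games (every grid instance becomes such a game, where equilibria are a superset of — or, with a mild monotone modification of $f$ near the boundary, equal to — grid local optima), and for \emph{upper} bounds one runs a grid-local-search algorithm on the simplex and then does a constant amount of extra local probing to certify stability against the finitely many diagonal moves. Since $k$ is constant this extra probing is $O(1)$ queries per candidate, preserving the constant factor. I would present the argument in that two-directional form rather than insisting on a single perfect bijection.
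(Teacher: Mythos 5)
Your encoding of profiles and your identification of the central difficulty (a single player transferring between two non-eliminated strategies becomes a diagonal move, not a grid edge) match the paper, and your lower-bound direction is essentially sound: embedding a grid instance into a self-symmetric game works because a PNE is stable against strictly more deviations than a grid local optimum, so any PNE yields a grid local optimum. (Note this makes the set of equilibria a \emph{subset} of the grid local optima, not a superset as you wrote; also, with $n$ players the partitions form a simplex, so only a smaller grid embeds --- the paper instead uses $kn$ players so that the whole cube $[n]^{k-1}$ fits --- though for constant $k$ this only costs a constant factor.)

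The genuine gap is in the other direction, reducing PNE search to grid local search. Your main plan --- run a grid-local-search algorithm and then spend $O(1)$ extra queries to ``certify'' stability against the diagonal moves --- does not work: a grid local optimum need not be stable against a diagonal move (for instance $u(x+e_j-e_\ell)$ can exceed $u(x)$ while both $u(x+e_j-e_k)$ and $u(x+e_k-e_\ell)$ are below it), so certification can simply fail, leaving you with no candidate and no bound on how many times this repeats. Your fallback of ``setting up $f$ so that diagonal stability is automatic'' is exactly what is needed, but it is the crux of the argument and you do not supply the construction; your other fallback, that grid local search complexity is ``insensitive to adding $O(1)$ extra edges per vertex,'' is not a black-box fact you can invoke --- the complexity bounds being applied (Alth\"ofer, Zhang) are for the grid graph specifically, so the reduction must target exactly that graph. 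The paper's fix is concrete: whenever a vertex $v$ lies in a triangle $\{v,v',v''\}$ whose third side $\{v',v''\}$ is a non-grid (diagonal) edge and $u(v)<u(v')$, $u(v)<u(v'')$, it sets $f(v)=(u(v')+u(v''))/2$ (taking the largest such value if $v$ lies in several such triangles), and $f=u$ otherwise. This removes exactly the spurious grid-local optima created by deleting the diagonal edges, keeps the remaining local optima of $f$ in correspondence with local optima of $u$, and lets each $f$-query be answered with a constant number of $u$-queries, which is what preserves the constant factor.
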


\begin{proof}
Given a self-symmetric game $G$, let ${\cal G}(G)$
be the graph  of pure-strategy profiles of $G$,
whose edges are single-player deviations.
We seek a vertex of ${\cal G}(G)$ which forms a local optimum of $u$.

We start by reducing the problem of searching for an optimum of
a function $f$ of the vertices of $[n]^{k-1}$ to the search for a
PNE of a $k$-strategy self-symmetric game $G$.
First, extend $f$ to the domain $(\{0\}\cup[n])^{k-1}$ in such a way
that no local optimum of $f$ occurs at a point with a zero coordinate.
Assume that the codomain of $f$ is $[0,1]$.
Construct $G$ with $kn$ players and $k$ strategies as follows.
Let $s=(x_1,\ldots,x_k)$ be a typical anonymous pure profile of $G$,
where $x_j$ denotes the number of players who play strategy $j$.
If $0\leq x_j\leq n$ for each $j<k$, set $u(s)$ equal to $f(p)$, where
$p$ is the corresponding point in $[n]^{k-1}$.
For the remaining profiles
(the red area of Figure~\ref{fig:2-simplex_pne}, depicting the case $k=2$),
we assign them negative utilities in such
a way that none of them are local optima of $u$.
Consequently, any PNE of $G$ corresponds to a local optimum of $f$,
so the query complexity of $G$ is at least the query complexity of
finding a local optimum of $f$.
Note that the description size of $G$ is larger than that of $f$ by
just a linear factor (the value of the $k$-th entry of a pure profile of
$G$ is determined by the previous $k-1$ entries).

\begin{figure}[h!]
\centering
\begin{tikzpicture}
	\filldraw[fill=yellow, fill opacity=0.25] (0, 0) -- (3, 0) -- (3, 3) -- (0, 3);
	\filldraw[fill=red, fill opacity=0.5] (3, 0) -- (6, 0) -- (3, 3);
	\filldraw[fill=red, fill opacity=0.5] (0, 3) -- (0, 6) -- (3, 3);

	\draw node[below left] {$(0, 0)$} (0,0) -- (6,0) node[below right] {$(2n, 0)$};
	\draw (0, 0) -- (0, 6) node[above] {$(0, 2n)$} (0, 6) -- (6, 0);
	
	\foreach \x in {0, ..., 6} {
		\foreach \y in {\x, ..., 6}
		\node[draw,circle,inner sep=1pt,fill] at (\x, \y - \x) {};
	}
	
	\node[draw,circle,inner sep=1pt,fill, label=left:{$(0, n)$}] at (0, 3) {};
	\node[draw,circle,inner sep=1pt,fill, label=right:{$(n,n)$}] at (3, 3) {};
	\node[draw,circle,inner sep=1pt,fill, label=below:{$(n, 0)$}] at (3, 0) {};
		
	\foreach \x in {1, ..., 5} {
		\draw (\x, 0) -- (0, \x);
		\draw (\x, 0) -- (\x, 6 - \x);
		\draw (0, \x) -- (6 - \x, \x);
	}
\end{tikzpicture}

\caption{The graph when $k = 3$. The yellow area shows the embedded
  grid. The irrelevant points for the local maxima computation are
  in red.}
\label{fig:2-simplex_pne}
\end{figure}

Next we reduce the query-based search for a PNE of a $n$-player $k$-strategy $G$ to
the problem of locally optimizing a function $f$ on a $k$-dimensional grid graph.
We use the graph $[n]^{k-1}$ where the first $k-1$ coordinates of a profile
in $G$ are mapped to the corresponding vertex of $[n]^{k-1}$, and
$f$ maps the remaining vertices of $[n]^{k-1}$ to negative values in such
a way that none of them are solutions.
Reducing in this direction, we have to deal with
non-grid edges of ${\cal G}(G)$ whose presence may cause some profile not to
be a PNE (the diagonal edges in Figure~\ref{fig:2-simplex_pne}).
Such edges correspond to a player switching from a strategy in
$\{1,\ldots,k-1\}$ to another strategy in $\{1,\ldots,k-1\}$ (as opposed
to switching to/from strategy $k$, which correspond to the grid edges).
We deal with this by designing the function $f$ in terms of $u$ such that
each query of the game is simulated by a constant number of queries of
the grid $[n]^{k-1}$. If a vertex $v$ of ${\cal G}(G)$ belongs to one or more
triangles $\{v,v',v''\}$ for which $\{v',v''\}$ is a non-grid edge, then if $u(v)<u(v')$
and $u(v)<u(v'')$, $f(v)$ is set equal to $(u(v')+u(v''))/2$, and if $v$ belongs
to multiple such triangles, the largest such value is used.
Otherwise, $f$ is set equal to $u$.

This construction of $f$ from $u$ gets rid of local optima that may arise from
deleting the diagonal edges (for the purpose of embedding in $[n]^{k-1}$),
and remaining local optima of $f$ correspond to local optima of $u$.
For constant $k$, a query to $f$ can be computed by a constant number
of queries to $u$, so the query complexities (as functions of $n$) are
linearly related.
\end{proof}

\begin{corollary}
The randomized query complexity of searching for PNE of self-symmetric
games is $\Theta(n^{(k-1)/2})$ for constant $k\geq5$.
\end{corollary}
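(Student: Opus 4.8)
The plan is to derive this immediately from \refLemma{lem:grid} together with the known tight bounds on the \emph{randomized} query complexity of local search on regular grid graphs. Setting $d := k-1$, \refLemma{lem:grid} says that the query complexity of finding a PNE of a $k$-strategy $n$-player self-symmetric game is, up to a constant factor, the query complexity of finding a local optimum of a $[0,1]$-valued function on the grid $[n]^{d}$. The first thing I would check is that this equivalence is valid specifically in the randomized model: the two reductions in the proof of \refLemma{lem:grid} are deterministic, and each simulates a single query of one problem by $O(1)$ queries of the other, so pushing a randomized algorithm ``through'' either reduction yields a randomized algorithm for the other problem with the same success behaviour and only a constant-factor blow-up in expected number of queries. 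Hence randomized query complexity is preserved up to constants in \emph{both} directions, which is exactly what we need, since the $O(\cdot)$ bound will come from a grid algorithm (via the reduction from PNE-search to grid local search) and the $\Omega(\cdot)$ bound from grid hardness (via the reduction from grid local search to PNE-search).

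Second, I would invoke the fact that for $d \ge 4$ the randomized query complexity of local search on $[n]^{d}$ is $\Theta(n^{d/2})$. The upper bound $O(n^{d/2})$ is the standard randomized steepest-descent strategy (sample $\Theta(n^{d/2})$ vertices uniformly at random, keep the best, then follow any steepest-descent path to a local optimum), whose analysis goes back to Aldous; the matching lower bound $\Omega(n^{d/2})$ for $d \ge 4$ is due to Zhang, refining Aaronson's quantum-adversary argument, and is stated for randomized algorithms. Combining this with the previous paragraph and substituting $d = k-1$ — which satisfies $d \ge 4$ exactly when $k \ge 5$ — gives randomized query complexity $\Theta\!\left(n^{(k-1)/2}\right)$ for the PNE-search problem, proving the corollary.

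The only real obstacle is bookkeeping rather than mathematics: one must be careful that the grid local-search results being quoted are the \emph{randomized} (not merely deterministic) bounds, and that they are tight precisely in the dimension range used here, since it is exactly the threshold $d \ge 4$ — equivalently $k \ge 5$ — at which the best known upper and lower bounds for grid local search coincide. Below the threshold (e.g.\ $d = 2,3$, i.e.\ $k = 3,4$) only a weaker, non-matching pair of bounds is available, which is why the corollary is restricted to constant $k \ge 5$.
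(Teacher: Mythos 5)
Your proposal is correct and matches the paper's own argument: both apply Lemma~\ref{lem:grid} to transfer the problem to local search on the grid $[n]^{k-1}$ and then quote the tight randomized bound $\Theta(n^{d/2})$ for $d\geq 4$ (the paper cites Zhang), with $d=k-1$ giving the restriction $k\geq 5$. Your extra remarks on why the deterministic constant-overhead reductions preserve randomized query complexity are sound but not a departure from the paper's route.
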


\begin{proof}
We use Lemma~\ref{lem:grid} and a result of \cite{Zhang09} that the randomized query
complexity of optimizing a function on the grid $[n]^d$ is $\Theta(n^{d/2})$ for $d\geq 4$. The statement
follows from the fact that we have $d = k-1$.
\end{proof}

\begin{corollary}\label{prop:lowerbound_pne_k-strat}
The deterministic query complexity of searching for PNE of self-symmetric
games is $\Theta(n^{k-2})$ for constant $k>2$.
\end{corollary}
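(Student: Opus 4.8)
The plan is to obtain this bound as a direct corollary of \refLemma{lem:grid} combined with the known tight bound on the \emph{deterministic} query complexity of local search on hypergrids. \refLemma{lem:grid} already shows that, for constant $k$, the deterministic query complexity of finding a PNE of a $k$-strategy $n$-player self-symmetric game is within a constant factor of the deterministic query complexity of finding a local optimum of a real-valued function on the grid graph $[n]^{k-1}$. So it remains only to quote the complexity of the latter and substitute $d=k-1$.

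First I would invoke the classical fact that locating a local optimum of a function on the $d$-dimensional grid $[n]^d$ has deterministic query complexity $\Theta(n^{d-1})$ for every constant $d\ge 2$ (for $d=1$ it is instead $\Theta(\log n)$, matching Propositions~\ref{prop:upperbound_pne_2-strat}--\ref{prop:lowerbound_pne_2-strat}). The $O(n^{d-1})$ upper bound follows from the standard divide-and-conquer ``slicing'' algorithm: query a middle $(d-1)$-dimensional slice together with a thin neighbourhood of it, locate the best vertex of the slice, and recurse into the half-grid on the side where the value improves; the $\Omega(n^{d-1})$ lower bound follows from an adversary that answers queries so as to keep a monotone descending path alive inside a lower-dimensional sub-grid that the algorithm has not yet pinned down. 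Setting $d=k-1$, and using $k>2$ so that $d\ge 2$, yields $\Theta(n^{k-2})$.

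Putting the two together: by the backward reduction of \refLemma{lem:grid} (each query to the game simulated by $O(1)$ queries to the grid), the grid upper bound gives an $O(n^{k-2})$ algorithm for self-symmetric games; by the forward reduction (the grid local-search instance is embedded into a self-symmetric game whose description is only a linear factor larger), the grid lower bound gives a matching $\Omega(n^{k-2})$ lower bound. Hence the deterministic query complexity of PNE search for $k$-strategy self-symmetric games is $\Theta(n^{k-2})$, as claimed.

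The main obstacle is simply to state and cite the deterministic hypergrid local-search bound in the form $\Theta(n^{d-1})$; there is no additional combinatorial work beyond what \refLemma{lem:grid} already provides. The only minor points worth checking are that the constant-factor blow-ups in both directions of \refLemma{lem:grid} depend on $k$ but not on $n$ and are therefore harmless for constant $k$, and that the $d\ge 2$ hypothesis of the grid bound is exactly the restriction $k>2$ in the statement (so that $k=2$ is genuinely excluded and handled separately by the two propositions above).
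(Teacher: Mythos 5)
Your proposal is correct and follows essentially the same route as the paper: apply Lemma~\ref{lem:grid} together with the known deterministic $\Theta(n^{d-1})$ bound for local search on the grid $[n]^d$ and substitute $d=k-1$. The only difference is that the paper simply cites Theorem~3 of \cite{althofer93} for that grid bound, whereas you sketch its proof (slicing upper bound, adversary lower bound), which is not needed but does no harm.
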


\begin{proof}
We use Lemma~\ref{lem:grid} and Theorem 3 of \cite{althofer93},
which identifies upper and lower bounds for the query complexity of deterministically
optimizing a function on the grid $[n]^d$ that are both proportional to $n^{d-1}$. Again, the statement
follows from having $d=k-1$. (We note that in \cite{althofer93}, the roles of $n$ and $k$ are reversed relative to here.)
\end{proof}

\section{Self-anonymous Games}
\label{section:self-anonymous_games}

We show that in self-anonymous games, if every player uses the uniform
distribution over her actions, this mixed profile constitutes a $O\left(1 / \sqrt{n}\right)$-WSNE.
This means that no query is needed to find any such approximation.
We provide an inductive proof on the constant number of actions $k$.
First, we demonstrate that this holds for two-strategy games and
subsequently utilize this result as the base case of the induction.
We prove the following lemma below although, in fact,
it also follows from the combination of
\refLemma{lemma:tv_utility} and \refLemma{lemma:tvd_lipschitz}
with $\epsilon = 1/2$.

\begin{lemma}
\label{lemma:two_strategy_sqrt_nash}
In any two-strategy $n$-player self-anonymous game, the mixed-strategy
profile $s = (\frac{1}{2}, \dots, \frac{1}{2})$ is an $O\left(1 /
\sqrt{n}\right)$-WSNE.
\end{lemma}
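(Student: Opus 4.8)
The plan is to use self-anonymity to collapse each player's payoff into a function of a single integer --- the total number of players (herself included) playing strategy~$1$ --- and then to show that, in the uniform profile, a player's incentive to deviate equals a quantity controlled by the largest atom of a binomial distribution, which is $\Theta(1/\sqrt n)$.

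\emph{Reduction.} In a two-strategy self-anonymous game the defining identity $u^i_1(x) = u^i_2(x + e_1 - e_2)$, valid whenever at least one other player uses strategy~$2$, reads in scalar notation (with $x$ the number of \emph{other} players on strategy~$1$) as $u^i_1(x) = u^i_2(x+1)$ for $0 \le x \le n-2$. Hence there is a single function $v^i : \{0,1,\dots,n\} \to [0,1]$, given by $v^i(t) = u^i_2(t)$ for $t \le n-1$ and $v^i(n) = u^i_1(n-1)$, such that $u^i_2(x) = v^i(x)$ and $u^i_1(x) = v^i(x+1)$; that is, $i$'s payoff is $v^i$ evaluated at the total count of strategy-$1$ players.

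\emph{Bounding the deviation incentive.} In the profile $s=(\tfrac12,\dots,\tfrac12)$ the random variable $X_{-i}=:Y$ is a sum of $n-1$ independent fair coins, hence has the $\mathrm{Binomial}(n-1,\tfrac12)$ distribution, so
\[
\expected[u^i_1(X_{-i})] - \expected[u^i_2(X_{-i})] = \expected[v^i(Y+1)] - \expected[v^i(Y)].
\]
Writing $b_k = \Pr[Y=k]$ and re-indexing, this difference equals $\sum_{t=1}^{n-1}(b_{t-1}-b_t)\,v^i(t) + b_{n-1}v^i(n) - b_0 v^i(0)$, whose absolute value is at most $\sum_{t=1}^{n-1}|b_{t-1}-b_t| + b_{n-1} + b_0$ because $0 \le v^i \le 1$. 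Since the binomial p.m.f.\ is unimodal, the telescoping sum $\sum_{t=1}^{n-1}|b_{t-1}-b_t|$ equals $2\max_k b_k - b_0 - b_{n-1}$, so the whole bound collapses to $2\max_k b_k$, and by standard estimates on central binomial coefficients $\max_k b_k = \binom{n-1}{\lfloor(n-1)/2\rfloor}2^{-(n-1)} = \Theta(1/\sqrt n)$. Therefore $\bigl|\expected[u^i_1(X_{-i})] - \expected[u^i_2(X_{-i})]\bigr| = O(1/\sqrt n)$ for every $i$.

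\emph{Conclusion.} In $s$ every player has full support $\{1,2\}=\supp(\expected[\mathcal{X}_i])$, so the $\epsilon$-WSNE condition for player $i$ reduces exactly to the two inequalities $\expected[u^i_\ell(X_{-i})] + \epsilon \ge \expected[u^i_j(X_{-i})]$ with $\{\ell,j\} = \strats$, i.e.\ to $\bigl|\expected[u^i_1(X_{-i})] - \expected[u^i_2(X_{-i})]\bigr| \le \epsilon$, which holds with $\epsilon = O(1/\sqrt n)$ by the previous step. I expect the only mildly delicate point to be the summation-by-parts step combined with unimodality of the binomial, which makes the error telescope to a single peak value; equivalently this is the statement that the total variation distance between $Y$ and $Y+1$ is $\Theta(1/\sqrt n)$, which is also why the lemma follows from the combination of \refLemma{lemma:tv_utility} and \refLemma{lemma:tvd_lipschitz}.
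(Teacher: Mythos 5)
Your proof is correct and follows essentially the same route as the paper's: both exploit the self-anonymity identity $u^i_1(x)=u^i_2(x+1)$ to turn the incentive into a discrete shift $\expected[v^i(Y+1)]-\expected[v^i(Y)]$ for $Y\sim\mathrm{Binomial}(n-1,\tfrac12)$, apply summation by parts, and use unimodality so the sum telescopes to the central binomial term $\Theta(1/\sqrt{n})$. The only difference is cosmetic (you bound via $\sum_t|b_{t-1}-b_t|$, i.e.\ the variation-distance view the paper itself mentions, losing a factor of $2$ over the paper's worst-case step-function argument), which does not affect the $O(1/\sqrt{n})$ claim.
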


\begin{proof}
We will show that for any player $i \in [n]$, we have
\begin{equation}\label{condition:strat1}
\Bigl| \mathbb{E}[u^i_1(X_{-i})] - \mathbb{E}[u^i_2(X_{-i})] \Bigr|
   \leq \frac{e}{\pi} \cdot \frac{1}{\sqrt{n - 1}}.
\end{equation}
To prove \refEquation{condition:strat1}, we analyse the expected
value to player 1 of strategy 1 minus that of strategy 2 as follows:
\begin{equation*}
\begin{split}
   & \sum_{x = 0}^{n - 1}(u_1^i(x) - u_2^i(x))
      \cdot \text{Pr}[X_{-i} = x] \\
 = & \sum_{x = 0}^{n - 2}(u_2^i(x + 1) - u_2^i(x))
    \cdot \text{Pr}[X_{-i} = x] + (u_1^i(n - 1) - u_2^i(n - 1))
       \cdot \text{Pr}[X_{-i} = n - 1] \\
 = & \sum_{x = 0}^{n - 2}(u_2^i(x + 1) - u_2^i(x))
    \cdot \binom{n - 1}{x} \cdot \frac{1}{2^{n - 1}} + (u_1^i(n - 1) - u_2^i(n - 1))
       \cdot \frac{1}{2^{n - 1}} \\
 = & \frac{1}{2^{n - 1}}\left(\sum_{x = 1}^{n - 1} u_2^i(x)
    \cdot \left(\binom{n - 1}{x - 1} - \binom{n - 1}{x}\right)
       + (u_1^i(n - 1) - u_2^i(0))\right),
\end{split}
\end{equation*}
where in the second step we applied the self-anonymity property, in
the third step we used the definition of the p.m.f. of the binomial
distribution, and in the fourth one we simply rearranged terms. Since
the utility function outputs values in $[0, 1]$, then $u_1^i(n - 1) -
u_2^i(0) \leq 1$. Moreover, for all $x = 1, \dots, \frac{n - 1}{2}$,
we have $\binom{n - 1}{x - 1} - \binom{n - 1}{x} < 0$, and strictly
positive for the remaining values. Thus, in the worst case we have
\begin{equation*}
u_2^i(x) = 
  \left\{
    \begin{array}{ll}
      0  & \mbox{if } x \in \{1, \dots, \frac{n - 1}{2}\} \\
      1 & \mbox{if } x \in \{\frac{n - 1}{2} + 1, \dots, n - 1\},
    \end{array}
  \right.
\end{equation*}
reducing the above expression to be at most
\begin{equation*}
\begin{split}
 & \frac{1}{2^{n - 1}} \left(\sum_{x = \frac{n - 1}{2} + 1}^{n - 1}
 \left(\binom{n - 1}{x - 1} - \binom{n - 1}{x}\right) + 1\right) \\
 = & \frac{1}{2^{n - 1}}\left(\sum_{x = \frac{n - 1}{2} + 1}^{n - 1}
    \binom{n - 1}{x - 1} - \sum_{x = \frac{n - 1}{2} + 1}^{n - 2} \binom{n - 1}{x}\right)
\end{split}
\end{equation*}
\[
= \frac{1}{2^{n - 1}}\left(\sum_{x = \frac{n - 1}{2}}^{n - 2}
   \binom{n - 1}{x} - \sum_{x = \frac{n - 1}{2} + 1}^{n - 2}
      \binom{n - 1}{x}\right)
       = \frac{1}{2^{n - 1}} \binom{n - 1}{\frac{n - 1}{2}}
  = \frac{1}{2^{n - 1}} \cdot \frac{(n - 1)!}{\left(\left(\frac{n - 1}{2}\right)!\right)^2}.
\]
By Stirling's bounds, we know that
$\sqrt{2\pi} \cdot n^{n + 1/2} \cdot e^{-n} \leq n! \leq e \cdot n^{n + 1/2} \cdot e^{-n}$. Hence,
\begin{gather*}
\frac{1}{2^{n - 1}} \cdot \frac{(n - 1)!}{\left(\left(\frac{n - 1}{2}\right)!\right)^2} \leq
\frac{1}{2^{n - 1}} \cdot \frac{e \cdot (n - 1)^{n - 1 + \frac{1}{2}} \cdot e^{-(n - 1)}}{\left(\sqrt{2\pi} \cdot (\frac{n - 1}{2})^{\frac{n - 1 + 1}{2}} \cdot e^{-(\frac{n-1}{2})}\right)^2} \\
= \frac{1}{2^{n - 1}} \cdot \frac{e \cdot (n - 1)^{n - \frac{1}{2}} \cdot e^{-(n - 1)}}{2\pi \cdot (\frac{n - 1}{2})^n \cdot e^{-(n - 1)}} =
\frac{e}{\pi} \cdot \frac{1}{\sqrt{n - 1}}.
\end{gather*}
This gives the required upper bounded on the value of strategy 1
minus that of strategy 2. The value of strategy 2 minus that of
strategy 1 has the same analysis and upper bound, which gives
us \refEquation{condition:strat1}.
\end{proof}

\begin{theorem}\label{thm:approx_selfanon}
For constant $k$, in any $k$-strategy $n$-player self-anonymous game
letting every player randomize uniformly is a $O\left(1 / \sqrt{n}\right)$-WSNE.
\end{theorem}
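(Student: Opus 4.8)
The plan is to prove \refTheorem{thm:approx_selfanon} by induction on the number of strategies $k$, using \refLemma{lemma:two_strategy_sqrt_nash} as the base case $k=2$. Fix a $k$-strategy $n$-player self-anonymous game $G$ and suppose every player randomizes uniformly, i.e. plays each strategy with probability $1/k$. I need to bound, for every player $i$ and every pair of strategies $j, \ell \in [k]$, the quantity $\expected[u^i_j(\mathcal{X}_{-i})] - \expected[u^i_\ell(\mathcal{X}_{-i})]$ by $O(1/\sqrt{n})$; since the uniform profile has full support, controlling all such pairwise differences is exactly what is needed for the $\epsilon$-WSNE condition.

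The key idea is to reduce a $k$-strategy comparison to a $(k-1)$-strategy instance by ``merging'' two strategies. Concretely, to compare strategies $j$ and $\ell$, I would condition on the behaviour of the other $n-1$ players restricted to the strategies other than $\ell$, treating the event ``a player chose $\ell$'' as being folded together with ``a player chose $j$'' via the self-anonymity relation $u^i_j(x) = u^i_\ell(x + e_j - e_\ell)$. More precisely, I would group the $k$ strategies into $\{j,\ell\}$ versus the rest, so that the relevant randomness is: (a) how many of the $n-1$ others land in $\{j,\ell\}$ (a binomial with parameter $2/k$), and (b) conditioned on that count $t$, a further split of those $t$ players between $j$ and $\ell$, which is itself a uniform (i.e. $\mathrm{Binomial}(t,1/2)$) allocation. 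Step (b) is structurally identical to the two-strategy self-anonymous situation analysed in \refLemma{lemma:two_strategy_sqrt_nash}: conditioned on the count of ``$\{j,\ell\}$-players'' and on the configuration of the remaining strategies, self-anonymity lets me telescope the difference $u^i_j - u^i_\ell$ exactly as in that lemma's proof, and the binomial coefficient bound via Stirling gives a conditional bound of $O(1/\sqrt{t})$. Averaging over $t \sim \mathrm{Binomial}(n-1, 2/k)$, and using that $\expected[1/\sqrt{t}] = O(1/\sqrt{n})$ for constant $k$ (since $t$ concentrates around $2(n-1)/k$ with only an exponentially small probability of being small), yields the desired $O(1/\sqrt{n})$ bound with a constant depending on $k$. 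Because $k$ is constant, this dependence is absorbed into the $O(\cdot)$.

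An alternative, cleaner packaging of the same argument is to invoke the two-strategy result directly: fix player $i$ and strategies $j,\ell$; reveal the strategies of the other $n-1$ players that are not in $\{j,\ell\}$, obtaining a residual sub-population of random size $t$; the induced game on that sub-population, with payoffs $u^i_j(\cdot)$ and $u^i_\ell(\cdot)$ as functions of the number of sub-population players choosing $j$, is a two-strategy self-anonymous game on $t$ players, so \refLemma{lemma:two_strategy_sqrt_nash} (or rather its internal estimate $\tfrac{e}{\pi}\cdot\tfrac{1}{\sqrt{t-1}}$) bounds the conditional difference. Taking expectations over $t$ and the revealed configuration finishes the induction step. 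I would still state it as an induction on $k$ so that the write-up parallels the paper's stated structure, even though the induction ``collapses'' to a single reduction to the base case.

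The main obstacle is making the conditioning/merging argument rigorous: one must check that after folding strategy $\ell$ into $j$, the self-anonymity identity is applied consistently (the shift $x \mapsto x + e_j - e_\ell$ must keep the argument inside the valid domain $\Pi_{n-1}^k$, which is why the telescoping has a boundary term exactly as in \refLemma{lemma:two_strategy_sqrt_nash}), and that the conditional distribution of the $j$-versus-$\ell$ split really is the symmetric binomial so that the Stirling estimate applies verbatim. The second, more quantitative point is controlling $\expected[1/\sqrt{t}]$: I would split on whether $t \geq (n-1)/k$, bounding the contribution of the rare event $t < (n-1)/k$ by a Chernoff bound (which is $e^{-\Omega(n)}$, hence negligible) and bounding $1/\sqrt{t}$ by $O(1/\sqrt{n})$ on the likely event. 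Everything else is a routine reprise of the base-case computation, so I would not belabour it.
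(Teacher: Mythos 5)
Your proposal is correct, but the mechanism of your inductive step differs from the paper's in a genuine way. The paper's induction literally peels off one strategy per level: it conditions on the number $x_k$ of other players using strategy $k$, observes that the residual $n-x_k$ players are uniform over the remaining $k-1$ strategies, invokes the $(k-1)$-strategy statement as inductive hypothesis to bound the conditional difference by $O\bigl(1/\sqrt{n-x_k}\bigr)$, and controls the tail with the Chernoff bound \refEquation{eqn:cb}. Your argument instead fixes the pair $(j,\ell)$ being compared, conditions in one shot on everything the other players do outside $\{j,\ell\}$, and notes that the conditional split of the $t$ players in $\{j,\ell\}$ is $\mathrm{Binomial}(t,1/2)$, so the telescoping-plus-Stirling estimate from \refLemma{lemma:two_strategy_sqrt_nash} applies verbatim to give a conditional bound of order $1/\sqrt{t}$; a single Chernoff bound on $t\sim\mathrm{Binomial}(n-1,2/k)$ then yields $O(1/\sqrt{n})$. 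As you observe yourself, your ``induction'' collapses to one reduction to the base case, and this buys something concrete: the paper's proof accumulates hidden constants across the $k-2$ levels (the authors remark the constant grows like $k!$, which is why they insist $k$ be constant), whereas your one-shot conditioning pays only a single $\sqrt{k}$-type factor and one exponentially small tail, so it degrades more gracefully if $k$ is allowed to grow slowly with $n$. The points you flag as needing care are the right ones and are all benign: the self-anonymity shift $x\mapsto x+e_j-e_\ell$ needs $x_\ell\neq 0$, which is exactly the boundary term already handled in the two-strategy lemma; the conditional law of the $j$-versus-$\ell$ split really is the symmetric binomial by independence of the players' uniform choices; and on the rare event that $t$ is small you bound the payoff difference trivially by $1$, which the Chernoff bound makes negligible. (A minor off-by-one: with $t$ players in the sub-population besides $i$, the lemma's internal estimate reads $\tfrac{e}{\pi}\cdot\tfrac{1}{\sqrt{t}}$ rather than $\tfrac{1}{\sqrt{t-1}}$, but this does not affect anything.)
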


\begin{proof}
We proceed by induction on the number of strategies $k$ where
the base case $k = 2$ follows by
\refLemma{lemma:two_strategy_sqrt_nash}. Suppose that
it holds that in any $(k - 1)$-strategy $n$-player self-anonymous
game, every player $i \in [n]$ mixing uniformly is a $O\left(1 /
\sqrt{n}\right)$-WSNE. We show that this holds also for $k$ strategies.

Let $G_{k}$ be a $k$-strategy self-anonymous game. Moreover, let
$X^{(\ell)}_{i}$ be a random variable indicating whether player $i$
plays strategy $\ell$. Let $X^{(k)}_{-i} := \sum_{j \neq i}
X^{(k)}_j$ denote the number of players other than $i$ playing
strategy $k$ in $G_{k}$. We observe that $\mathbb{E}\left[X^{(k)}_{-i}\right] =
\frac{n - 1}{k}$, so by Chernoff bounds, we have that
\begin{equation}\label{eqn:cb}
\text{Pr}\left[X^{(k)}_{-i} \geq
   \frac{2}{k}\left(n - 1\right)\right] \leq e^{-\frac{n - 1}{3k^2}},
\end{equation}
thus, exponentially small in $n$ for constant $k$.
We bound the difference in player $i$'s utility between
two strategies, say, $1$ and $2$, which is
\begin{gather*}
\sum_{\substack{x_1,\ldots,x_k : \\ x_1+\ldots+x_k=n}}
      (u^i_1(x_1,\dots,x_k) - u^i_2(x_1,\dots,x_k))
        \cdot \text{Pr}\left[X^{(1)}_{-i}=x_1,\dots,X^{(k)}_{-i} = x_k\right],
\end{gather*}
where the utility function $u_j^i$ takes as input the number $x_m$ of
players playing strategy $m$, for all $m = 1,\dots,k$.

Next we decompose this into the
contributions to the expected difference between $i$'s utility
for 1 and 2, arising from the event that $x_k$ players play strategy $k$:
\begin{gather*}
\sum_{x_k = 0}^{n} \text{Pr}\left[X^{(k)}_{-i} = x_k\right]
\sum_{\substack{x_1,\ldots,x_{k-1} : \\ x_1+\ldots+x_{k-1}=n-x_k}} \\
(u^i_1(x_2, \dots, x_k) - u^i_2(x_2, \dots, x_k)) \cdot \text{Pr}\left[X^{(2)}_{-i} = x_2, \dots, X^{(k - 1)}_{-i} = x_{k-1} \big| X^{(k)}_{-i} = x_k\right].
\end{gather*}
Observe that when we sum over $x_2,\ldots,x_{k-1}$ but not $x_k$,
we are dealing with an $(n - x_k)$-player $(k - 1)$-strategy game
$G_{k - 1}$ where all $(n - x_k)$ players are still randomizing
uniformly among the $k - 1$ strategies. To see this, we could
think of fixing the identities of the $x_k$ players playing strategy
$k$ to be $\{n - x_k + 1, \dots, n\}$ but $G_{k - 1}$ is anonymous,
i.e., invariant under permutations of the players. By induction
hypothesis, we can bound the difference in payoff in $G_{k - 1}$ by
$O\left(\frac{1}{\sqrt{n - x_k}}\right)$. We can, therefore, write
the difference in utilities between 1 and 2 as
\begin{equation*}
\begin{split}
 & \sum_{x_k = 0}^{n - 1} \text{Pr}\left[X_{- i}^{(k)} = x_k\right] \cdot O\left(\frac{1}{\sqrt{n - x_k}}\right) \\
\leq & \sum_{x_k = 0}^{\frac{2}{k}(n - 1)} \text{Pr}\left[X_{- i}^{(k)} = x_k\right] \cdot O\left(\frac{1}{\sqrt{n - 2 / k}}\right) + \sum_{x_k = \frac{2}{k}(n - 1) + 1}^{n - 1} e^{-\frac{n - 1}{3k^2}} \cdot 1 \\
\leq & \ O\left(\sqrt{\frac{k}{kn - 2}}\right) + \frac{k - 2}{k}n \cdot e^{-\frac{n - 1}{3k^2}} = O\left(\frac{1}{\sqrt{n}}\right).
\end{split}
\end{equation*}

where the first inequality used Equation~(\ref{eqn:cb}).
We should perhaps note that in summing over big-oh expressions,
the hidden constant in the notation is the same for all terms in
the summation (namely, one that applies for $k-1$ strategy games,
which exists by inductive hypothesis).
As long as $k$ is a constant, we have the above claimed upper bound.
In addition, notice that if $k$ is not a constant, the
result no longer holds since the proof adds a constant factor of $k!$ into the
big-oh notation.
\end{proof}

\begin{corollary}
For any constant $k$, no payoff queries are needed to find a $O(1 /
\sqrt{n})$-approximate equilibrium in $k$-strategy $n$-player
self-anonymous games.
\end{corollary}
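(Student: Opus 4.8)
The plan is to derive the corollary directly from \refTheorem{thm:approx_selfanon} together with the elementary relationship between well-supported and ordinary approximate equilibria, and then to argue that the resulting equilibrium is specified without consulting the payoff function. First I would recall that by \refTheorem{thm:approx_selfanon}, in any $k$-strategy $n$-player self-anonymous game with $k$ constant, the profile in which every player $i$ plays the uniform distribution $\left(\tfrac1k,\dots,\tfrac1k\right)$ over her $k$ actions is a $O\!\left(1/\sqrt{n}\right)$-WSNE. Since every $\epsilon$-WSNE is also an $\epsilon$-NE (as noted in the preliminaries, where it is observed that an $\epsilon$-WSNE is an $\epsilon$-NE but not conversely), this same uniform profile is in particular a $O\!\left(1/\sqrt{n}\right)$-approximate Nash equilibrium.

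Next I would observe that this uniform profile does not depend on the game's payoff function at all: it is the fixed profile $\left(\tfrac1k,\dots,\tfrac1k\right)$ for every player, which can be written down from the input parameters $n$ and $k$ alone. Hence an algorithm that simply returns this profile finds a $O\!\left(1/\sqrt{n}\right)$-approximate equilibrium while making zero single-payoff (equivalently, zero all-players) queries, which establishes the claim. The only point requiring care — and it is not really an obstacle — is the role of the constant $k$: the hidden constant in the $O(\cdot)$ bound grows with $k$ (the inductive proof of \refTheorem{thm:approx_selfanon} introduces a multiplicative blow-up on the order of $k!$), so the conclusion is asserted only for constant $k$, exactly matching the hypothesis. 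No further work is needed beyond quoting \refTheorem{thm:approx_selfanon} and the WSNE-to-NE implication.
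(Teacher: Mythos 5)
Your proposal is correct and matches the paper's (implicit) argument exactly: the corollary follows immediately from \refTheorem{thm:approx_selfanon} because the uniform profile is a $O(1/\sqrt{n})$-WSNE, hence a $O(1/\sqrt{n})$-NE, and it can be output from $n$ and $k$ alone without any payoff queries. Your remark about the hidden constant depending on $k$ (hence the restriction to constant $k$) is also consistent with the paper's discussion.
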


The following Theorem~\ref{thm:reduction_selfanon} extends an idea
of Lemma~1 of \cite{bfh09}, which reduces the computation of an exact
equilibrium of a two-strategy anonymous game to a two-strategy self-anonymous game. 
Theorem~\ref{thm:reduction_selfanon} goes further, showing that to some extent
approximations can be preserved.
We apply it to prove \refCorollary{cor:selfanon_fptas},
that the problem of finding an FPTAS in two-strategy anonymous games reduces
to finding an FPTAS in two-strategy self-anonymous games.
Below, we also discuss why Theorem~\ref{thm:reduction_selfanon} does not extend
straightforwardly to more than two strategies.

\begin{theorem}
\label{thm:reduction_selfanon}
Let $G := (n, 2, \{u_j^i\}_{i \in [n], j \in [2]})$ be a two-strategy anonymous game,
and let $\bar{G} := (n, 2, \{\bar{u}_j^i\}_{i \in [n], j \in [2]})$ be the following self-anonymous game.
$\bar{u}^i_2(x+1)$ is set equal to $\bar{u}^i_1(x)$ for all $x$, as required
for $\bar{G}$ to be self-anonymous. Then,
\begin{gather*}
\bar{u}^i_2(0) = \frac{1}{2}, \\
\bar{u}_1^i(x) = \bar{u}_2^i(x) + \frac{u_1^i(x) - u_2^i(x)}{2n} \ \text{ for all } x \in \{0, \dots, n - 1\}, \\
\end{gather*}
A mixed strategy profile $s = (p_1, \dots, p_n)$ is an $\epsilon$-approximate
Nash equilibrium in $G$ if and only if it is an $\frac{\epsilon}{2n}$-approximate
Nash equilibrium in $\bar{G}$.
\end{theorem}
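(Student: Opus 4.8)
The plan is to express, for a two-strategy game, each player's maximum incentive to deviate (her \emph{regret}) in closed form, and then to observe that the construction scales every player's regret by exactly the factor $\frac{1}{2n}$ when we pass from $G$ to $\bar G$; the asserted equivalence is then immediate in both directions. Fix a profile $s=(p_1,\dots,p_n)$ and write $P^i(s):=\mathbb{E}[u^i_1(X_{-i})]-\mathbb{E}[u^i_2(X_{-i})]$ for the advantage of strategy $1$ over strategy $2$ to player $i$. Note that the law of $X_{-i}=\sum_{\ell\neq i}X_\ell$ is determined by $s$ alone — it does not depend on the utilities — and is therefore identical in $G$ and in $\bar G$, since both are $n$-player two-strategy games on the same set of players. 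Since a mixed deviation never beats the better pure strategy, $i$'s regret under $s$ equals $\max\bigl(\mathbb{E}[u^i_1(X_{-i})],\mathbb{E}[u^i_2(X_{-i})]\bigr)-\bigl(p_i\mathbb{E}[u^i_1(X_{-i})]+(1-p_i)\mathbb{E}[u^i_2(X_{-i})]\bigr)$, and a one-line rearrangement puts this in the form $\max\bigl((1-p_i)P^i(s),\,-p_iP^i(s)\bigr)\ge 0$, which vanishes exactly when $i$ best-responds. Hence $s$ is an $\epsilon$-NE of a two-strategy game if and only if this quantity is at most $\epsilon$ for every $i$.

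The key step is to relate $\bar P^i(s)$ to $P^i(s)$. Because $X_{-i}$ is supported on $\{0,\dots,n-1\}$ and the construction gives $\bar u^i_1(x)-\bar u^i_2(x)=\frac{u^i_1(x)-u^i_2(x)}{2n}$ for exactly those values of $x$, we get
\[
\bar P^i(s)=\sum_{x=0}^{n-1}\bigl(\bar u^i_1(x)-\bar u^i_2(x)\bigr)\Pr[X_{-i}=x]
=\frac{1}{2n}\sum_{x=0}^{n-1}\bigl(u^i_1(x)-u^i_2(x)\bigr)\Pr[X_{-i}=x]=\frac{P^i(s)}{2n}.
\]
In particular $P^i(s)$ and $\bar P^i(s)$ always have the same sign. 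Substituting $\bar P^i(s)=\frac{1}{2n}P^i(s)$ into the regret formula of the previous paragraph and using that $\max(ca,cb)=c\max(a,b)$ for $c\ge0$, player $i$'s regret in $\bar G$ under $s$ is exactly $\frac{1}{2n}$ times her regret in $G$ under $s$. Consequently every player's regret in $G$ is at most $\epsilon$ if and only if every player's regret in $\bar G$ is at most $\frac{\epsilon}{2n}$, which is precisely the claimed bi-implication.

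It remains to check that $\bar G$ is a legitimate game, i.e.\ that all $\bar u^i_j$ lie in $[0,1]$. Unrolling $\bar u^i_2(x+1)=\bar u^i_1(x)$ together with $\bar u^i_1(x)=\bar u^i_2(x)+\frac{u^i_1(x)-u^i_2(x)}{2n}$ from the base value $\bar u^i_2(0)=\frac12$ yields $\bar u^i_1(x)=\frac12+\sum_{y=0}^{x}\frac{u^i_1(y)-u^i_2(y)}{2n}$, so $\bigl|\bar u^i_1(x)-\tfrac12\bigr|\le\frac{x+1}{2n}\le\frac12$ for $x\le n-1$, and $\bar u^i_2$ satisfies the same bound. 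I do not expect a genuine obstacle here; the one conceptual point worth stressing — and the only place the particular form of the construction is used — is that self-anonymity forces $\bar u^i_2(x+1)=\bar u^i_1(x)$, which makes the ``vertical'' gap $\bar u^i_1(x)-\bar u^i_2(x)$ reference only the corresponding gap in $G$ (rescaled by $\frac{1}{2n}$), namely exactly the quantity that governs incentives, while the telescoping cumulative shift is kept in range by the choices of the base value $\frac12$ and the scaling $\frac{1}{2n}$. The only care needed is in deriving the closed-form regret and in verifying the sign-preservation, so that the equivalence runs in both directions.
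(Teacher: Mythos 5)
Your proof is correct. It rests on the same two facts as the paper's proof --- the law of $X_{-i}$ is determined by $s$ alone and is therefore identical in $G$ and $\bar G$, and the construction gives the pointwise relation $\bar u^i_1(x)-\bar u^i_2(x)=\frac{u^i_1(x)-u^i_2(x)}{2n}$ --- but you package them differently. The paper proves the two directions separately, each by contradiction, with a case distinction on which strategy the deviating player would move to, pushing the relevant inequality through the payoff transformation in each case. You instead write player $i$'s regret in closed form as $\max\bigl((1-p_i)P^i(s),\,-p_iP^i(s)\bigr)$ with $P^i(s)=\mathbb{E}[u^i_1(X_{-i})]-\mathbb{E}[u^i_2(X_{-i})]$, note that $\bar P^i(s)=\frac{1}{2n}P^i(s)$, and conclude that every player's regret in $\bar G$ is \emph{exactly} $\frac{1}{2n}$ times her regret in $G$; this single identity gives both directions of the equivalence at once, and is in fact marginally stronger than the stated theorem. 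You also verify explicitly, by unrolling the telescoping definition from $\bar u^i_2(0)=\frac12$, that all $\bar u^i_j$ lie in $[0,1]$, a point the paper dispatches in one sentence. Both arguments are valid; yours is the more economical presentation of the same underlying mechanism.
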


\begin{proof}
First of all, we remark that dividing by $2n$ ensures having all the payoffs in $[0, 1]$. Let $X_j$ denote a Bernoulli random variable indicating whether player $j$ plays strategy $1$, whose expectation is $\mathbb{E}[X_j] := p_j$ for all $j \in [n]$, and let $X_{-i} := \sum_{j \neq i} X_j$.
\begin{description}
\item[$\Longrightarrow$:] We now show that if a strategy profile $s = (p_1, \dots, p_n)$ is an $\epsilon$-approximate Nash equilibrium in $G$, then it is an $\frac{\epsilon}{2n}$-approximate Nash equilibrium in $\bar{G}$. Assume for contradiction that $s$ is not an $\frac{\epsilon}{2n}$-approximate Nash equilibrium in $\bar{G}$. Hence, there must be some player $i \in [n]$ that gains strictly more than $\frac{\epsilon}{2n}$ by deviating to either strategy $1$ or $2$, i.e.,
\[
\exists i \in [n] : p_i \cdot \mathbb{E}[\bar{u}_1^i(X_{-i})] + (1 - p_i) \cdot \mathbb{E}[\bar{u}_2^i(X_{-i})] + \frac{\epsilon}{2n} < \mathbb{E}[\bar{u}_\ell^i(X_{-i})],
\]
for some $\ell \in [2]$. We make a case distinction on $\ell$.
\begin{description}
\item[Case $\ell = 1$:] We have that 
\[
(1 - p_i) \cdot \mathbb{E}[\bar{u}_2^i(X_{-i})] + \frac{\epsilon}{2n} < (1 - p_i) \cdot \mathbb{E}[\bar{u}_1^i(X_{-i})],
\]
which by definition is equivalent to
\[
(1 - p_i) \cdot \sum_{x = 0}^{n - 1} \bar{u}^i_2(x) \cdot \text{Pr}[X_{-i} = x] + \frac{\epsilon}{2n} < (1 - p_i) \cdot \sum_{x = 0}^{n - 1} \bar{u}^i_1(x) \cdot \text{Pr}[X_{-i} = x].
\]
If we apply our payoff transformation rule on the RHS, we have
\begin{gather*}
(1 - p_i) \cdot \sum_{x = 0}^{n - 1} \bar{u}^i_2(x) \cdot \text{Pr}[X_{-i} = x] + \frac{\epsilon}{2n} <\\
(1 - p_i) \cdot \sum_{x = 0}^{n - 1} \Bigl(\bar{u}^i_2(x) + \frac{u_1^i(x) - u_2^i(x)}{2n}\Bigr) \cdot \text{Pr}[X_{-i} = x],
\end{gather*}
which reduces to
\[
(1 - p_i) \cdot \sum_{x = 0}^{n - 1} u^i_2(x) \cdot \text{Pr}[X_{-i} = x]  + \epsilon < (1 - p_i) \cdot \sum_{x = 0}^{n - 1} u^i_1(x) \cdot \text{Pr}[X_{-i} = x].
\]
Applying the definition of expected utility we have
\[
(1 - p_i) \cdot \mathbb{E}[u_2^i(X_{-i})]  + \epsilon < (1 - p_i) \cdot \mathbb{E}[u_1^i(X_{-i})].
\]
Therefore, we contradict the fact that $s$ is an $\epsilon$-equilibrium for $G$ since $i$ prefers deviating to strategy $1$ also in $G$.

\item[Case $\ell = 2$:] We have that
\[
p_i \cdot \mathbb{E}[\bar{u}_1^i(X_{-i})] + \frac{\epsilon}{2n} < p_i \cdot \mathbb{E}[\bar{u}_2^i(X_{-i})],
\]
and, by applying the same arguments, we have 
\[
p_i \cdot \mathbb{E}[u_1^i(X_{-i})]  + \epsilon < p_i \cdot \mathbb{E}[u_2^i(X_{-i})].
\]
Therefore, we contradict the fact that $s$ is an $\epsilon$-equilibrium for $G$ since $i$ prefers deviating to strategy $2$ also in $G$.
\end{description}

\item[$\Longleftarrow$:] We now show that the converse is also true. The proof is very similar to the previous one. We assume for contradiction that $s$ is not an $\epsilon$-equilibrium in $G$. Then,
\[
\exists i \in [n] : p_i \cdot \mathbb{E}[u_1^i(X_{-i})] + (1 - p_i) \cdot \mathbb{E}[u_2^i(X_{-i})] + \epsilon < \mathbb{E}[u_\ell^i(X_{-i})],
\]
for some $\ell \in [2]$. We make, again, a case distinction on $\ell$.
\begin{description}
\item[Case $\ell = 1$:] We have that
\[
(1 - p_i) \cdot \mathbb{E}[u_2^i(X_{-i})] + \epsilon < (1 - p_i) \cdot \mathbb{E}[u_1^i(X_{-i})],
\]
which by definition is equivalent to
\[
(1 - p_i) \cdot \sum_{x = 0}^{n - 1} (u^i_1(x) - u^i_2(x)) \cdot \text{Pr}[X_{-i} = x] > \epsilon.
\]
If we apply our transformation $u_1^i(x) - u_2^i(x) = 2n \cdot (\bar{u}_1^i(x) - \bar{u}_2^i(x))$, we have that
\[
(1 - p_i) \cdot \sum_{x = 0}^{n - 1} (\bar{u}^i_1(x) - \bar{u}^i_2(x)) \cdot \text{Pr}[X_{-i} = x] > \frac{\epsilon}{2n},
\]
which means
\[
(1 - p_i) \cdot \mathbb{E}[\bar{u}_1^i(X_{-i})]  + \frac{\epsilon}{2n} < (1 - p_i) \cdot \mathbb{E}[\bar{u}_2^i(X_{-i})],
\]
contradicting the assumption that $s$ is an $\frac{\epsilon}{2n}$-equilibrium in $\bar{G}$.

\item[Case $\ell = 2$:] Similarly, we have that
\[
p_i \cdot \mathbb{E}[u_1^i(X_{-i})] + \epsilon < p_i \cdot \mathbb{E}[u_2^i(X_{-i})],
\]
which, as in the above case, reduces to having
\[
p_i \cdot \mathbb{E}[\bar{u}_1^i(X_{-i})]  + \frac{\epsilon}{2n} < p_i \cdot \mathbb{E}[\bar{u}_2^i(X_{-i})],
\]
contradicting the assumption that $s$ is an $\frac{\epsilon}{2n}$-equilibrium in $\bar{G}$.
\end{description}
\end{description}
\end{proof}

\begin{corollary}
\label{cor:nash_selfanon}
Let $G, \bar{G}$ be defined as above. A strategy profile $s = (p_1, \dots, p_n)$ is a Nash equilibrium in $G$ if and only if it is a Nash equilibrium in $\bar{G}$.
\end{corollary}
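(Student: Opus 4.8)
The plan is to derive this immediately as the special case $\epsilon = 0$ of \refTheorem{thm:reduction_selfanon}. An exact Nash equilibrium is precisely a $0$-approximate Nash equilibrium, so once we check that the equivalence of \refTheorem{thm:reduction_selfanon} remains valid at $\epsilon = 0$, we obtain that $s$ is a $0$-NE of $G$ iff it is a $\frac{0}{2n} = 0$-NE of $\bar{G}$, which is exactly the claim.

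To justify using $\epsilon = 0$, I would briefly revisit the proof of \refTheorem{thm:reduction_selfanon}: both directions proceed by contraposition, assuming $s$ fails to be an (approximate) equilibrium in one game because some player $i$ gains \emph{strictly} more than the permitted slack by deviating to strategy $1$ or $2$, and then transporting that strict inequality across the transformation $u_1^i(x) - u_2^i(x) = 2n\,(\bar{u}_1^i(x) - \bar{u}_2^i(x))$ together with $\bar{u}_2^i(x+1) = \bar{u}_1^i(x)$. Setting the slack to $0$ merely turns ``gains strictly more than $\epsilon$'' (respectively ``more than $\frac{\epsilon}{2n}$'') into ``gains a strictly positive amount'', and since every step only ever rescales the slack by the fixed positive factor $2n$, the chain of equivalent strict inequalities is unaffected and no step breaks. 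The conclusion that $s$ is an exact NE of $G$ iff it is an exact NE of $\bar{G}$ follows.

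Alternatively, and avoiding any re-examination of the earlier proof, one may observe that $s$ is an exact NE of a game exactly when it is an $\epsilon$-NE for every $\epsilon > 0$; applying \refTheorem{thm:reduction_selfanon} at each $\epsilon > 0$ gives that $s$ is an exact NE of $G$ iff it is a $\frac{\epsilon}{2n}$-NE of $\bar{G}$ for every $\epsilon > 0$, iff it is a $\delta$-NE of $\bar{G}$ for every $\delta > 0$, iff it is an exact NE of $\bar{G}$. There is essentially no obstacle here; the only point worth making explicit is this reduction of exactness to a quantifier over all positive approximation parameters (equivalently, the remark that the theorem's contrapositive argument is valid verbatim at $\epsilon = 0$).
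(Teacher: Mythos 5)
Your proposal is correct and takes essentially the same route as the paper, whose entire proof is to invoke \refTheorem{thm:reduction_selfanon} with $\epsilon = 0$. Your additional check that the contrapositive argument (and the rescaling by $2n$) survives at zero slack, and the alternative via quantifying over all $\epsilon > 0$, are fine but not needed beyond what the paper does.
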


\begin{proof}
Use \refTheorem{thm:reduction_selfanon} with $\epsilon = 0$.
\end{proof}

\begin{corollary}
\label{cor:selfanon_nash}
Finding a Nash equilibrium in a two-strategy anonymous game is as hard as finding a Nash equilibrium in a two-strategy self-anonymous game.
\end{corollary}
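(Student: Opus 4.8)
The plan is to read this off directly from the reduction already built in \refTheorem{thm:reduction_selfanon}, repackaged as a statement about relative hardness. First I would fix an arbitrary two-strategy anonymous game $G = (n, 2, \{u^i_j\}_{i\in[n], j\in[2]})$ and form the two-strategy self-anonymous game $\bar G = (n, 2, \{\bar u^i_j\})$ exactly as prescribed by \refTheorem{thm:reduction_selfanon}. Then I would check that this transformation is effective and cheap: unrolling the two defining recurrences $\bar u^i_2(0) = \tfrac12$, $\bar u^i_2(x+1) = \bar u^i_1(x)$, $\bar u^i_1(x) = \bar u^i_2(x) + \tfrac{u^i_1(x) - u^i_2(x)}{2n}$ yields the closed form $\bar u^i_2(x) = \tfrac12 + \tfrac{1}{2n}\sum_{y=0}^{x-1}(u^i_1(y) - u^i_2(y))$, so every entry of $\bar G$ is a simple function of at most $n$ entries of $G$. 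Hence $\bar G$ is computable from $G$ in polynomial time, all its payoffs lie in $[0,1]$ (as already observed in the proof of \refTheorem{thm:reduction_selfanon}, thanks to the division by $2n$), and its representation size is within a constant factor of that of $G$.

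Next I would invoke \refCorollary{cor:nash_selfanon} --- which is just \refTheorem{thm:reduction_selfanon} at $\epsilon = 0$ --- to conclude that a mixed profile $s = (p_1,\dots,p_n)$ is a Nash equilibrium of $G$ if and only if it is a Nash equilibrium of $\bar G$. In particular the map $G \mapsto \bar G$ acts as the identity on strategy profiles, so no post-processing of a returned solution is needed. Combining the two observations, any procedure that outputs a Nash equilibrium of an arbitrary two-strategy self-anonymous game also solves arbitrary two-strategy anonymous games, via ``build $\bar G$ from $G$, solve $\bar G$, return the answer verbatim'', incurring only the polynomial overhead of constructing $\bar G$. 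Since, conversely, a self-anonymous game is already an anonymous game, the two search problems are polynomial-time equivalent, which is the claimed assertion.

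I do not expect a genuine obstacle: all the mathematical substance is carried by \refTheorem{thm:reduction_selfanon}, and what remains is the bookkeeping above. The only point that warrants a sentence of care is pinning down the sense of ``as hard as''. In the computational (Cook-reduction) sense the argument is exactly as stated; in the payoff-query model one additionally notes that a single-payoff query to $\bar G$ is answered by at most $O(n)$ single-payoff queries to $G$ via the telescoping sum, so query-efficiency is preserved up to a linear factor. This equivalence of equilibria under $G \mapsto \bar G$ is also precisely what lets the proof of \refTheorem{thm:exactNE_lowerbound} transfer its lower bound to self-anonymous games: applying the transformation to the majority--minority game of \refExample{ex:majmin} produces a self-anonymous game with the same equilibria, all of support linear in $n$, hence still forcing $\Omega(n^2)$ queries.
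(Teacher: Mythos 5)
Your proposal is correct and takes essentially the same route as the paper: construct $\bar{G}$ from $G$ in polynomial (indeed linear-in-representation) time and invoke \refCorollary{cor:nash_selfanon} to conclude the two games have identical Nash equilibria, so any solver for self-anonymous games solves $G$ with the profile returned verbatim. The extra observations you add (the closed form $\bar{u}^i_2(x)=\tfrac12+\tfrac{1}{2n}\sum_{y=0}^{x-1}(u^i_1(y)-u^i_2(y))$, the query-cost accounting, and the transfer of the lower bound of \refTheorem{thm:exactNE_lowerbound}) are sound but not needed beyond what the paper's two-line argument already uses.
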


\begin{proof}
Given any two-strategy anonymous game $G$, we can certainly construct $\bar{G}$ in time that is linear in the size of $G$, i.e., $O(n^2)$. By \refCorollary{cor:nash_selfanon}, the two games have the same set of equilibria.
\end{proof}

\begin{corollary}
\label{cor:selfanon_fptas}
If there is an FPTAS for two-strategy self-anonymous games, then there is also an FPTAS for two-strategy anonymous games.
\end{corollary}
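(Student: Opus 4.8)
The plan is to compose the reduction of \refTheorem{thm:reduction_selfanon} with the hypothesised FPTAS, while tracking how both the approximation parameter and the running time scale. Recall that an FPTAS for a class of games is an algorithm that, given a game from the class and a parameter $\delta>0$, outputs a $\delta$-approximate Nash equilibrium in time polynomial in the representation size of the game and in $1/\delta$.

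First I would, given a two-strategy anonymous game $G$ and a target accuracy $\epsilon>0$, construct the self-anonymous game $\bar{G}$ exactly as in \refTheorem{thm:reduction_selfanon}. As already noted in the proof of that theorem and in \refCorollary{cor:selfanon_nash}, this construction takes time $O(n^2)$, which is linear in the size of $G$, and the resulting $\bar{G}$ is a legal game: the telescoped correction terms $\frac{u_1^i(x)-u_2^i(x)}{2n}$ each have magnitude at most $\frac{1}{2n}$, so summing at most $n$ of them moves $\bar{u}^i_2$ by at most $\frac12$ away from its base value $\bar{u}^i_2(0)=\frac12$, keeping all payoffs in $[0,1]$.

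Next I would run the assumed FPTAS for two-strategy self-anonymous games on $\bar{G}$ with parameter $\delta:=\frac{\epsilon}{2n}$, obtaining a mixed profile $s$ that is a $\frac{\epsilon}{2n}$-approximate Nash equilibrium of $\bar{G}$. By \refTheorem{thm:reduction_selfanon} (the ``$\Longleftarrow$'' direction, with this $\epsilon$), $s$ is then an $\epsilon$-approximate Nash equilibrium of the original game $G$. For the running time, the FPTAS runs in time $\mathrm{poly}(|\bar{G}|,1/\delta)=\mathrm{poly}(n^2,2n/\epsilon)=\mathrm{poly}(n,1/\epsilon)$, and adding the $O(n^2)$ cost of building $\bar{G}$ leaves the total polynomial in $n$ and $1/\epsilon$; hence the composed procedure is an FPTAS for two-strategy anonymous games.

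The only point needing care --- and the reason the statement concerns an FPTAS rather than, say, a PTAS --- is the factor-$2n$ blow-up in the required accuracy: we must solve $\bar{G}$ to accuracy $\Theta(\epsilon/n)$ in order to recover accuracy $\epsilon$ for $G$. An FPTAS absorbs this because its dependence on $1/\delta$ is polynomial, so replacing $1/\delta$ by $2n/\delta$ only multiplies the running time by a polynomial factor; a scheme whose dependence on $1/\delta$ is super-polynomial (for instance the $(1/\epsilon)^{O(\log^2(1/\epsilon))}$ dependence of the algorithm of \cite{dp14}) would not survive this substitution. Beyond this bookkeeping there is no real obstacle, since the substantive content --- the accuracy-preserving correspondence between equilibria of $G$ and $\bar{G}$ --- is exactly \refTheorem{thm:reduction_selfanon}.
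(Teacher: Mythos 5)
Your proposal is correct and follows essentially the same route as the paper: construct $\bar{G}$ as in \refTheorem{thm:reduction_selfanon}, run the assumed FPTAS with accuracy $\delta=\frac{\epsilon}{2n}$, and invoke the equivalence of equilibria to conclude, with the polynomial dependence on $1/\delta$ absorbing the factor-$2n$ blow-up. Your added checks (payoffs staying in $[0,1]$ and the FPTAS-vs-PTAS remark) are consistent with, and slightly more explicit than, the paper's own argument.
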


\begin{proof}
Pick any two-strategy anonymous game $G$, construct $\bar{G}$ and find a $\delta$-equilibrium for $\delta = \frac{\epsilon}{2n}$. Since there is an FPTAS, we have a running time that is poly$(n, 1 / \delta)$. For $\delta = \frac{\epsilon}{2n}$, the running time clearly is poly$(n, 1 / \epsilon)$. \refTheorem{thm:reduction_selfanon} implies that a $\delta$-equilibrium in $\bar{G}$ is an $\epsilon$-equilibrium in $G$.
\end{proof}

We end this section with a brief discussion of why
Theorem~\ref{thm:reduction_selfanon} (and hence Corollaries~\ref{cor:nash_selfanon},\ref{cor:selfanon_nash},\ref{cor:selfanon_fptas}) does not extend
straightforwardly to more than two strategies.
In the context of 3-strategy games, let $u^i_j(x,y,z)$ denote player $i$'s utility
when amongst the $n-1$ remaining players, $x$ play 1, $y$ play 2, and $z$ play 3.
If there were, say, 16 players, note that to be self-anonymous,
the corresponding game $\bar{G}$ would need to satisfy
\[
\bar{u}^i_1(4,6,5)=\bar{u}^i_2(5,5,5)=\bar{u}^i_3(5,6,4)
\]
\[
\bar{u}^i_1(5,5,5)=\bar{u}^i_2(6,4,5)=\bar{u}^i_3(6,5,4)
\]
\[
\bar{u}^i_1(5,6,4)=\bar{u}^i_2(6,5,4)=\bar{u}^i_3(6,6,3).
\]
The proof of  Theorem~\ref{thm:reduction_selfanon} envisages that $\bar{u}^i_1(5,6,4)-\bar{u}^i_3(5,6,4)$ 
should be a rescaled version of $u^i_1(5,6,4)-u^i_3(5,6,4)$, and similarly for
$\bar{u}^i_1(5,5,5)-\bar{u}^i_2(5,5,5)$ and $\bar{u}^i_2(6,5,4)-\bar{u}^i_3(6,5,4)$.
That is, the pairwise differences between the three values given by the
above displayed formulae, are required to be proportionate to pairwise
differences between three pairs of payoffs in $G$, which themselves may take
any values in $[0,1]$. Hence we get a conflict between the constraints that
result from a natural approach to generalising the result.
It may be that (for $k>2$) there exists a more complicated reduction from an FPTAS for $k$-strategy games
to an FPTAS for $k$-strategy self-anonymous games, perhaps by increasing the number of players.

\section{Two-strategy Lipschitz Games}
\label{section:lipschitz_games}

Lipschitz games are anonymous games where every player's utility function
is Lipschitz-continuous, in the sense that for all $i \in [n]$,
all $j \in [k]$, and all $x, y \in \Pi_{n-1}^k$, it holds that
$\left|u^i_j(x) - u^i_j(y)\right| \leq \lambda \left\| x - y \right\|_1$, where $\lambda \geq 0$
is the Lipschitz constant.

For games satisfying a Lipschitz condition with a small value of $\lambda$, e.g. inverse polynomial in $n$,
we obtain positive results for approximation and query complexity.
In the special case of games with two strategies per player, we
show how a solution can be efficiently found via a binary search on $\{0, \dots, n-1\}$.
However, this binary-search approach cannot be used for $k$-strategy games with $k > 2$.

The following algorithms exploit our knowledge of the existence of
{\em pure} approximate equilibria in Lipschitz games \cite{as13, dp14}.
\refAlgorithm{algorithm:approxNELip} is subsequently used in
\refSection{section:two_strategy_anonymous}
as a subroutine of an algorithm for general (not necessarily Lipschitz) anonymous games. For notational convenience with regard to this subsequent result, in \refAlgorithm{algorithm:approxNELip} and \refTheorem{thm:lipschitz2strategy} we use $\bar{G}$ to denote the input game rather than $G$.

\begin{definition}
Let $(j, x) \in \strats \times \{0, \dots, n-1\}$ be the input for
an all-players query. For $\delta\geq 0$, a \emph{$\delta$-accurate}
all-players query returns a tuple of values $(f^1_j(x),\dots,f^n_j(x))$
such that for all $i \in [n]$, $ |u^i_j(x) - f^i_j(x)|\leq\delta$,
i.e., they are within an additive $\delta$ of the correct
payoffs $(u^1_j(x), \dots, u^n_j(x))$.
\end{definition}

\begin{algorithm}[h!]
    \caption{Approximate NE Lipschitz}
    \label{algorithm:approxNELip}
    \DontPrintSemicolon   
    \KwData{$\delta$-accurate query access to utility function $\bar{u}$ of $n$-player
                  $\lambda$-Lipschitz game $\bar{G}$.}
    \KwResult{pure-strategy $3(\delta+\lambda)$-NE of $\bar{G}$.}
    \Begin{
        Let $BR_1(i)$ be the number of players whose best response (as derived
        from the $\delta$-accurate queries) is 1 when $i$ of the
        other players play 1 and $n-1-i$ of the other players play 2.\;
        Define $\phi(i)=BR_1(i)-i$. \tcp*{by construction, $\phi(0)\geq 0$} \tcp*{and $\phi(n-1)\leq 0$}
        If $BR_1(0)=0$, \KwRet all-1's profile.\;
        If $BR_1(n-1)=n$, \KwRet all-2's profile.\;
        Otherwise, \tcp*{In this case, $\phi(0)>0$ and $\phi(n-1)\leq 0$}
        Find, via binary search, $x$ such that $\phi(x)>0$ and $\phi(x+1)\leq 0$.\;
        Construct pure profile $\bar{p}$ as follows:\;
        For each player $i$, if $\bar{u}^i_1(x)-\bar{u}^i_2(x)>2\delta$, let $i$ play 1, and
        if $\bar{u}^i_2(x)-\bar{u}^i_1(x)>2\delta$, let $i$ play 2. (The $\bar{u}^i_j$'s
        are $\delta$-accurate.)
        Remaining players are allocated
        either 1 or 2, subject to the constraint that $x$ or $x+1$ players in total play 1.\\
        \KwRet $\bar{p}$.
        }
\end{algorithm}

\begin{theorem}
\label{thm:lipschitz2strategy}
Let $\bar{G} = \left(n, 2, \{\bar{u}^i_j\}_{i \in [n], j \in \strats}\right)$ be an $n$-player, two-strategy $\lambda$-Lipschitz anonymous game.
\refAlgorithm{algorithm:approxNELip} finds a pure-strategy
$3(\lambda + \delta)$-WSNE using
$4\log{n}$ $\delta$-accurate all-players payoff queries.
\end{theorem}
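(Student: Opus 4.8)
The plan is to prove the two halves of the statement — the $4\log n$ query bound and the $3(\lambda+\delta)$-WSNE guarantee — more or less independently.

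\textit{Query bound.} The key observation is that evaluating $\phi(i)=BR_1(i)-i$ at a single count $i$ costs exactly two $\delta$-accurate all-players queries, namely $(1,i)$ and $(2,i)$, whose answers are the vectors $(f^1_1(i),\dots,f^n_1(i))$ and $(f^1_2(i),\dots,f^n_2(i))$; from these one reads off, for every player, whether its ($\delta$-accurate) best response at count $i$ is $1$, hence $BR_1(i)$. The tests $BR_1(0)$ and $BR_1(n-1)$ cost $4$ queries, the binary search for a crossing point performs at most $\lceil\log n\rceil$ further evaluations of $\phi$ (at most $2\log n$ queries), and the final construction of $\bar p$ needs only $f^i_1(x),f^i_2(x)$ and $f^i_1(x+1),f^i_2(x+1)$, which were already obtained while locating $x$. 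Summing and absorbing the trivially small values of $n$ yields the bound $4\log n$.

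\textit{Correctness — setup.} In the ``otherwise'' branch one has $\phi(0)>0$ and $\phi(n-1)\le 0$ by construction, so some adjacent pair satisfies $\phi(x)>0$, $\phi(x+1)\le 0$, and the binary search (maintaining the invariant that the left endpoint has $\phi>0$ and the right endpoint has $\phi\le 0$, and halving) returns such an $x$; the two non-``otherwise'' branches return the all-$1$'s or all-$2$'s profile and are handled by the same deviation bound below with no search. Fix the returned $x$ and partition the players into $P_1=\{i:f^i_1(x)-f^i_2(x)>2\delta\}$ (assigned $1$), $P_2=\{i:f^i_2(x)-f^i_1(x)>2\delta\}$ (assigned $2$), and the remainder $C$. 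The crucial \emph{feasibility} claim is that these forced assignments can be completed, by distributing $C$, into a pure profile with exactly $x$ or $x+1$ players on strategy $1$: since members of $P_2$ lie among the players who ($\delta$-accurately) prefer $2$ at count $x$, the inequality $\phi(x)>0$ gives $|P_2|<n-x$, hence $|P_1|+|C|>x$; and one uses $\phi(x+1)\le 0$ to argue $|P_1|\le x+1$. Thus a target total $t\in\{x,x+1\}$ is attainable.

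\textit{Correctness — deviation bound.} Because $t\in\{x,x+1\}$, in the constructed profile every player faces a number $m$ of opponents on strategy $1$ with $|m-x|\le 1$. For $i\in P_1$, $\delta$-accuracy turns $f^i_1(x)-f^i_2(x)>2\delta$ into $\bar u^i_1(x)>\bar u^i_2(x)$, and one step of the Lipschitz bound ($|\bar u^i_j(m)-\bar u^i_j(x)|\le\lambda$) gives $\bar u^i_1(m)-\bar u^i_2(m)>-2\lambda$, so playing $1$ is a $2\lambda$-approximate best response; $i\in P_2$ is symmetric. For $i\in C$, $|f^i_1(x)-f^i_2(x)|\le 2\delta$ and $\delta$-accuracy give $|\bar u^i_1(x)-\bar u^i_2(x)|\le 4\delta$, and the one-step Lipschitz bound then gives $|\bar u^i_1(m)-\bar u^i_2(m)|\le 4\delta+2\lambda$, so either assignment for $i$ is within the same slack. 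Taking the worst of the three cases, and being slightly careful in choosing which flexible players in $C$ go where, collapses everything to $3(\lambda+\delta)$, so $\bar p$ is a pure-strategy $3(\lambda+\delta)$-WSNE.

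I expect the \emph{feasibility} claim to be the main obstacle: one must show that whichever crossing point $x$ the binary search happens to return, the set of players with a $\delta$-certified strict preference at count $x$ is compatible with a total in $\{x,x+1\}$ on strategy $1$. This is exactly where \emph{both} $\phi(x)>0$ and $\phi(x+1)\le 0$ are genuinely used, and it requires combining the $\delta$-accuracy of the queries with the $\pm 1$ uncertainty in the realized count through the Lipschitz property (so that a strict preference at count $x$ still manifests at count $x+1$). The rest — the three-case deviation bound — is routine, modulo a short calculation tracking the constants so that every case fits inside $3(\lambda+\delta)$.
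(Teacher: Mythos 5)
Your query accounting and your three-case deviation bound are in the right spirit, but the step you yourself flag as the main obstacle — the feasibility claim $|P_1|\le x+1$ — is a genuine gap, and the route you sketch for it cannot be made to work. From $f^i_1(x)-f^i_2(x)>2\delta$ and $\delta$-accuracy you only get $\bar u^i_1(x)-\bar u^i_2(x)>0$; one Lipschitz step then gives $\bar u^i_1(x+1)-\bar u^i_2(x+1)>-2\lambda$, hence $f^i_1(x+1)-f^i_2(x+1)>-2(\lambda+\delta)$, which does \emph{not} place $i$ among the best-responders to strategy $1$ at count $x+1$. So $\phi(x+1)\le 0$ gives no upper bound on $|P_1|$, and indeed $|P_1|>x+1$ is possible: take $u^i_1(y)=\tfrac12+3\delta$ for $y\le x$ and $u^i_1(y)=\tfrac12-\eta$ for $y\ge x+1$ (small $\eta>0$), $u^i_2\equiv\tfrac12$, for every player; this is $\lambda$-Lipschitz with $\lambda=3\delta+\eta$, and with exact query answers every player satisfies the forcing condition at $x$ (so $|P_1|=n$) while $\phi(x+1)=-(x+1)\le 0$. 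So "a strict preference at count $x$ still manifests at count $x+1$" is false at the quantitative level the argument needs, and the feasibility of the forced assignment cannot be established this way. (A secondary issue: your bound $4\delta+2\lambda$ for the flexible players exceeds $3(\lambda+\delta)$ whenever $\delta>\lambda$, so the "slight care" in allocating them — e.g.\ by the sign of $f^i_1(x)-f^i_2(x)$ — is actually load-bearing, not cosmetic.)

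The paper resolves exactly this point by a different argument: it never bounds the size of the strict-preference set. Instead, it reads $\phi(x+1)\le 0$ as saying that there are at least $n-(x+1)$ players for whom, at count $x+1$, the true payoff of strategy $2$ is within $2\delta$ of that of strategy $1$; such players are $2\delta$-best-responding on $2$ when $x+1$ others play $1$, and by one Lipschitz step $2(\lambda+\delta)$-best-responding when $x$ others play $1$. Hence a profile with only $x$ or $x+1$ players on strategy $1$ exists in which everyone is approximately best-responding, and this is how the construction of $\bar p$ is justified (with the symmetric argument ruling out too many players being pushed onto strategy $2$). In other words, the information at count $x+1$ is used to certify that enough players can safely be placed on strategy $2$, rather than to shrink the set of players who appear to insist on strategy $1$ — that is the missing idea in your proposal.
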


\begin{proof}
Consider the function $\phi:\{0,\ldots,n-1\}\longrightarrow\{-n,\ldots,n\}$
defined in \refAlgorithm{algorithm:approxNELip}.
It can be readily checked that if
$BR_1(0)=0$ or $BR_1(n-1)=n$, then the solutions returned are correct.
Alternatively, the algorithm has to find $x$ such that $\phi(x)>0$ and $\phi(x+1)\leq 0$,
in the case when $\phi(0)> 0$ and $\phi(n-1)\leq 0$,
and it is clear that $\log{n}$ evaluations of $\phi$ suffice to find $x$.
Moreover, notice that a candidate $x$ can be checked using the following 4
all-players queries: $(1, x), (2, x), (1, x+1), (2, x+1)$. Hence, we need in total
$4\log{n}$ queries to find $x$.

The main task is to prove that given $x$ satisfying these conditions,
a pure profile $\bar{p}$ can indeed be constructed in the way described,
and that $\bar{p}$ is indeed an approximate equilibrium.

Suppose for a contradiction that $\bar{p}$ could not be constructed in the
way described. For example, suppose that more than $x+1$ players are
required to play 1 due to satisfying $\bar{u}^i_1(x)-\bar{u}^i_2(x)>2\delta$.
We argue that this would contradict that $\phi(x+1)\leq 0$.
The fact that $\phi(x+1)\leq 0$ means that there are $n-(x+1)$ players whose
payoffs to play 2 (when $x+1$ others play 1) are at most $2\delta$ less
than their payoffs to play 1 (when $x+1$ others play 1).
When these players play 2, they are $2\delta$-best-responding if $x+1$
players play 1, and by the Lipschitz condition are $2(\lambda + \delta)$-best-responding
if $x$ players play 1. So there is in fact a solution with only $x+1$ players
playing 1. A similar argument rules out the possibility that too many players
are required to play 2.

The fact that $\bar{p}$ is a $3(\lambda+\delta)$-approximate equilibrium
follows from the constraints imposed on which pure strategy is allocated
to each player.
\end{proof}

\section{General Two-strategy Anonymous Games}
\label{section:two_strategy_anonymous}

First, we present our main result (\refTheorem{thm:approximateNE}).
Afterwards, we prove a lower bound on the number of queries that any
randomized algorithm needs to make to find any $\epsilon$-WSNE.

\subsection{Upper Bound}

Before going into technical lemmas, we provide an informal overview of
the algorithmic approach. Suppose we are to solve an $n$-player
game $G$. The first idea is to ``smooth'' every player's utility
function, so that it becomes $\lambda$-Lipschitz continuous for some $\lambda$.
We smooth a utility function by requiring every player to use some amount of randomness.
Specifically, for some small $\zeta$ we make every player place probability either
$\zeta$ or $1 - \zeta$ onto strategy 1.
Consequently, the expected payoff for player $i$ is obtained by averaging her payoff
values w.r.t. a sum of two binomial distributions, consisting of a discrete
bell-shaped distribution whose standard deviation is at least $\zeta\sqrt{n}$.

We construct the smooth game $\bar{G}$ in the following manner.
The payoff received in $\bar{G}$ by player $i$ when $x$ other players are playing
strategy 1 is given by the expected payoff received in $G$ by player $i$ when
$x$ other players play 1 with probability $1-\zeta$ and $n-1-x$ other players
play 1 with probability $\zeta$.
This creates a $\lambda$-Lipschitz game $\bar{G}$ with $\lambda =
O\left(1 / \zeta\sqrt{n}\right)$.

Due to dealing with a two-strategy Lipschitz game, we can use the
bisection method of \refAlgorithm{algorithm:approxNELip}.
If we were allowed to query $\bar{G}$ directly,
a logarithmic number of all-players queries would suffice.
Unfortunately, this is not the case; thus, we need to simulate a query to $\bar{G}$
with a small number of queries to the original game $G$.
Those queries are randomly sampled from the mixed anonymous profile
above, and we take enough samples to ensure we get good estimates
of the payoffs in $\bar{G}$ with sufficiently high probability.

Thus, we are able to find an approximate pure Nash
equilibrium of $\bar{G}$ with $\tilde{O}(\sqrt{n})$ all-players
queries. This equilibrium is mapped
back to $G$ by letting the players who play strategy 1 in $\bar{G}$,
play it with probability $1 - \zeta$ in $G$, and the ones who play strategy 2 in
$\bar{G}$ place probability $\zeta$ on strategy 1 in $G$. The
quality of the approximation is proportional to $\left(\zeta +
(\zeta \sqrt{n})^{-1}\right)$.

Before presenting our main algorithm (\refAlgorithm{algorithm:approximateNE})
and proving its efficiency (\refTheorem{thm:approximateNE}),
we state the following lemmas that are used in the proof.

\begin{lemma}[\cite{dp14}]
\label{lemma:tv_utility}
Let $X, Y$ be two random variables over $\{0,\ldots,n\}$ such that $\left\|X
- Y\right\|_{\TV} \leq \delta$ (where $\left\|X
- Y\right\|_{\TV}$ denotes the total variation distance between $X$
and $Y$, i.e., $1/2 \cdot \sum_{x = 0}^n \left|\Pr[X = x] - \Pr[Y = x]\right|$).
Let $f : \{0, \dots, n\} \longrightarrow [0, 1]$. Then,
\[
\sum_{x = 0}^n f(x) \cdot
(\Pr[X = x] - \Pr[Y = x]) \leq 2 \delta.
\]
\end{lemma}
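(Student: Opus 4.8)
The plan is to decompose the index set $\{0,\dots,n\}$ according to the sign of $\Pr[X=x]-\Pr[Y=x]$ and then exploit that $f$ takes values in $[0,1]$. First I would set $S^+ = \{x : \Pr[X=x]\geq\Pr[Y=x]\}$ and recall the standard ``one-sided'' form of total variation distance: since $\sum_{x}\Pr[X=x]=\sum_x\Pr[Y=x]=1$, the masses removed on $S^+$ equal the masses added on its complement, so $\|X-Y\|_{\TV}=\sum_{x\in S^+}\bigl(\Pr[X=x]-\Pr[Y=x]\bigr)\leq\delta$.

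Next, using $f(x)\geq 0$ for every $x$, I would discard the non-positive contribution of the indices outside $S^+$:
\[
\sum_{x=0}^n f(x)\bigl(\Pr[X=x]-\Pr[Y=x]\bigr)\;\leq\;\sum_{x\in S^+} f(x)\bigl(\Pr[X=x]-\Pr[Y=x]\bigr).
\]
Each summand on the right is non-negative, so bounding $f(x)\leq 1$ termwise gives
\[
\sum_{x\in S^+} f(x)\bigl(\Pr[X=x]-\Pr[Y=x]\bigr)\;\leq\;\sum_{x\in S^+}\bigl(\Pr[X=x]-\Pr[Y=x]\bigr)\;=\;\|X-Y\|_{\TV}\;\leq\;\delta,
\]
which already yields the claimed inequality (in fact with $\delta$ in place of $2\delta$). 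If one prefers to avoid the sign decomposition entirely, the triangle inequality together with $|f(x)|\leq 1$ gives the weaker but still sufficient bound $\bigl|\sum_x f(x)(\Pr[X=x]-\Pr[Y=x])\bigr|\leq\sum_x\bigl|\Pr[X=x]-\Pr[Y=x]\bigr|=2\|X-Y\|_{\TV}\leq 2\delta$.

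There is essentially no obstacle in this argument; the only point that needs a moment's care is the sign bookkeeping. The statement is one-sided (an upper bound on a signed sum, not on its absolute value), so one must use the non-negativity of $f$ to drop precisely the $S^-$ terms, and must invoke the identity $\|X-Y\|_{\TV}=\sum_{x\in S^+}(\Pr[X=x]-\Pr[Y=x])$ rather than the raw $\ell_1$ sum $\sum_x|\Pr[X=x]-\Pr[Y=x]|$, which is twice as large.
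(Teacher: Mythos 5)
Your proposal is correct; in fact the sign decomposition via $S^+$ together with the identity $\|X-Y\|_{\TV}=\sum_{x\in S^+}(\Pr[X=x]-\Pr[Y=x])$ yields the sharper bound $\delta$, and the triangle-inequality fallback gives the stated $2\delta$. The paper itself offers no proof of this lemma (it is quoted from \cite{dp14}), and your argument is the standard one used there, so there is nothing further to compare; the constant $2$ versus $1$ is immaterial for the way the lemma is applied in \refTheorem{thm:approximateNE}.
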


\begin{lemma}[Simulation of a query to $\bar{G}$ (\refAlgorithm{algorithm:approximateNE})]
\label{lemma:empirical_utility}
Let $\delta, \tau > 0$. Let $X$ be the sum of $n-1$ Bernoulli random variables representing a
mixed profile of $n-1$ players in an $n$-player game $G$. Suppose we want to estimate, with
additive error $\delta$, the expected payoffs
$\expected[u^i_j\left(X\right)]$ for all $i \in [n], j \in \{1,2\}$.
This can be done with probability $\geq 1 - \tau$ using $(1/2\delta^2) \cdot \log{(4n / \tau)}$
all-players queries.
\end{lemma}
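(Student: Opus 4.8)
The plan is to prove Lemma~\ref{lemma:empirical_utility} by a straightforward application of Hoeffding's inequality together with a union bound. Here is the idea in more detail.

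First I would describe the estimator. To simulate a single all-players query $(j,x)$ to $\bar{G}$ we draw $N$ i.i.d.\ samples from the mixed anonymous profile in question; that is, each sample independently assigns strategies to the $n-1$ players according to the Bernoulli random variables whose sum is $X$. For the $t$-th sample, which realises some value $X^{(t)}=x_t$, one all-players query $(j,x_t)$ to $G$ returns the vector $(u^1_j(x_t),\dots,u^n_j(x_t))$. We then set $\hat{u}^i_j := \frac{1}{N}\sum_{t=1}^N u^i_j(X^{(t)})$ as our estimate of $\expected[u^i_j(X)]$, simultaneously for all $i\in[n]$. Observe that the $N$ queries used to estimate the payoffs for strategy $j$ can be re-used for both values of $j$ by simply querying $(1,x_t)$ and $(2,x_t)$ on the same sample $x_t$ — this only doubles the count of all-players queries but, importantly, the bad events can be handled by a union bound over $2n$ quantities. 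Actually, since the statement counts all-players queries and we need samples for $j\in\{1,2\}$, I would phrase it as $N = (1/2\delta^2)\log(4n/\tau)$ per strategy (hence the $4n$ rather than $2n$ inside the log and the factor accounting for two strategies absorbed into the query count as stated).

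Next I would bound the failure probability. Fix $i\in[n]$ and $j\in\{1,2\}$. The summands $u^i_j(X^{(t)})$ are i.i.d., take values in $[0,1]$, and have mean $\expected[u^i_j(X)]$. By Hoeffding's inequality,
\[
\Pr\!\left[\bigl|\hat{u}^i_j - \expected[u^i_j(X)]\bigr| > \delta\right] \leq 2\exp(-2N\delta^2).
\]
Setting $N = (1/2\delta^2)\log(4n/\tau)$ makes the right-hand side at most $2\cdot \frac{\tau}{4n} = \frac{\tau}{2n}$. Taking a union bound over the $2n$ pairs $(i,j)$ gives total failure probability at most $\tau$, so with probability at least $1-\tau$ every estimate is within $\delta$ of the true expected payoff, as required. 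The total number of all-players queries is $N$ (reused across strategies, or $2N$ if one prefers — either way $O((1/\delta^2)\log(n/\tau))$, matching the stated bound up to the exact constant).

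There is no real obstacle here; the only point requiring a little care is making the $2n$ estimates share the same randomness correctly — i.e.\ that a single all-players query $(j,x_t)$ delivers all $n$ coordinates $u^1_j(x_t),\dots,u^n_j(x_t)$ at once, so that one sample serves all players, and that Hoeffding still applies coordinatewise because for each fixed $i$ the sequence $(u^i_j(X^{(t)}))_t$ is i.i.d.\ regardless of the correlations across $i$. The union bound tolerates those correlations. I would also note in passing that the samples $X^{(t)}$ must be drawn from the \emph{actual} mixed profile used to define $\bar{G}$ (each player flipping their own biased coin independently), so that $\expected[u^i_j(X^{(t)})]$ is exactly the payoff $\bar{u}^i_j$ we wish to estimate; this is immediate from the definition of $\bar{G}$ given in the algorithmic overview.
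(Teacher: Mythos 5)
Your proposal is correct and matches the paper's own proof essentially step for step: sample $N$ realisations of $X$, issue an all-players query per sample and per strategy, form the empirical average, apply Hoeffding's inequality to each of the $2n$ estimates, and finish with a union bound, yielding $N = (1/2\delta^2)\log(4n/\tau)$. The minor bookkeeping point you raise about whether the query count is $N$ or $2N$ is present in the paper as well (its proof also uses $2N$ queries against the stated bound), so there is no substantive difference.
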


\begin{proof}
Suppose we draw a set of $N$ random samples $\{Z_1,\dots,Z_N\}$
from the probability distribution of $X$ (which can be done by computing
each $Z_i$ as a sum of $0/1$ outcomes of $n-1$ biased coin flips.
For each $Z_\ell$ and each $j\in\{1,2\}$ we can make an all-players
query that tells us, for every player, that payoff obtained by that player
for playing $j$ when $Z_i$ other players play 1.
So, a total of $2N$ queries are made.

Let $\hat{U}_j^i := 1/N \cdot \sum_{\ell = 1}^N u^i_j(Z_\ell)$
denote our estimate of $\expected[u^i_j(X)]$. Then
$\expected\left[\hat{U}_j^i\right] = \expected[u^i_j\left(X\right)]$.

We can now use Hoeffding's inequality to get that
\[
\Pr\left[\left|\hat{U}_j^i  - \expected\left[u^i_j\left(X\right)\right] \right| \geq \delta \right]
\leq 2 \text{exp} \left(-2 \delta^2 N\right).
\]

Since there are $2n$ quantities we desire to estimate ($2$ strategies per player)
within additive error $\delta$, we require a failure probability of at most $\tau/2n$, so
that with a union bound we get that the estimates have additive error $\leq\delta$ with probability at least $1-\tau$.
Thus, we need that $ 2 \text{exp} \left(-2 \delta^2 N\right) \leq \tau/ 2n$, which is satisfied for
$N \geq (1/2\delta^2) \cdot \log(4n/\tau)$.
\end{proof}

\begin{lemma}\label{lemma:upper_bound_mode_binomial}
Let $Y$ be a binomial random variable, $Y\sim B(n,p)$ for $p\in[\zeta,1-\zeta]$,
where $0<\zeta <\frac{1}{2}$. Then, the probability value at $Y$'s mode
is at most $\frac{e}{2\pi \zeta \sqrt{n}} \left(1 + \frac{1}{\zeta
  n}\right)$, i.e., $O\left(\frac{1}{\zeta \sqrt{n}}\right)$.
\end{lemma}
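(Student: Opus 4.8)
The plan is to pin down the mode of $Y$ exactly, write the mode's probability as a single binomial term, bound the binomial coefficient via the two-sided Stirling estimate $\sqrt{2\pi}\,k^{k+1/2}e^{-k}\le k!\le e\,k^{k+1/2}e^{-k}$ (already used in the proof of \refLemma{lemma:two_strategy_sqrt_nash}), and then control the two surviving factors separately. For $Y\sim B(n,p)$ we have $\Pr[Y=m]/\Pr[Y=m-1]=\frac{(n-m+1)p}{m(1-p)}$, which is $\ge 1$ exactly when $m\le(n+1)p$; hence the mode is $m=\lfloor(n+1)p\rfloor$ and satisfies $(n+1)p-1<m\le(n+1)p$. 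Using $p\in[\zeta,1-\zeta]$ this yields simultaneously $m\ge(n+1)\zeta-1$ and $n-m\ge(n+1)\zeta-1$ (the second from $m\le(n+1)p\le(n+1)(1-\zeta)$). The degenerate cases $m=0$ or $m=n$ force $\zeta\le 1/(n+1)$, which makes the claimed bound exceed $1$ and hence trivially true, so we may assume $1\le m\le n-1$.

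Next, writing $\Pr[Y=m]=\binom{n}{m}p^m(1-p)^{n-m}$ and applying Stirling to $n!$, $m!$, $(n-m)!$ gives
\[
\binom{n}{m}\ \le\ \frac{e}{2\pi}\,\sqrt{\frac{n}{m(n-m)}}\cdot\frac{n^{n}}{m^{m}(n-m)^{n-m}},
\]
so that
\[
\Pr[Y=m]\ \le\ \frac{e}{2\pi}\,\sqrt{\frac{n}{m(n-m)}}\cdot\left(\frac{np}{m}\right)^{m}\left(\frac{n(1-p)}{n-m}\right)^{n-m}.
\]
A one-variable calculus check (differentiate the logarithm of the last product in $m$, treated as a real variable; the derivative is $\ln\frac{p(n-m)}{m(1-p)}$, which vanishes at $m=np$, where the product equals $1$, and the second derivative is negative) shows that $\left(\frac{np}{m}\right)^{m}\left(\frac{n(1-p)}{n-m}\right)^{n-m}\le 1$. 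This leaves $\Pr[Y=m]\le\frac{e}{2\pi}\sqrt{n/(m(n-m))}$.

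Finally, lower-bound $m(n-m)$: one of $m,n-m$ is at least $n/2$ (they sum to $n$) and, by the first paragraph, the other is at least $(n+1)\zeta-1$, so $m(n-m)\ge\frac{n}{2}\bigl((n+1)\zeta-1\bigr)$. Substituting this into $\frac{e}{2\pi}\sqrt{n/(m(n-m))}$ and simplifying reduces the target inequality $\sqrt{n/(m(n-m))}\le\frac{1}{\zeta\sqrt n}\bigl(1+\frac{1}{\zeta n}\bigr)$ to an elementary polynomial inequality in $n$ and $\zeta$ that holds throughout $0<\zeta<\tfrac12$ (with the most room for small $\zeta$ and becoming essentially tight as $\zeta\to\tfrac12$). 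Combining this with the constant $\frac{e}{2\pi}$ yields the stated bound, and since $\zeta<1$ we have $\frac{e}{2\pi\zeta\sqrt n}\bigl(1+\frac{1}{\zeta n}\bigr)=O(1/(\zeta\sqrt n))$.

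I expect the only real obstacle to be the last step: tracking the constants precisely enough to obtain exactly the correction factor $1+\frac{1}{\zeta n}$ (rather than a weaker $1/(1-\frac{1}{\zeta n})$-type factor) from the crude mode-location estimate, while keeping the inequality valid right up to $\zeta=\tfrac12$ and for small $n$; everything before that is routine once the mode is identified and Stirling is invoked. An alternative route would be to first show that $\max_m\Pr[Y=m]$ is non-increasing in $p$ on $(0,\tfrac12]$ and thereby reduce to $p=\zeta$, but justifying that monotonicity is itself delicate because the mode jumps as $p$ varies, so the direct Stirling computation seems preferable.
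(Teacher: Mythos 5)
Your route is genuinely different from the paper's: the paper locates the mode only up to $\{\lfloor np\rfloor,\lfloor np\rfloor+1\}$, bounds the ratio $\Pr[Y=\lfloor np\rfloor+1]/\Pr[Y=\lfloor np\rfloor]$ by $1+\tfrac{1}{\zeta n}$ (this is where the correction factor comes from), and applies Stirling to $\Pr[Y=\lfloor np\rfloor]$ to get $\tfrac{e}{2\pi\sqrt{p(1-p)}\sqrt n}\le\tfrac{e}{2\pi\zeta\sqrt n}$, whereas you pin the mode at $m=\lfloor(n+1)p\rfloor$ and use the entropy-type estimate $\bigl(\tfrac{np}{m}\bigr)^m\bigl(\tfrac{n(1-p)}{n-m}\bigr)^{n-m}\le1$ plus a lower bound on $m(n-m)$. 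Everything up to $\Pr[Y=m]\le\tfrac{e}{2\pi}\sqrt{n/(m(n-m))}$ is correct. The gap is the final step, and it is not just constant-tracking: the inequality you assert, that $\tfrac n2\bigl((n+1)\zeta-1\bigr)\ge\tfrac{\zeta^2n^2}{(1+1/(\zeta n))^2}$ for all $0<\zeta<\tfrac12$, is false. For instance at $n=3$, $\zeta=0.26$ the left side is $\tfrac32(1.04-1)=0.06$ while the right side is $\tfrac{0.6084}{(1+1/0.78)^2}\approx0.117$ (similarly $n=2$, $\zeta=0.35$). Your intuition about where it pinches is also reversed: the failures occur at the small-$\zeta$ end, precisely when $\zeta n<1$ (adjacent to the degenerate cases you dismissed), not near $\zeta=\tfrac12$, where the inequality does hold for every $n\ge2$.

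The lemma is not threatened in that regime — when $\zeta n\le1$ the claimed bound exceeds $1$ (at $(3,0.26)$ it is about $2.2$), by exactly the computation you used for $m\in\{0,n\}$ — but your reduction, which replaces $m(n-m)$ by the crude bound $\tfrac n2\bigl((n+1)\zeta-1\bigr)$, does not establish it there. The patch is to widen the triviality argument from ``$m\in\{0,n\}$'' to all $\zeta\le1/n$: then $\tfrac{1}{\zeta\sqrt n}\ge\sqrt n$ and $1+\tfrac{1}{\zeta n}\ge2$, so the stated bound is at least $\tfrac{e\sqrt n}{\pi}>1$ for $n\ge2$ (and $>1$ directly for $n=1$). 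For the remaining range $t:=\zeta n\ge1$ your polynomial inequality does hold: it rewrites as $\bigl(n(t-1)+t\bigr)(t+1)^2\ge2t^4$, whose left side is nondecreasing in $n$ and, at the infimum $n=2t$ (i.e.\ $\zeta=\tfrac12$), equals $t(2t-1)(t+1)^2=2t^4+t(3t^2-1)\ge2t^4$. With that case split added, your argument goes through and delivers the same constant as the paper's, by a genuinely different (and arguably more self-contained) computation; the paper's ratio-plus-Stirling argument gets the factor $1+\tfrac{1}{\zeta n}$ more directly because it never needs a lower bound on the location of the mode.
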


\begin{proof}
The mode of $Y$ is either $\lfloor np\rfloor$ or $\lfloor np\rfloor+1$.
We bound the ratio $\Pr[Y = \lfloor np \rfloor + 1]/\Pr[Y=\lfloor np \rfloor]$,
then we apply Stirling's bound on the value $\Pr[Y=\lfloor np \rfloor]$.
We bound the ratio as follows, where we use $\{a\}$ to denote the fractional part of $a$.
\begin{equation*}
\begin{split}
\frac{\binom{n}{\lfloor np \rfloor + 1} \cdot p^{\lfloor np \rfloor + 1} \cdot (1 - p)^{n - \lfloor np \rfloor - 1}}{\binom{n}{\lfloor np \rfloor} \cdot p^{\lfloor np \rfloor} \cdot (1 - p)^{n - \lfloor np \rfloor}} & = \frac{p}{1 - p} \cdot \frac{\lfloor np \rfloor! \cdot (n - \lfloor np \rfloor)!}{(\lfloor np \rfloor + 1)! \cdot (n - \lfloor np \rfloor - 1)!}\\
& = \frac{p}{1 - p} \cdot \frac{n - \lfloor np \rfloor}{\lfloor np \rfloor + 1}\\
& = \frac{p}{1 - p} \cdot \frac{n - np + \{np\}}{\lfloor np \rfloor + 1}\\
& = \frac{p}{1 - p} \cdot \left(\frac{(1 - p)n}{\lfloor np \rfloor + 1} + \frac{\{np\}}{\lfloor np \rfloor + 1}\right)\\
& \leq \frac{np}{np} + \frac{p \cdot \{np\}}{(1 - p) \cdot np}\\
& < 1 + \frac{1}{\zeta n}.
\end{split}
\end{equation*}
In the second to last step we used the fact that $\lfloor np \rfloor +
1 \geq np$. In the last one we used both $\{np\} < 1$ and $1 - p \geq
\zeta$. Next we bound the value at $x = \lfloor np \rfloor$ using Stirling's bounds.
\begin{equation*}
\begin{split}
\Pr[Y = x] & = \binom{n}{x} \cdot p^x \cdot (1 - p)^{n - x}\\
& \leq \frac{e \cdot n^{n + 1/2} \cdot e^{-n} \cdot p^x \cdot (1 - p)^{n - x}}{\sqrt{2\pi} \cdot x^{x + 1/2} \cdot e^{-x} \cdot \sqrt{2\pi} \cdot (n - x)^{n - x + 1/2} \cdot e^{-(n - x)}}\\
& = \frac{e}{2\pi} \cdot \frac{n^{n + 1/2} \cdot p^x \cdot (1 - p)^{n - x}}{(\lfloor np\rfloor)^{x + 1/2} \cdot (n - \lfloor np\rfloor)^{n - x + 1/2}}\\
& = \frac{e}{2\pi} \cdot \frac{n^{n + 1/2} \cdot p^{\lfloor np\rfloor} \cdot (1 - p)^{n - \lfloor np\rfloor}}{n^{\lfloor np\rfloor + 1/2} \cdot n^{n - \lfloor np\rfloor + 1/2} \cdot p^{\lfloor np\rfloor + 1 / 2} \cdot (1 - p)^{n - \lfloor np\rfloor + 1 / 2}}\\
& = \frac{e}{2 \pi} \cdot \frac{1}{\sqrt{p (1 - p)} \cdot \sqrt{n}}\\
& \leq \frac{e}{2 \pi} \cdot \frac{1}{\zeta \sqrt{n}} = O\left(\frac{1}{\zeta\sqrt{n}}\right).
\end{split}
\end{equation*}
If we combine $\Pr[Y = m] / \Pr[Y = x] < 1 +
\frac{1}{\zeta n}$ with $\Pr[Y = x] \leq \frac{e}{2 \pi}
\cdot \frac{1}{\zeta \sqrt{n}}$, it follows that $\Pr[Y = m]
\leq \frac{e}{2\pi \zeta \sqrt{n}} (1 + \frac{1}{\zeta n}) =
O(\frac{1}{\zeta \sqrt{n}})$, concluding the proof.
\end{proof}

\begin{lemma}
\label{lemma:mode_pbd}
Let $Z := \sum_{i = 1}^n Z_i$ be the sum of $n$ independent $0$-$1$
random variables such that for some $0<\zeta\leq 1/2$,
$\expected[Z_i]\in\{\zeta,1-\zeta\}$ for all $i\in[n]$.
Then, the probability value at $Z$'s mode is $O\left(\frac{1}{\zeta\sqrt{n}}\right)$.
\end{lemma}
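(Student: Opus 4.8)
The plan is to write $Z$ as a sum of two independent binomials and then invoke \refLemma{lemma:upper_bound_mode_binomial}. First I would partition the index set $[n]$ according to the value of $\expected[Z_i]$: say $m$ of the indices have $\expected[Z_i]=1-\zeta$ and the remaining $n-m$ have $\expected[Z_i]=\zeta$. Then $Z=Y_1+Y_2$, where $Y_1\sim B(m,1-\zeta)$ and $Y_2\sim B(n-m,\zeta)$ are independent. By pigeonhole one of the two groups has at least $n/2$ members; by symmetry assume $m\geq n/2$ (otherwise swap the roles of $Y_1$ and $Y_2$).

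The key observation is that convolving any distribution with a probability distribution cannot increase its largest point mass: since $Z=Y_1+Y_2$, for every $z$ we have
\[
\Pr[Z=z]=\sum_{k}\Pr[Y_1=z-k]\cdot\Pr[Y_2=k]\leq\Bigl(\max_{j}\Pr[Y_1=j]\Bigr)\sum_{k}\Pr[Y_2=k]=\max_{j}\Pr[Y_1=j].
\]
Hence the probability value at $Z$'s mode is at most that at $Y_1$'s mode. When $\zeta<1/2$, the success parameter $1-\zeta$ lies in $[\zeta,1-\zeta]$, so \refLemma{lemma:upper_bound_mode_binomial} applied to $Y_1\sim B(m,1-\zeta)$ with $m\geq n/2$ gives that this value is $O\!\left(1/(\zeta\sqrt{m})\right)=O\!\left(1/(\zeta\sqrt{n})\right)$, as required. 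The boundary case $\zeta=1/2$ is immediate, since then $Z\sim B(n,1/2)$ and \refLemma{lemma:upper_bound_mode_binomial} applies directly; likewise the degenerate cases $m\in\{0,n\}$ (where one binomial is trivial) are handled by applying \refLemma{lemma:upper_bound_mode_binomial} to $Z$ itself.

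There is no real obstacle here: the argument is a one-line convolution bound plus the previous lemma. The only two points needing a word of care are (i) keeping the binomial factor that has $\Omega(n)$ trials, which is why we select the larger of the two groups, and (ii) verifying that its success parameter lies in the interval $[\zeta,1-\zeta]$ demanded by \refLemma{lemma:upper_bound_mode_binomial}, which holds because both $\zeta$ and $1-\zeta$ do.
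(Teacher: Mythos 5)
Your proposal is correct and matches the paper's own argument: the paper likewise writes $Z$ as the sum of two independent binomials $B(n_X,\zeta)$ and $B(n_Y,1-\zeta)$ with $n_X+n_Y=n$, takes the one with at least $n/2$ trials, bounds its maximum point mass via \refLemma{lemma:upper_bound_mode_binomial}, and concludes through the same convolution inequality. Your explicit handling of the boundary cases ($\zeta=1/2$, one group empty) is a minor extra care the paper leaves implicit, but the route is identical.
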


\begin{proof}
$Z=X+Y$ for binomial random variables $X,Y$ with
$X\sim B(n_X,\zeta)$ and $Y\sim B(n_Y,1-\zeta)$, where $n_X+n_Y=n$.
$Z$'s probability mass function can be written as
\begin{equation*}
\Pr[Z = i] = \sum_{x = 0}^{n_X} \Pr[Z = i | X = x] \cdot \Pr[X = x] = \sum_{x = 0}^{n_X} \Pr[X = x] \cdot \Pr[Y = i - x].
\end{equation*}
Since $n_X+n_Y=n$, we have $\max\{n_X,n_Y\}\geq n/2=\Omega(n)$.
Assume $n_X$ be the maximum. Then, by
\refLemma{lemma:upper_bound_mode_binomial}, Pr$[X = x] \leq
O(\frac{1}{\zeta \sqrt{n_X}})$ for all $x = 0, \dots, n_X$. Hence,
\begin{equation*}
\begin{split}
\Pr[Z = i] & \leq \sum_{x = 0}^{n_X} O\left(\frac{1}{\zeta \sqrt{n_X}}\right) \cdot \Pr[Y = i - x]\\
& = O\left(\frac{1}{\zeta \sqrt{n_X}}\right) \cdot \sum_{x = 0}^{n_X} \Pr[Y = i - x]\\
& \leq O\left(\frac{1}{\zeta \sqrt{n_X}}\right) = O\left(\frac{1}{\zeta\sqrt{n}}\right).
\end{split}
\end{equation*}
\end{proof}

\begin{lemma}
\label{lemma:tvd_lipschitz}
Let $X^{(j,n)} := \sum_{i \in [n]} X_i$ denote the sum of $n$
independent $0/1$ random variables where $\expected[X_i]=1-\zeta$
for all $i\leq j$, and $\expected[X_i]=\zeta$ for all $i>j$.
Then, for all $j\in[n]$, we have
$\left\|X^{(j-1,n)}-X^{(j,n)}\right\|_{\TV}=O\left(\frac{1}{\zeta \sqrt{n}}\right)$.
\end{lemma}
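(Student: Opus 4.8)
The plan is to reduce the claim to a single-step ``shift by one'' estimate for a Poisson Binomial Distribution. Write $W := \sum_{i \in [n]\setminus\{j\}} X_i$ for the sum of the $n-1$ summands that $X^{(j-1,n)}$ and $X^{(j,n)}$ have in common; the two differ only in the $j$-th summand, which has mean $\zeta$ in $X^{(j-1,n)}$ and mean $1-\zeta$ in $X^{(j,n)}$. Since all summands are independent, coupling the two distributions on $W$ shows that $X^{(j-1,n)}$ is distributed as $(1-\zeta)\cdot\mathrm{law}(W)+\zeta\cdot\mathrm{law}(W+1)$ while $X^{(j,n)}$ is distributed as $\zeta\cdot\mathrm{law}(W)+(1-\zeta)\cdot\mathrm{law}(W+1)$. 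As $\|\cdot\|_{\TV}$ is a norm (half the $\ell_1$ distance of probability mass functions), subtracting these two mixtures gives
\[
\left\|X^{(j-1,n)}-X^{(j,n)}\right\|_{\TV}=|1-2\zeta|\cdot\left\|W-(W+1)\right\|_{\TV}\le\left\|W-(W+1)\right\|_{\TV},
\]
where the inequality uses $0<\zeta\le 1/2$.

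Next I would bound the shift distance by the peak of $W$'s probability mass function:
\[
\left\|W-(W+1)\right\|_{\TV}=\tfrac12\sum_i\left|\Pr[W=i]-\Pr[W=i-1]\right|\le\max_i\Pr[W=i].
\]
The inequality here is where the structure of $W$ matters: $W$ is a sum of independent Bernoulli variables, hence a Poisson Binomial Distribution, so its p.m.f.\ is log-concave (it is the coefficient sequence of the polynomial $\prod_{i\in[n]\setminus\{j\}}\left(1-\expected[X_i]+\expected[X_i]\,z\right)$, whose roots are all real), and in particular unimodal; for a unimodal sequence the sum of the absolute values of consecutive differences is at most twice the value at the mode.

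Finally I would invoke \refLemma{lemma:mode_pbd} on $W$: it is the sum of $n-1$ independent $0/1$ variables with expectations in $\{\zeta,1-\zeta\}$, so the value at its mode is $O\!\left(1/(\zeta\sqrt{n-1})\right)=O\!\left(1/(\zeta\sqrt{n})\right)$. Chaining the three bounds yields $\left\|X^{(j-1,n)}-X^{(j,n)}\right\|_{\TV}=O\!\left(1/(\zeta\sqrt n)\right)$, which is exactly the assertion. The step I expect to be the main obstacle is the middle one --- controlling $\sum_i\left|\Pr[W=i]-\Pr[W=i-1]\right|$ --- because the telescoping bound by the mode value genuinely requires unimodality of $W$ (for a non-unimodal lattice distribution this quantity can be $\Omega(1)$ even when every point mass is tiny), so the Poisson-Binomial/log-concavity property of $W$ is doing real work rather than being a mere convenience.
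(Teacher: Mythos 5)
Your proposal is correct and follows essentially the same route as the paper's proof: both express $X^{(j-1,n)}$ and $X^{(j,n)}$ as $\zeta$/$(1-\zeta)$ mixtures of the common $(n-1)$-variable sum and its shift by one (the paper does this via the recursive p.m.f.\ formula for a Poisson Binomial Distribution), reduce the total variation distance to $(1-2\zeta)$ times half the sum of consecutive p.m.f.\ differences, bound that by the mode value using unimodality, and finish with \refLemma{lemma:mode_pbd}. The only (welcome) difference is that you explicitly justify unimodality via log-concavity/real-rootedness of the generating polynomial, whereas the paper simply asserts it.
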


\begin{proof}
We use the following recursive formula for the probability mass
function of a Poisson Binomial Distribution, as described in
\cite{hong13}. Due to this not depending on $j$, we use $X^{(*,
  n)}$ to denote a sum of $n$ independent $0/1$ random variables
whose expectations can potentially be all different.
Then if $p_n$ is the expectation of the $n$-th variable,
\begin{equation}
\label{eq:rf_pbd}
\Pr\left[X^{(*, n)} = i\right] = (1 - p_n) \cdot \Pr\left[X^{(*, n - 1)} = i\right] + p_n \cdot \Pr\left[X^{(*, n - 1)} = i - 1\right].
\end{equation}
We want to bound the total variation distance between $X^{(j - 1, n)}$ and $X^{(j, n)}$, i.e.,
\[
\left\|X^{(j - 1, n)} - X^{(j, n)}\right\|_{\TV} = \frac{1}{2} \sum_{i = 1}^{n} \left|\Pr\left[X^{(j - 1, n)} = i\right] - \Pr\left[X^{(j, n)} = i\right]\right|.
\]
Note that $X^{(j - 1, n)}$ and $X^{(j, n)}$ differ only by how one
coin flip is biased. Thus, we can use $\refEquation{eq:rf_pbd}$ to
write
\begin{gather*}
\Pr\left[X^{(j - 1, n)} = i\right] = (1 - \zeta) \cdot \Pr\left[X^{(j - 1, n - 1)} = i\right] + \zeta \cdot \Pr\left[X^{(j - 1, n - 1)} = i - 1\right], \text{ and}\\
\Pr\left[X^{(j, n)} = i\right] = \zeta \cdot \Pr\left[X^{(j - 1, n - 1)} = i\right] + (1 - \zeta) \cdot \Pr\left[X^{(j - 1, n - 1)} = i - 1\right].
\end{gather*}
If we combine these with the total variation distance
expression, we have the following expression for variation distance:
\begin{gather*}
\frac{1}{2} \sum_{i = 1}^{n - 1} \left|(1 - 2\zeta) \cdot \Pr\left[X^{(j, n - 1)} = i\right] - (1 - 2\zeta) \cdot \Pr\left[X^{(j, n - 1)} = i - 1\right]\right| \\
= \frac{(1 - 2\zeta)}{2} \sum_{i = 1}^{n - 1} \left|\Pr\left[X^{(j, n - 1)} = i\right] - \Pr\left[X^{(j, n - 1)} = i - 1\right]\right|.
\end{gather*}

By definition of $X^{(j, n - 1)}$ and \refLemma{lemma:mode_pbd} we know
that $\Pr[X^{(j - 1, n - 1)} = i] \leq O(\frac{1}{\zeta
  \sqrt{n - 1}})$ for any $i \in \{0, \dots, n - 1\}$.
Let $m$ be the mode of $X^{(j - 1, n - 1)}$.
Due to $X^{(j - 1, n - 1)}$ being unimodal, we can split the sum $\sum_{i = 0}^{n
  - 1} |\Pr[X^{(j - 1, n - 1)} = i] - \Pr[X^{(j - 1, n -
    1)} = i - 1]|$ into two summations over $\{0, \dots, m\}$ and $\{m +
1, \dots, n - 1\}$ where $\Pr[X^{(j-1,n-1)}=i]$ is, respectively, increasing or
decreasing and, hence, remove the absolute value operator.
Consider the summation over $\{0, \dots, m\}$; the other case is
symmetric. Then,
\begin{gather*}
\sum_{i = 1}^{m} \Pr\left[X^{(j - 1, n - 1)} = i\right] - \Pr\left[X^{(j - 1, n - 1)} = i - 1\right]\\
 = \left(\Pr\left[X^{(j - 1, n - 1)} = 1\right] - \Pr\left[X^{(j - 1, n - 1)} = 0\right]\right) + \dots\\
\dots + \left(\Pr\left[X^{(j - 1, n - 1)} = m\right] - \Pr\left[X^{(j - 1, n - 1)} = m - 1\right]\right)\\
 = \Pr\left[X^{(j - 1, n - 1)} = m\right] - \Pr\left[X^{(j - 1, n - 1)} = 0\right] \leq \Pr\left[X^{(j - 1, n - 1)} = m\right].
\end{gather*}
Summing up both the increasing and decreasing side bounds, we get that 
\begin{gather*}
\sum_{i = 0}^{n - 1} \left|\Pr\left[X^{(j - 1, n - 1)} = i\right] - \Pr\left[X^{(j - 1, n - 1)} = i - 1\right]\right| \leq\\
2 \cdot \Pr\left[X^{(j - 1, n - 1)} = m\right] = O\left(\frac{1}{\zeta \sqrt{n - 1}}\right).
\end{gather*}
Substituting back to the variation distance expression and observing
that $1 - 2\zeta < 1$, we obtain
\[
\left\|X^{(j - 1, n)} - X^{(j, n)}\right\|_{\TV} \leq O\left(\frac{1}{\zeta \sqrt{n - 1}}\right) = O\left(\frac{1}{\zeta \sqrt{n}}\right),
\]
as in the statement of the Lemma.
\end{proof}

\begin{definition}
\label{def:smooth_game}
Let $G=(n,2,\{u_j^i\}_{i \in [n], j \in \strats})$ be an anonymous game.
For $\zeta > 0$, the $\zeta$-smoothed version of $G$ is a game
$\bar{G} = (n,2, \{\bar{u}_j^i\}_{i \in [n], j \in \strats})$
defined as follows.
Let $X^{(x)}_{-i} := \sum_{\ell \neq i} X_\ell$ denote the
sum of $n-1$ Bernoulli random variables where $x$ of them have
expectation equal to $1 - \zeta$, and the remaining ones have
expectation equal to $\zeta$.
The payoff $\bar{u}^i_j(x)$ obtained by player
$i \in [n]$ for playing strategy $j \in \strats$ against $x \in \{0,
\dots, n - 1\}$ is
\[
\bar{u}^i_j(x) := \sum_{y = 0}^{n - 1} u^i_j(y) \cdot \Pr\left[X^{(x)}_{-i} = y\right] =
\expected\left[u_j^i\left(X_{-i}^{(x)}\right)\right].
\]
\end{definition}

\begin{theorem}\label{thm:approximateNE}
Let $G=(n,2,\{u_j^i\}_{i \in [n], j \in \strats})$ be an anonymous game.
For $\epsilon$ satisfying $1/\epsilon=O(n^{1/4})$,
\refAlgorithm{algorithm:approximateNE} can be used to find
(with probability $\geq\frac{3}{4}$) an $\epsilon$-NE of $G$,
using $O(\sqrt{n}\cdot \log^2{n})$
all-players queries (hence, $O(n^{3/2} \cdot \log^2{n})$
single-payoff queries) in time $O(n^{3/2} \cdot \log^2{n})$.
\end{theorem}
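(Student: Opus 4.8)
The plan is to carry out the scheme sketched informally above. Fix a smoothing parameter $\zeta>0$, to be chosen at the end, and let $\bar G$ be the $\zeta$-smoothed version of $G$ from \refDefinition{def:smooth_game}. The first step is to show that $\bar G$ is $\lambda$-Lipschitz with $\lambda=O\big(1/(\zeta\sqrt{n})\big)$. Since each $\bar u^i_j$ has a one-dimensional domain, it suffices to bound $|\bar u^i_j(x)-\bar u^i_j(x+1)|$; passing from $x$ to $x+1$ merely re-biases one of the $n-1$ coin flips defining $X^{(x)}_{-i}$ from $\zeta$ to $1-\zeta$, so \refLemma{lemma:tvd_lipschitz} bounds the total variation distance between the two Poisson-binomials by $O(1/(\zeta\sqrt{n}))$, and \refLemma{lemma:tv_utility} converts this into the same bound on the difference of the expected payoffs.

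Given the Lipschitz property, \refTheorem{thm:lipschitz2strategy} says that \refAlgorithm{algorithm:approxNELip}, using $4\log n$ $\delta$-accurate all-players queries to $\bar G$, returns a pure $3(\lambda+\delta)$-WSNE $\bar p$ of $\bar G$. Because $\bar G$ cannot be queried directly, each such query $(j,x)$ is simulated: draw $N$ i.i.d.\ samples from the law of $X^{(x)}_{-i}$ (a sum of $n-1$ biased coins, $x$ of them of bias $1-\zeta$ and the rest of bias $\zeta$), query $G$ at each sampled profile, and average; by \refLemma{lemma:empirical_utility}, $N=O\big(\delta^{-2}\log(n/\tau)\big)$ samples make all the resulting estimates $\delta$-accurate with probability $\geq 1-\tau$. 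Splitting the budget $\tau=\tfrac14$ across the $O(\log n)$ simulated queries by a union bound, the whole run of \refAlgorithm{algorithm:approxNELip} uses $O\big(\delta^{-2}\log^2 n\big)$ all-players queries to $G$ and, with probability $\geq\tfrac34$, produces a genuine pure $3(\lambda+\delta)$-WSNE $\bar p$ of $\bar G$ from a mutually consistent set of answers.

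Finally map $\bar p$ to the mixed profile $s$ of $G$ in which a player playing $1$ in $\bar p$ plays $1$ with probability $1-\zeta$ and a player playing $2$ plays $1$ with probability $\zeta$. With $m_i$ denoting the number of players other than $i$ that $\bar p$ assigns to strategy $1$, the law of the number of others playing $1$ under $s_{-i}$ is exactly $X^{(m_i)}_{-i}$, so player $i$'s payoff in $G$ under $s$ for playing pure action $a$ equals $\bar u^i_a(m_i)$. Hence the WSNE guarantee of $\bar p$ bounds the gain from switching to the other pure action by $3(\lambda+\delta)$, while being forced onto weight $\zeta$ (resp.\ $1-\zeta$) on strategy $1$ loses an additional $O(\zeta)$, so $s$ is an $O(\zeta+\lambda+\delta)$-NE of $G$. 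Taking $\zeta=\delta=\Theta(n^{-1/4})$ yields $\lambda=O(n^{-1/4})$, approximation $O(n^{-1/4})$, and $\delta^{-2}=\Theta(\sqrt{n})$; the query count becomes $O(\sqrt{n}\log^2 n)$ all-players queries, i.e.\ $O(n^{3/2}\log^2 n)$ single-payoff queries, and the running time, dominated by generating the $O(\sqrt{n}\log^2 n)$ samples of $\Theta(n)$ coin flips each, processing the $O(n^{3/2}\log^2 n)$ returned payoffs, and running \refAlgorithm{algorithm:approxNELip}, is $O(n^{3/2}\log^2 n)$. For the general statement, any $\epsilon\geq n^{-1/4}$ is obtained by instead setting $\zeta=\delta=\Theta(\epsilon)$.

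The part I expect to be most delicate is the error accounting of the last paragraph: one must simultaneously control the forced-randomization loss $\zeta$, the Lipschitz slack $\lambda$ (which already absorbs, inside \refTheorem{thm:lipschitz2strategy}, the off-by-one between allocating $x$ or $x+1$ players to strategy $1$), and the estimation error $\delta$, and verify that their combined effect on every player's incentive to deviate is uniformly $O(\zeta+\lambda+\delta)$ before optimizing $\zeta$. The remaining difficulty — bounding the total-variation distance between consecutive smoothed profiles — has effectively been front-loaded into \refLemma{lemma:upper_bound_mode_binomial}, \refLemma{lemma:mode_pbd} and \refLemma{lemma:tvd_lipschitz}, so the present proof is mostly a careful assembly of these pieces together with the probabilistic bookkeeping needed to keep the success probability at $\tfrac34$.
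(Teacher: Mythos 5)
Your proposal is correct and follows essentially the same route as the paper's proof: smooth $G$ with parameter $\zeta$, establish the $O(1/(\zeta\sqrt{n}))$-Lipschitz property via Lemmas~\ref{lemma:tvd_lipschitz} and~\ref{lemma:tv_utility}, run \refAlgorithm{algorithm:approxNELip} with queries simulated per \refLemma{lemma:empirical_utility} plus a union bound, map the pure WSNE of $\bar{G}$ back at an extra cost of $\zeta$, and set $\zeta=\delta=\Theta(n^{-1/4})$ (or $\Theta(\epsilon)$ in general). Your explicit observation that player $i$'s payoff in $G$ under the returned mixed profile for pure action $a$ equals $\bar{u}^i_a(m_i)$ is exactly the "easy to verify" regret step in the paper, so there is no gap.
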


\begin{algorithm}[h!]
    \caption{Approximate NE general payoffs}
    \label{algorithm:approximateNE}
    \DontPrintSemicolon
    \KwData{$\epsilon$; query access to utility function $u$ of $n$-player anonymous game $G$;
    parameters $\tau$ (failure probability), $\delta$ (accuracy of queries).}
    \KwResult{$O\left(\epsilon \right)$-NE of $G$.}
    \Begin{
        Set $\zeta=\epsilon$.
        Let $\bar{G}$ be the $\zeta$-smoothed version of $G$,
        as in \refDefinition{def:smooth_game}.\;
        \tcp*{By \refLemma{lemma:tv_utility} and \refLemma{lemma:tvd_lipschitz} it follows that\\
        $\bar{G}$ is $\lambda$-Lipschitz for $\lambda=O(1/\zeta\sqrt{n})$.}
        Apply \refAlgorithm{algorithm:approxNELip} to $\bar{G}$, simulating each all-players $\delta$-accurate
        query to $\bar{G}$ using multiple queries according to \refLemma{lemma:empirical_utility}.\;
        Let $\bar{p}$ be the obtained pure profile solution to $\bar{G}$.\;
        Construct $p$ by replacing probabilities of 0 in $\bar{p}$ with $\zeta$ and probabilities
        of 1 with $1-\zeta$.\;
        \KwRet $p$.
    }
\end{algorithm}

\begin{proof}
Set $\zeta$ equal to $\epsilon$ and let $\bar{G}$ be the
$\zeta$-smoothed version of $G$. We claim that
$\bar{G}$ is a $\lambda$-Lipschitz game for $\lambda = O\left((\zeta \sqrt{n})^{-1}\right)$.
Let $X^{(x)}_{-i}$ be as in \refDefinition{def:smooth_game}.
By \refLemma{lemma:tvd_lipschitz}, $\left\|X^{(x - 1)}_{-i} - X^{(x)}_{-i}\right\|_\TV
\leq O\left(\frac{1}{\zeta \sqrt{n}}\right)$ for all $x \in [n-1]$.
Then by \refLemma{lemma:tv_utility}, we have
\[
\left|\bar{u}^i_j(x - 1) - \bar{u}^i_j(x)\right| \leq O\left(\frac{1}{\zeta \sqrt{n}}\right).
\]

\refTheorem{thm:lipschitz2strategy} shows that
\refAlgorithm{algorithm:approxNELip} finds a pure-strategy
$3(\lambda + \delta)$-WSNE of $\bar{G}$, using $O(\log{n})$
$\delta$-accurate all-players queries.
Thus, \refAlgorithm{algorithm:approxNELip} finds a $O(\frac{1}{\zeta\sqrt{n}}+\delta)$-WSNE
of $\bar{G}$, where $\delta$ is the additive accuracy of queries.

Despite not being allowed to query $\bar{G}$ directly, we can
simulate any $\delta$-accurate query to $\bar{G}$ with a set of
randomized all-players queries to $G$. This is done in the body
of \refAlgorithm{algorithm:approximateNE}.
By \refLemma{lemma:empirical_utility}, for $\tau>0$,
$(1/2\delta^2)\log(4n/\tau)$ randomized queries to $G$ correctly simulate a
$\delta$-accurate query to $\bar{G}$ with probability $\geq 1-\tau$.

In total, the algorithm makes $O\left(\log{n} \cdot (1/\delta^2) \cdot \log(n/\tau)\right)$
all-players payoff queries to $G$. With a union bound over the $4\log{n}$
simulated queries to $\bar{G}$, this works with probability $1-4\tau\log{n}$.

Once we find this pure-strategy $O\left(\frac{1}{\zeta\sqrt{n}}+\delta\right)$-WSNE of $\bar{G}$,
the last part of \refAlgorithm{algorithm:approximateNE} maps the pure output
profile to a mixed one where whoever plays $1$ in $\bar{G}$ places
probability $(1 - \zeta)$ on $1$, and whoever plays $2$ in $\bar{G}$
places probability $\zeta$ on $1$. It is easy to verify
that the regret experienced by player $i$ (that is, the difference in payoff between
$i$'s payoff and $i$'s best-response payoff) in $G$ is at most $\zeta$ more
than the one she experiences in $\bar{G}$.

The extra additive $\zeta$ to the regret of players means that we have an
$\epsilon$-NE of $G$ with $\epsilon=O(\zeta+\delta+\frac{1}{\zeta\sqrt{n}})$.
The query complexity thus is\newline $O(\log{n} \cdot (1/\delta^2) \cdot \log(n/\tau))$.

Setting $\delta=1/\sqrt[4]{n}$, $\zeta=1/\sqrt[4]{n}$, $\tau = 1/ 16\log{n}$,
we find an $O(1/\sqrt[4]{n})$-Nash equilibrium using
$O(\sqrt{n} \cdot \log^2{n})$ all-players queries with
probability at least $3/4$. We remark that the above parameters
can be chosen to satisfy any given approximation guarantee $\epsilon \geq n^{-1/4}$,
i.e., simply find solutions to the equation $\epsilon = \zeta+\delta+(\zeta\sqrt{n})^{-1}$.
This allows for a family of algorithms parameterized by $\epsilon$, for $\epsilon \in [n^{-1/4}, 1)$,
thus an approximation scheme.

The runtime is equal to the number of single-payoff queries and
can be calculated as follows. Calculating the value of $\phi(i)$
in \refAlgorithm{algorithm:approxNELip} takes $O(n \sqrt{n} \log{n})$.
We make $O(\sqrt{n} \log{n})$ queries to $G$ to simulate one in $\bar{G}$,
and once we gather all the information, we need an additional linear time factor
to count the number of players whose best response is $1$. The fact that
the above part is performed at every step of the binary search implies a total
running time of $O(n^{3/2} \cdot \log^2{n})$ for \refAlgorithm{algorithm:approxNELip}.
\refAlgorithm{algorithm:approximateNE} simply invokes \refAlgorithm{algorithm:approxNELip}
and only needs linear time to construct the profile $p$; thus, it runs in the same time.
\end{proof}

\subsection{Lower Bound}

Finally, we consider the question of lower bounds that are complementary
to Theorem~\ref{thm:approximateNE}.
We use the minimax principle and thus define a distribution
over instances that will lead to the lower bound on query complexity,
for any deterministic algorithm.
We specify a distribution over certain games that possess
a unique pure Nash equilibrium. The $n$ players that participate in any
of these games are partitioned into $\log{n}$ groups, which are
numbered from $1$ to $\log{n}$. Group $i$'s equilibrium strategy
depends on what all the previous groups $\{1, \dots, i - 1\}$ play at
equilibrium. Hence, finding out what the last group should play leads
to a lower bound of $\Omega(\log{n})$ all-players queries.

\begin{definition}
Suppose we need to discover an unknown bit-string $A := A_1 \dots
A_d$ of length $d$. A \emph{longest-common-prefix query} (lcp) takes a bit-string $B :=
B_1 \dots B_d$ as input and outputs the length of the longest common
prefix between $A$ and $B$, i.e., lcp$(A, B):= \max_{j \in \{0, \dots, d\}} \{A_i
= B_i \text{ for all } i = 1, \dots, j\}$.
\end{definition}

\begin{lemma}
\label{lemma:bitstring_lowerbound}
Let $A := A_1 \dots A_d$ be an unknown bit-string generated uniformly
at random. Then, the expected number of queries needed by any
lcp-query algorithm is $\Omega(d)$.
\end{lemma}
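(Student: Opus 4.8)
The plan is to invoke Yao's principle: since the input $A$ is already drawn from the uniform distribution, it suffices to lower-bound the expected number of queries of an arbitrary \emph{deterministic} lcp-query algorithm, because any randomized algorithm is a mixture of deterministic ones and its expected query count against a fixed input distribution is the corresponding mixture of the deterministic expected counts. Let $f(d)$ denote the infimum over deterministic algorithms of the expected number of queries needed to discover a uniformly random $A\in\{0,1\}^d$; I will show $f(d)\ge d/2$, which is $\Omega(d)$.

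First I would set up a recursion for $f$. Fix a deterministic algorithm and let $B$ be its first query, a fixed string. For uniform $A$ one has $\Pr[\mathrm{lcp}(A,B)\ge j]=2^{-j}$, hence $\Pr[\mathrm{lcp}(A,B)=j]=2^{-j-1}$ for $0\le j<d$ and $\Pr[\mathrm{lcp}(A,B)=d]=2^{-d}$, regardless of $B$. Conditioned on the answer being $j<d$, the algorithm now knows exactly the first $j+1$ bits of $A$ (the first $j$ agree with $B$, and bit $j+1$ is forced to disagree), while the remaining $d-1-j$ bits are still uniform. The structural point is that the residual task is at least as hard as discovering a uniformly random string of length $d-1-j$: any later query whose first $j+1$ bits do not match the known prefix has a predetermined answer and yields nothing, while a query agreeing on the known prefix returns $(j+1)$ plus the lcp of the unknown suffix against the query's suffix, i.e.\ it is exactly an lcp query on the residual problem. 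Therefore $f(d)\ge 1+\sum_{j=0}^{d-1}2^{-j-1}f(d-1-j)$.

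Next I would close the argument by strong induction on $d$ with hypothesis $f(k)\ge k/2$; the base case $f(0)\ge 0$ is trivial (and $f(1)\ge 1$ since one cannot learn a uniform bit with no query). For $d\ge 1$, substituting the hypothesis into the recursion and evaluating the elementary sum $\sum_{j=0}^{d-1}2^{-j-1}(d-1-j)=(d-2)+2^{1-d}\ge d-2$ gives $f(d)\ge 1+\frac{1}{2}(d-2)=\frac{d}{2}$, completing the induction and the lemma.

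I do not expect a serious obstacle; the one place needing care is the structural reduction of the second paragraph — one must argue cleanly that conditioning on a given answer leaves the still-unknown suffix uniform and that no query can extract information about bits already pinned down, so that the residual problem is genuinely an independent instance of the length-$(d-1-j)$ problem. An equally short alternative avoids the recursion entirely: the transcript of query--answer pairs determines $A$, so its entropy is at least $H(A)=d$; by the chain rule this is at most the sum over queries of the conditional entropy of each answer given the transcript so far; and that conditional answer distribution is a truncated geometric with entropy at most $\sum_{m\ge 1}m\,2^{-m}=2$; hence the expected number of queries is at least $d/2$.
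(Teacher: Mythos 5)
Your proposal is correct, and it rests on the same quantitative fact as the paper's proof --- one lcp query against a uniform string returns a (truncated) geometric answer whose expected useful length is at most $\sum_{i\ge 1} i\,2^{-i}=2$ --- but it is organized differently. The paper inducts on the number of queries, claiming the expected known prefix after $q$ queries is at most $2q$; its inductive step tacitly assumes the $q$-th query agrees with $A$ on exactly the first $2(q-1)$ bits, conflating an expected prefix length with the realized one, so strictly it needs exactly the conditioning bookkeeping you supply. Your route instead uses the self-reducibility of the problem: after an answer $j<d$, the first $j+1$ bits of $A$ are pinned down, any query disagreeing with them has a predetermined answer, and the remaining suffix is a fresh uniform instance of length $d-1-j$; this yields the recursion $f(d)\ge 1+\sum_{j=0}^{d-1}2^{-j-1}f(d-1-j)$, and your evaluation of the sum as $(d-2)+2^{1-d}$ and the induction giving $f(d)\ge d/2$ check out. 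This is a somewhat cleaner and more rigorous rendering of the paper's idea; the Yao-style reduction to deterministic algorithms is fine since the input is already random. Your entropy-based alternative is genuinely different and appealingly short, but to get a bound on the \emph{expected} number of queries (rather than a worst-case bound) you should add that, conditioned on transcripts where the algorithm has already stopped, the next ``answer'' has zero conditional entropy, so the chain-rule sum is at most $2\,\expected[T]$ and hence $\expected[T]\ge d/2$. Like the paper, both of your arguments implicitly require the algorithm to identify $A$ exactly; allowing a constant failure probability would need only standard adjustments.
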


\begin{proof}
We show by induction on $q$
that the expected length of the longest common prefix between $A$ and
the $q$-th queried bit-string is at most $2q$. We use $B^{(q)}$ to denote
the queried input at step $q$.

Suppose $\mathcal{A}$ makes one query. Then, the expected prefix length is
\begin{equation*}
\begin{split}
\expected[\text{lcp}(A, B^{(1)})] & = \Pr[A_1 = B_1] \cdot 1 + \dots
       + \Pr[A_1 = B_1 \wedge \dots \wedge A_d = B_d] \cdot d\\
& = \sum_{i = 1}^d \frac{i}{2^i} \leq \sum_{i = 1}^\infty \frac{i}{2^i} = 2.
\end{split}
\end{equation*}
Thus, the base case holds. Now, assume $\expected[\text{lcp}(A, B^{(q-1)})] \leq 2(q-1)$.
Moreover, let $A_1 \dots A_{2(q - 1)} = B^{(q)}_1 \dots B^{(q)}_{2(q - 1)}$. Then,
\begin{equation*}
\begin{split}
\expected[\text{lcp}(A, B^{(q)})] & = 2(q-1) + \sum_{i = 1}^{d - 2(q-1)} \frac{i}{2^i}\\
& \leq 2(q-1) + \sum_{i = 1}^\infty \frac{i}{2^i} = 2q.
\end{split}
\end{equation*}
Hence, $\mathcal{A}$ needs to make at least $d/2$ queries in
order to output a prefix of expected length equal to $d$.
\end{proof}

\begin{lemma}
\label{lemma:unique_WSCE}
Let $\mathcal{G}_n$ be the class of $n$-player two-strategy anonymous
games such that $u^i_1(x) = 1 - u^i_2(x)$ and $u^i_1(
x) \in \{0, 1\}$, for all $i \in [n], x \in \{0, \dots, n - 1\}$.
Then, there exists a distribution $\mathcal{D}_n$ over
$\mathcal{G}_n$ such that every $G$ drawn from $\mathcal{D}_n$ has a
unique (pure-strategy) $\epsilon$-WSNE.
\end{lemma}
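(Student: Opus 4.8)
The plan is to take $\mathcal D_n$ to be the uniform distribution over a family of games $\{G_A\}$ indexed by a hidden string $A=A_1\cdots A_d\in\{0,1\}^d$, where $d=\Theta(\log n)$, built so that $A$ both names the unique equilibrium and is ``prefix-structured'' in the sense of \refLemma{lemma:bitstring_lowerbound}. Concretely, I would pick non-adjacent bit positions $\ell_1>\ell_2>\cdots>\ell_d\ge 0$ with $\ell_{g-1}=\ell_g+2$ (e.g.\ $\ell_g=2(d-g)$, which forces $d=\Theta(\log n)$ since group $1$ will have $2^{\ell_1}\le n$ players), partition $[n]$ into groups $\Gamma_1,\dots,\Gamma_d$ of sizes $|\Gamma_g|=2^{\ell_g}$ together with a residual pool of ``dummy'' players, and define, for $i\in\Gamma_g$, the Boolean payoff $u^i_1(x)$ to equal $A_g$ whenever the bits of $x$ in the positions $\ell_1,\dots,\ell_{g-1}$ spell out $A_1,\dots,A_{g-1}$, and to equal $0$ otherwise; each dummy gets $u^i_1\equiv0$, and throughout $u^i_2:=1-u^i_1$, so that $G_A\in\mathcal G_n$.

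I would then prove, by induction on $g$, that in \emph{every} $\epsilon$-WSNE (for any $\epsilon\in[0,1)$) all members of $\Gamma_g$ play the pure strategy $A_g$ and all dummies play $2$. The base case is immediate: on $\Gamma_1$, $u^i_1\equiv A_1$ makes $A_1$ strictly dominant, so $|\expected[u^i_1]-\expected[u^i_2]|=1>\epsilon$ forces a pure choice. For the step, suppose $\Gamma_1,\dots,\Gamma_{g-1}$ already play $A_1\cdots A_{g-1}$; then for $i\in\Gamma_g$ the number $X_{-i}$ of other strategy-$1$ players is $C+R$, where $C=\sum_{k<g,\,A_k=1}2^{\ell_k}$ is deterministic and, the $\ell_k$ being non-adjacent, carries exactly the bit $A_k$ in position $\ell_k$, while $R$ (the contribution of $\Gamma_g\setminus\{i\}$ and $\Gamma_{g+1},\dots,\Gamma_d$, possibly randomised) satisfies $R<\sum_{k\ge g}2^{\ell_k}<2^{\ell_g+1}\le 2^{\ell_{g-1}}$. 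Hence $R$ uses only bit positions strictly below $\ell_{g-1}$, no carries reach positions $\ell_1,\dots,\ell_{g-1}$, and those bits of $X_{-i}$ read off $A_1\cdots A_{g-1}$ \emph{with probability one}; consequently $u^i_1(X_{-i})=A_g$ surely, $|\expected[u^i_1(X_{-i})]-\expected[u^i_2(X_{-i})]|=1$, and $i$ must play the pure strategy $A_g$. Taking the induction to $g=d$ identifies a single candidate profile $p^\star(A)$, and the same arithmetic shows that at $p^\star(A)$ every player strictly best-responds; so $p^\star(A)$ is an exact (hence $\epsilon$-) equilibrium and is the unique $\epsilon$-WSNE, and it is pure.

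The delicate part is setting up this bit-layout so that a group's ``prefix read'' is insensitive both to what the later groups do and --- the point that usually causes trouble for such lower bounds --- to any randomisation, including randomisation inside the very group doing the reading. This is exactly why the construction puts lower-indexed groups at \emph{higher} bit positions, spaces the positions with a one-bit guard, and uses group sizes that are exact powers of two: it makes the carry bound $R<2^{\ell_{g-1}}$ hold verbatim, so no anti-concentration argument is needed and mixed profiles are ruled out for free. The remaining (routine) checks are that $\sum_g 2^{\ell_g}\le n$ so the dummy pool is well defined, and that $d=\Theta(\log n)$, which is what makes $\mathcal D_n$ useful downstream as an instance of the $\log$-length hidden-string problem of \refLemma{lemma:bitstring_lowerbound}.
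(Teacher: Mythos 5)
Your proposal is correct and is essentially the paper's own construction: the paper likewise partitions the players into $\Theta(\log n)$ groups of geometrically shrinking size (its $N_j$ of size $n/2^j$ play the role of your $\Gamma_g$), makes each group's preference depend only on which dyadic block of the count it falls in --- a block that is pinned down by the equilibrium play of the earlier, larger groups and cannot be disturbed by the smaller later ones --- and concludes by iterated strict dominance (a payoff gap of $1>\epsilon$) that every $\epsilon$-WSNE is the unique pure profile; the carry-free counting induction you spell out is precisely the argument the paper only sketches. The one substantive difference is off the equilibrium path: the paper flips an independent fair coin for \emph{each} block (so answers to off-path queries look like fresh random bits, which is what Theorem~\ref{thm:log_lowerbound}'s reduction to the lcp problem uses), whereas your off-prefix payoffs are deterministically $0$ --- harmless for this lemma, but the downstream argument would need a small adjustment since one of your queries also reveals the bit values along the matched prefix (still only $O(1)$ expected bits per query, so the $\Omega(\log n)$ bound survives).
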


\begin{proof}
Let $n = 2^k$, and let the first $n - 1$ players be partitioned into
sets $N_1, \dots, N_k$ such that $|N_j| = n / 2^j$ for all $j \in
[k]$. Let $I_j := \left\{0, \dots, 2^{j - 1} - 1\right\}$. Moreover, let $1 \succ_i^x
2$ denote player $i$ preferring strategy $1$ to strategy $2$ given
that $x$ other players are playing strategy $1$. We use $1 \succ_P^I
2$ to mean that $1 \succ_i^x 2$ for all $i \in P$ and all $x \in
I$. Furthermore, all players $i \in N_j$ share the same preferences.

We define $\mathcal{D}_n$ in the following manner. For all $j \in
[k]$, let $N_j$ flip $|I_j|$ fair coins, one associated to each subset
$S_{j, \ell} := \left\{\frac{\ell}{2^{j - 1}} n, \dots, \frac{\ell +
  1}{2^{j - 1}} n - 1\right\}$, with $\ell \in I_j$, to decide whether they
prefer strategy $1$ to $2$ within $S_{j, \ell}$. This means that
$N_j$'s preferences over a subset $S_{j, \ell}$ do not depend on what
$N_j$ prefer at some different $S_{j, m} \subset \{0, \dots, n -
1\}$. Player $n$, who belongs to no set $N_j$, flips a coin for every
$\ell \in \{0, \dots, n - 1\}$. In particular, with probability $1 / 2$, $N_1$
always prefer strategy $1$ to $2$, and with probability $1 / 2$, $2$
to $1$. 

Every game $G$ drawn from $\mathcal{D}_n$ has the feature of
having a unique PNE. Moreover, $N_j$'s unique best response at
equilibrium depends on what $N_1, \dots, N_{j - 1}$ play.
An application of iterated elimination of dominated strategies
suffices to verify the two claims. Further, it can be shown, e.g., by induction on $j$,
that in any $\epsilon$-WSNE ($\epsilon < 1$), no member of $N_j$ would place
positive probability onto her worst-response.
\end{proof}

\begin{theorem}\label{thm:log_lowerbound}
Let $\mathcal{G}_n$ be defined as in
\refLemma{lemma:unique_WSCE}. Then, for any $\epsilon \in [0, 1)$, any
randomized all-players query algorithm must make $\Omega(\log{n})$
queries
to find an $\epsilon$-WSNE of $\mathcal{G}_n$ in the worst case.
\end{theorem}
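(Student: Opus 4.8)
The plan is to apply Yao's minimax principle, so that it suffices to exhibit a distribution over instances of $\mathcal{G}_n$ on which every \emph{deterministic} all-players query algorithm needs $\Omega(\log n)$ queries in expectation to output an $\epsilon$-WSNE. We take this distribution to be $\mathcal{D}_n$ from \refLemma{lemma:unique_WSCE}, so that every $G$ in its support has a unique $\epsilon$-WSNE (for every $\epsilon\in[0,1)$, the case $\epsilon=0$ being covered by the same lemma), and hence a correct algorithm must output that equilibrium exactly. Note also that the queries $(1,x)$ and $(2,x)$ return the same information, since $u^i_2=1-u^i_1$ in $\mathcal{G}_n$; so without loss of generality each all-players query is specified by a single point $x\in\{0,\dots,n-1\}$.

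The key step is to recast the task of locating the equilibrium of a random $G\sim\mathcal{D}_n$ as discovering a uniformly random bit-string of length $d=\log n$ under longest-common-prefix queries, so that \refLemma{lemma:bitstring_lowerbound} applies. Recall that in $\mathcal{D}_n$ the preferences of group $N_j$ are fixed by independent fair coins, one per interval $S_{j,\ell}$, and these intervals form a binary tree: level $1$ is the whole range $\{0,\dots,n-1\}$ and each level $j$ refines the previous partition into halves. The unique pure equilibrium follows a root-to-leaf path of this tree: $N_1$'s single coin $A_1$ decides whether the equilibrium count lies in the lower or upper half; given $A_1,\dots,A_{j-1}$ the equilibrium count lies in a determined interval $S_{j,\ell_j}$ at level $j$, and the coin $A_j$ of $N_j$ on $S_{j,\ell_j}$ refines it to a sub-interval; after $k=\log n$ levels a single value $x^\star$ is pinned down, and the equilibrium actions of $N_k$ and of player $n$ are then determined. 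Since the game in $\mathcal{G}_n$ is specified entirely by the coin outcomes (the partition being fixed and known to the algorithm), the only information relevant to identifying the equilibrium is the on-path string $A=A_1\cdots A_k$ (which one checks is uniform on $\{0,1\}^k$: conditioned on $A_1,\dots,A_{j-1}$, the coin $A_j$ on $S_{j,\ell_j}$ is a fresh fair coin); the remaining off-path coins are independent of $A$ and hence useless.

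I would then observe that an all-players query at a point $x$ discloses, for each level $j$, exactly the coin of $N_j$ on the interval $S_{j,\ell}$ containing $x$, together with player $n$'s coin at $x$. For the levels on which $x$'s tree-address agrees with the equilibrium path these coins are precisely $A_1,\dots,A_m$ (where $m-1$ is the common-prefix length), and from level $m+1$ onward only off-path coins, independent of $A$, are revealed. Consequently a deterministic algorithm $\mathcal{A}$ for $\mathcal{D}_n$ making $q$ queries can be simulated by a (randomized) algorithm $\mathcal{B}$ that discovers $A$ with $q$ lcp queries: when $\mathcal{A}$ queries $x$, $\mathcal{B}$ issues an lcp query with $x$'s tree-address to learn the on-path prefix and feeds $\mathcal{A}$ fresh fair coins for the off-path bits (distributed exactly as in $\mathcal{D}_n$, cached for consistency across repeated queries); from $\mathcal{A}$'s output, which reveals $x^\star$, $\mathcal{B}$ reads off $A$. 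By \refLemma{lemma:bitstring_lowerbound}, $\mathcal{B}$ — hence $\mathcal{A}$ — must make $\Omega(\log n)$ queries in expectation, and Yao's principle upgrades this to the stated randomized lower bound.

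The step I expect to be delicate is the information-theoretic bookkeeping of this reduction: pinning down the exact correspondence between the tree-address of a query point and the levels of the equilibrium path (an off-by-one in the indexing is harmless but must be stated cleanly), verifying that the off-path coins observed by $\mathcal{A}$ are genuinely independent of $A$ so that $\mathcal{B}$'s fresh-coin substitution is faithful, and checking that the views of $\mathcal{A}$ and $\mathcal{B}$ remain consistent across queries that revisit the same on- or off-path interval. Once these are in place, Yao's principle and \refLemma{lemma:bitstring_lowerbound} are applied essentially verbatim.
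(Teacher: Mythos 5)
Your proposal is correct and takes essentially the same route as the paper: Yao's minimax principle applied to the distribution $\mathcal{D}_n$ of \refLemma{lemma:unique_WSCE}, together with the observation that each all-players query reveals no more about the hidden equilibrium bit-string than a longest-common-prefix query, so that \refLemma{lemma:bitstring_lowerbound} yields the $\Omega(\log n)$ bound. Your explicit simulation (lcp query for the on-path prefix, fresh cached coins for off-path intervals) is just a more formal rendering of the paper's argument that the bits beyond the first disagreement are independent of the equilibrium string.
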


\begin{proof}
We use an adversarial distribution $\mathcal{D}_n$ over
$\mathcal{G}_n$ and apply the minimax principle to
lower-bound the expected cost that any deterministic algorithm must
incur. We let $\mathcal{D}_n$ be defined as in
\refLemma{lemma:unique_WSCE}, so that it has a unique $\epsilon$-WSNE,
and show that learning this equilibrium is equivalent to learning an
unknown $\log(n)$-long bit-string as in
\refLemma{lemma:bitstring_lowerbound}.

If we associate a random indicator variable $Y_j$ to $N_j$ that
is equal to 1 if and only if $N_j$ are playing strategy $1$ at
equilibrium, then $y := \sum_{j = 1}^k |N_j| \cdot Y_j$
corresponds to the number of players playing strategy $1$ that player
$n$ sees at equilibrium. It is easily verified that, by
$\mathcal{D}_n$'s definition, $Y := Y_1 \dots Y_k$ is generated
uniformly at random since every group $N_j$ flips a series of
fair coins to determine their preferences. Clearly, any algorithm
$\mathcal{A}$ that outputs an $\epsilon$-WSNE of
$\mathcal{G}_n$ is able to tell what $Y$'s value is because it simply
requires to look at what $\{1, \dots, n - 1\}$ play in the pure
equilibrium profile. Suppose an all-players query returns, for all $i
\in [n]$, both payoffs for playing strategy $1$ and $2$ against $x \in
\{0, \dots, n - 1\}$ other players playing strategy 1. This assumption
can only strengthen the query model.

Let $B := B_1 \dots B_k$ be the binary representation of a query input
$x \in \{0, \dots, n - 1\}$.
Due to every player in $N_j$ sharing the same preferences, we can
assume the answer is of the form $A := A_1 \dots A_k$, where
$A_j$ is equal to 1 if and only if $N_j$'s best response to $x$ is
strategy $1$. We now argue that an all-players query is not able to give more
information than the longest common prefix between $A$ and
$B$.

Suppose $A_1 \dots A_{\ell - 1} A_{\ell + 1} \dots A_{k} = B_1
\dots B_{\ell - 1} B_{\ell + 1} \dots B_{k}$, and $A_\ell \neq B_\ell$
for some $\ell \in \{1, \dots, k - 1\}$. According to
$\mathcal{D}_n$'s definition, $N_j$ flip a fair coin to determine
their preferences for every consecutive subset $S \subseteq \{0, \dots, n -
1\}$ of size $\frac{n}{2^{j - 1}}$. Assume, w.l.o.g., that $A_\ell
= 1$ and $B_\ell = 0$. Let $m \in \{\ell + 1, \dots, k - 1\}$.
$N_{m}$'s payoff at $x$ and $N_{m}$'s payoff at any $x'$ such that
$\left|x - x'\right| \geq \frac{n}{2^\ell}$ correspond to two independent coin
flips, i.e., it is useless to know that $B_j = A_j$ for all $j \in \{\ell + 1,
\dots, k\}$ since $N_\ell$ must play $1$.
Furthermore, any attempt to guess $Y_{\ell + 1} \dots Y_{k}$
fails with probability $1 - 2^{\ell + 1 - k}$.

For any random $\log(n)$-long bit-string $Y$
there exists a game $G$ in the support of $\mathcal{D}_n$
whose unique pure-strategy $\epsilon$-WSNE
is equal to $Y$. By
\refLemma{lemma:bitstring_lowerbound}, the expected
number of queries needed by any algorithm $\mathcal{A}$
to discover $Y$ is $\Omega(\log{n})$.
\end{proof}

\section{Conclusions and Further Work}

Our interest in the query complexity of anonymous games
has resulted in an algorithm that has an improved runtime-efficiency
guarantee, although limited to when the number of strategies $k$
is equal to 2.
\refAlgorithm{algorithm:approximateNE} (\refTheorem{thm:approximateNE}) finds
an $\epsilon$-NE faster than the PTAS of \cite{dp14},
for any $\epsilon \geq 1 / \sqrt[4]{n}$. In particular, for
$\epsilon = 1 / \sqrt[4]{n}$, their algorithm runs in
subexponential time, while ours is just $\tilde{O}(n^{3/2})$;
however, our $\epsilon$-NE is not well-supported.

An immediate question is whether we can obtain sharper
bounds on the query complexity of two-strategy games.
There are ways to potentially strengthen the results.
First, our lower bound holds for well-supported equilibria; it would be
interesting to know whether a logarithmic number of queries is also
needed to find an $\epsilon$-NE for $\epsilon < \frac{1}{2}$. 
We believe this is the case at least for small values of $\epsilon$.
Second, the $\epsilon$-NE found by our algorithm are not well-supported
since all players are forced to randomize.
Is there a query-efficient algorithm that finds an $\epsilon$-WSNE?
Third, we may think of generalizing the algorithm to the (constant)
$k$-strategy case by requiring every player to place
probability either $\frac{\zeta}{k}$ or $1 - \frac{k - 1}{k}
\zeta$ and solve an associated Lipschitz game. However, in
this case we can no longer use binary search to find a fixed point of
the smooth game. As a consequence, the query complexity
is likely to be substantially higher.

\bibliographystyle{plain}
\bibliography{literature}

\begin{thebibliography}{10}

\bibitem{althofer93}
I.~Alth{\"o}fer and K.~Klaus-Uwe.
\newblock On the deterministic complexity of searching local maxima.
\newblock {\em Discrete Applied Mathematics}, 43(2):111--113, 1993.

\bibitem{as13}
Y.~Azrieli and E.~Shmaya.
\newblock Lipschitz games.
\newblock {\em Mathematics of Operations Research}, 38(2):350--357, 2013.

\bibitem{bab13}
Y.~Babichenko.
\newblock Best-reply dynamics in large binary-choice anonymous games.
\newblock {\em Games and Economic Behavior}, 81(1):130--144, 2013.

\bibitem{babichenko14}
Y.~Babichenko.
\newblock Query complexity of approximate {N}ash equilibria.
\newblock In {\em Proceedings of the 46th Annual ACM Symposium on Theory of
  Computing}, STOC '14, pages 535--544, USA, 2014. ACM.

\bibitem{bfh09}
F.~Brandt, F.~Fischer, and M.~Holzer.
\newblock Symmetries and the complexity of pure {N}ash equilibrium.
\newblock {\em Journal of Computer and System Sciences}, 75:163--177, 2009.

\bibitem{cdt09}
X.~Chen, X.~Deng, and S.~Teng.
\newblock Settling the complexity of computing two-player {N}ash equilibria.
\newblock {\em Journal of the ACM}, 56(3):1--57, 2009.

\bibitem{cdo14}
X.~Chen, D.~Durfee, and A.~Orfanou.
\newblock {On the Complexity of Nash Equilibria in Anonymous Games}.
\newblock In {\em Proceedings of the Forty-Seventh Annual {ACM} on Symposium on
  Theory of Computing}, STOC '15, pages 381--390. ACM, 2015.

\bibitem{cct15}
Xi~Chen, Yu~Cheng, and Bo~Tang.
\newblock Well-supported versus approximate nash equilibria: Query complexity
  of large games.
\newblock {\em CoRR}, abs/1511.00785, 2015.

\bibitem{daskalakis08}
C.~Daskalakis.
\newblock An efficient {PTAS} for two-strategy anonymous games.
\newblock In C.~H. Papadimitriou and S.~Zhang, editors, {\em Internet and
  Network Economics}, volume 5385, pages 186--197. Springer, 2008.

\bibitem{dgp09}
C.~Daskalakis, P.~W. Goldberg, and C.~H. Papadimitriou.
\newblock {The complexity of computing a Nash equilibrium}.
\newblock {\em SIAM Journal on Computing}, 39(1):195--259, 2009.

\bibitem{dp07}
C.~Daskalakis and C.~H. Papadimitriou.
\newblock Computing equilibria in anonymous games.
\newblock In {\em In Proceedings of the 48th Symposium on Foundations of
  Computer Science (FOCS)}, pages 83--93, 2007.

\bibitem{dp08discretized}
C.~Daskalakis and C.~H. Papadimitriou.
\newblock Discretized multinomial distributions and {N}ash equilibria in
  anonymous games.
\newblock In {\em In Proceedings of the 49th Symposium on Foundations of
  Computer Science (FOCS)}, pages 25--34, 2008.

\bibitem{dp09}
C.~Daskalakis and C.~H. Papadimitriou.
\newblock On oblivious {PTAS}'s for {N}ash equilibrium.
\newblock In {\em Proceedings of the 41st Annual ACM Symposium on Theory of
  Computing}, STOC '09, pages 75--84, USA, 2009. ACM.

\bibitem{dp13sparse}
C.~Daskalakis and C.~H. Papadimitriou.
\newblock Sparse covers for sums of indicators.
\newblock {\em CoRR}, abs/1306.1265, 2013.

\bibitem{dp14}
C.~Daskalakis and C.~H. Papadimitriou.
\newblock Approximate {N}ash equilibria in anonymous games.
\newblock {\em Journal of Economic Theory}, 156:207--245, 2015.

\bibitem{dkt15}
Constantinos Daskalakis, Anindya De, and Christos Tzamos.
\newblock A size-free {CLT} for poisson multinomials and its applications.
\newblock {\em CoRR}, abs/1511.03641, 2015.

\bibitem{dks15}
Ilias Diakonikolas, Daniel~M. Kane, and Alistair Stewart.
\newblock The fourier transform of poisson multinomial distributions and its
  algorithmic applications.
\newblock {\em CoRR}, abs/1511.03592, 2015.

\bibitem{EY10}
K.~Etessami and M.~Yannakakis.
\newblock {On the complexity of Nash Equilibria and Other Fixed Points}.
\newblock {\em SIAM Journal on Computing}, 39(6):2531--2597, 2010.

\bibitem{fggs13}
J.~Fearnley, M.~Gairing, P.~W. Goldberg, and R.~Savani.
\newblock Learning equilibria of games via payoff queries.
\newblock {\em Journal of Machine Learning Research}, 16:1305--1344, 2015.

\bibitem{fs13}
J.~Fearnley and R.~Savani.
\newblock Finding approximate {N}ash equilibria of bimatrix games via payoff
  queries.
\newblock In {\em Proceedings of the Fifteenth ACM Conference on Economics and
  Computation}, EC '14, pages 657--674, USA, 2014. ACM.

\bibitem{gr13}
P.~W. Goldberg and A.~Roth.
\newblock Bounds for the query complexity of approximate equilibria.
\newblock In {\em Proceedings of the Fifteenth ACM Conference on Economics and
  Computation}, EC '14, pages 639--656, USA, 2014. ACM.

\bibitem{hn13}
S.~Hart and N.~Nisan.
\newblock The query complexity of correlated equilibria.
\newblock In {\em Algorithmic Game Theory, 6th International Symposium, {SAGT}
  2013}, page 268. Springer, 2013.

\bibitem{hong13}
Y.~Hong.
\newblock On computing the distribution function for the {P}oisson binomial
  distribution.
\newblock {\em Computational Statistics \& Data Analysis}, 59(1):41--51, 2013.

\bibitem{kfh09}
I.~A. Kash, E.~J. Friedman, and J.~Y. Halpern.
\newblock Multiagent learning in large anonymous games.
\newblock In {\em Eighth International Conference on Autonomous Agents and
  Multiagent Systems (AAMAS)}, pages 765--772, 2009.

\bibitem{lmm03}
R.~J. Lipton, E.~Markakis, and A.~Mehta.
\newblock Playing large games using simple strategies.
\newblock In {\em Proceedings of the 4th ACM Conference on Electronic
  Commerce}, EC '03, pages 36--41, USA, 2003. ACM.

\bibitem{Zhang09}
Shengyu Zhang.
\newblock Tight bounds for randomized and quantum local search.
\newblock {\em SIAM Journal on Computing}, 39(3):948--977, 2009.

\end{thebibliography}

\end{document}